\tikzset{black node/.style={draw, circle, fill = black, minimum size = 7pt, inner sep = 0pt}}
\tikzset{diam/.style={draw, diamond, fill = black, minimum size = 7pt, inner sep = 0pt}}
\tikzset{white node/.style={draw, circle, fill = white, minimum size = 7pt, inner sep = 0pt}}
\tikzset{normal/.style = {draw=none, fill = none}}
\tikzset{bag/.style = {draw, circle, fill = white, minimum size = 0.75cm, inner sep = 0pt}}
\tikzset{circle split part fill/.style  args={#1,#2}{%
 alias=tmp@name, %
  postaction={%
    insert path={
     \pgfextra{%
     \pgfpointdiff{\pgfpointanchor{\pgf@node@name}{center}}%
                  {\pgfpointanchor{\pgf@node@name}{east}}%
     \pgfmathsetmacro\insiderad{\pgf@x}
      \fill[#1] (\pgf@node@name.base) ([xshift=-\pgflinewidth]\pgf@node@name.east) arc
                          (0:180:\insiderad-\pgflinewidth)--cycle;
      \fill[#2] (\pgf@node@name.base) ([xshift=\pgflinewidth]\pgf@node@name.west)  arc
                           (180:360:\insiderad-\pgflinewidth)--cycle;            
         }}}}}  
\definecolor{colork4}{rgb}{0.8, 0.3, 0.2}
    \definecolor{colork23}{rgb}{0.15, 0.65, 0.85}
\tikzset{cnode/.style = {draw, circle, minimum size = 7pt, shape = circle split, circle split part fill={colork23, colork4}}}
\newtheorem{theorem}{Theorem}
\newtheorem{proposition}{Proposition}
\newtheorem{lemma}{Lemma}
\newtheorem{corollary}{Corollary}
\newtheorem{observation}{Observation}
\newcommand{\cupall}{\pmb{\pmb{\cup}}}
\newcommand{\diss}{\mbox{\sf diss}}
\newcommand{\ex}{{\sf ex}}
\newcommand{\pack}{\mbox{\sf pack}}
\newcommand{\children}{{\sf children}}
\newcommand{\cover}{\mbox{\sf cover}}
\DeclareMathOperator{\xcover}{{\sf x-cover}}
\DeclareMathOperator{\xpack}{{\sf x-pack}}
\DeclareMathOperator{\vcover}{{\sf v-cover}}
\DeclareMathOperator{\vpack}{{\sf v-pack}}
\DeclareMathOperator{\ecover}{{\sf e-cover}}
\DeclareMathOperator{\epack}{{\sf e-pack}}
\DeclareMathOperator{\sig}{{\sf sig}}
\newcommand{\N}{\mathbb{N}}
\DeclareMathOperator{\Pkg}{\bf {P}} 
\DeclareMathOperator{\Cvg}{\bf {C}} 
\DeclareMathOperator{\tpw}{\bf {tpw}} 
\newcommand{\intv}[2]{\left \llbracket #1, #2 \right \rrbracket}
\newcommand{\x}{{\sf x}}
\newcommand{\Es}{E^{\sf s}}
\newcounter{func}
\newcommand{\newfun}[1]{f_{\refstepcounter{func}\label{#1}\thefunc}} 
\newcommand{\funref}[1]{\hyperref[#1]{f_{\ref*{#1}}}} 
\newcommand{\eaddress}[1]{\href{mailto:#1}{\texttt{#1}}}
\title{An $O(\log \mathrm{OPT})$-approximation for covering and packing minor models of $\theta_r$\thanks{This research have been supported by
the Warsaw Centre of Mathematics and Computer Science (Jean-Florent Raymond), by the
    (Polish) National Science Centre grant PRELUDIUM 2013/11/N/ST6/02706 (Jean-Florent Raymond), and by the E.U. (European Social Fund - ESF) and Greek national funds through the Operational Program ``Education and Lifelong Learning'' of the National Strategic Reference Framework (NSRF) - Research Funding Program: ``Thales. Investing in knowledge society through the European Social Fund'' (Dimitris Chatzidimitriou and Dimitrios M. Thilikos).     
Email addresses:
    \eaddress{hatzisdimitris@gmail.com},
    \eaddress{jean-florent.raymond@mimuw.edu.pl},
    \eaddress{ignasi.sau@lirmm.fr}, and
    \eaddress{sedthilk@thilikos.info}.}}
\author{Dimitris Chatzidimitriou\thanks{Department of Mathematics, National and Kapodistrian  University of Athens, Greece.} \and Jean-Florent Raymond\thanks{Faculty of Mathematics, Informatics and Mechanics, University of Warsaw, Poland.}~\thanks{AlGCo project team, CNRS, LIRMM, France.} \and  Ignasi Sau$^\S$ \and Dimitrios M. Thilikos$^\dag$$^\S$}
\date{}
\begin{document}
\maketitle

\begin{abstract}\noindent 
Given two graphs $G$ and $H$, we define $\vcover_{H}(G)$ (resp. $\ecover_{H}(G)$) as  
the minimum number of vertices (resp. edges) whose removal from $G$ produces a graph without 
any minor  isomorphic to ${H}$. 
Also   $\vpack_{H}(G)$ (resp. $\epack_{H}(G)$) is the maximum number of vertex- (resp. edge-) disjoint subgraphs of $G$ that contain a minor isomorphic to $H$.
We denote by $\theta_{r}$ the graph with two vertices and $r$ parallel edges between them.
When $H=\theta_{r}$, the parameters $\vcover_{H}$, $\ecover_{H}$, $\vpack_{H}$, and $\epack_{H}$ are {\sf NP}-hard to compute (for
sufficiently big values of~$r$). 
Drawing upon combinatorial results in~\cite{ChatzidimitriouRST15mino}, we give an algorithmic proof that if $\vpack_{{\theta_{r}}}(G)\leq k$, then $\vcover_{\theta_{r}}(G) = O(k\log k)$, and similarly for $\epack_{\theta_{r}}$ and~ $\ecover_{\theta_{r}}$.
In other words, the class of graphs containing ${\theta_{r}}$ 
as a minor has the vertex/edge
Erdős-Pósa property, for every positive integer $r$. Using the algorithmic machinery of our proofs we introduce a unified approach for the 
design of an  $O(\log {\rm OPT})$-approximation algorithm
for~$\vpack_{{\theta_{r}}}$, $\vcover_{{\theta_{r}}}$, 
$\epack_{{\theta_{r}}}$, and $\ecover_{{\theta_{r}}}$ that runs in
$O(n\cdot \log(n)\cdot m)$ steps.
Also, we derive several new Erdős-Pósa-type~results from the techniques that we introduce.
\smallskip

\noindent \textbf{Keywords:} {Erdős-Pósa property, packings in graphs, coverings in graphs, minor-models of $\theta_{r}$, approximation algorithms, protrusion decomposition.}
\smallskip

\noindent \textbf{MSC 2010:} {05C70.}
\end{abstract}
\section{Introduction}

All graphs in this paper are undirected, do not have loops but they may contain multiple edges. 
We denote by $\theta_{r}$ the graph containing two vertices $x$ and $y$
connected by $r$ parallel edges between $x$ and $y$.  Given a graph class ${\cal C}$ and a graph $G$,
we call {\em ${\cal C}$-subgraph} of $G$ any subgraph of $G$ that is isomorphic to some graph in ${\cal C}$.
All along this paper, when giving the running time of an algorithm
with input some graph $G$, we agree that  $n=|V(G)|$ and $m=|E(G)|$.

\paragraph{Packings and coverings.}
Paul Erdős and Lajos Pósa, proved in 1965\cite{ErdosP65} that there is a function $f\colon \N \rightarrow \N$ 
such that for each positive integer $k$, every graph either contains $k$ 
vertex-disjoint cycles or it contains $f(k)$ vertices that intersect every cycle in $G$. 
Moreover, they proved that the ``gap" of this min-max relation is  $f(k) = O(k\cdot  \log k)$ and that this gap is optimal.
This result initiated an interesting line of research on the duality between coverings 
and packings of combinatorial objects.  To formulate this duality, given a  class ${\cal C}$ of connected graphs,  
we define $\vcover_{\cal C}(G)$ (resp. $\ecover_{\cal C}(G)$) as the 
minimum cardinality of a set $S$ of vertices (resp. edges) such that 
each ${\cal C}$-subgraph of $G$ contains some element of $S$.
Also, 
we define  $\vpack_{\cal C}(G)$ (resp. $\epack_{\cal C}(G)$) as 
the maximum number of vertex- (resp. edge-) disjoint ${\cal C}$-subgraphs of $G$.

We say that ${\cal C}$ has the {\em vertex Erdős–Pósa property} (resp. the  {\em edge Erdős–Pó\-sa property})  if
there is a function $f:\N \rightarrow \N$, called {\em gap function},
such that, for 
every graph $G$,  $\vcover_{\cal C}(G)\leq f(\vpack_{\cal C}(G))$ (resp.  
$\ecover_{\cal C}(G)\leq f(\epack_{\cal C}(G))$). Using this terminology, 
the original result of Erdős and Pósa says that the set of all cycles 
has the  vertex Erdős–Pósa property with gap $O(k\cdot \log k)$. 
The general question in this branch of Graph Theory is to 
detect instantiations of ${\cal C}$ which have the 
vertex/edge Erdős–Pósa property (in short, {\em {\sf v}/{\sf e}-EP-property}) 
and when this is the case, minimize   
the gap function $f$. 
Several theorems of this type have been proved concerning 
different instantiations of ${\cal C}$ such as odd
cycles~\cite{KawarabayashiN2007erdo,RautenbachR01thee}, long
cycles~\cite{BirmeleBR07erdo}, and graphs containing cliques as
minors~\cite{DiestelKW12thee} (see
also~\cite{ReedRST96pack,KakimuraKK12erdo,GeelenK09thee} for results
on more general combinatorial structures and
\cite{Reed97tree,Raymond2016recent} for surveys).

A general class  that is known to have the {\sf v}-EP-property  is the
class ${\cal C}_{H}$ of the graphs  that contain some fixed planar
graph $H$ as a minor\footnote{A graph $H$ is a {\em minor} of a graph
  $G$ if it can be obtained from some subgraph of $G$ by contracting
  edges.}. This fact 
was proven by Robertson and Seymour in~\cite{RobertsonS86GMV} and 
the best known general gap is $f(k)=O(k\cdot \log^{O(1)}k)$ due to the results 
of~\cite{ChekuriC13larg}  --- see also~\cite{FominLMPS13,FioriniJW13excl,FioriniJS13} for better gaps
for particular instantiations of  $H$. Moreover, the planarity 
of $H$ appears to be the right dichotomy, as 
for non-planar $H$'s, ${\cal C}_{H}$ {\sl  does not}  
satisfy the  {\sf v}-EP-property.
Besides the near-optimality of the general upper bound of~\cite{ChekuriC13larg}, 
it is open whether 
the lower bound $\Omega(k\cdot \log k)$ can be matched for the general 
gap function, while this is indeed the case when
$H=\theta_{r}$~\cite{FioriniJS13}.

The question about classes that have  the  {\sf e}-EP-property has also attracted some attention (see~\cite{BirmeleBR07erdo}). According to~~\cite[Exercise 23 of Chapter 7]{Diestel05grap}, the original proof of Erdős and Pósa implies that 
cycles have the  {\sf e}-EP-property with gap $O(k\cdot\log k)$. Moreover, as proved in~\cite{2013arXiv1311.1108R}, 
the class ${\cal C}_{\theta_{r}}$ has the {\sf e}-EP-property 
with the (non-optimal) gap $f(k)=O(k^2\cdot \log^{O(1)} k)$. Interestingly, not much more is 
known on the graphs $H$ for which ${\cal C}_{H}$ 
has the {\sf e}-EP-property and is tempting to conjecture that the planarity of $H$ 
provides again the right dichotomy. 
Other graph classes that are known to have the 
  {\sf e}-EP-property are rooted cycles~\cite{PontecorviW2012disj}  (here the cycles to be covered and packed are required to intersect some particular set of terminals of $G$)
  and odd cycles for the case where $G$ is a $4$-edge connected graph~\cite{KawarabayashiK2012edge}, a planar graph~\cite{KralV2004edge}, or a graph embeddable in an orientable surface~\cite{KawarabayashiN2007erdo}.

\paragraph{Approximation algorithms.}
The above defined four graph parameters are already quite general when ${\cal C}:={\cal C}_{H}$. 
From the algorithmic point of view, the computation of  
$\xpack_{{\cal C}_H}$ (for $\x \in \{ {\sf v}, {\sf e}\}$) corresponds  to the general family of graph 
packing problems, while  
the computation of $\xcover_{{\cal C}_H}$ belongs to the 
general family  of graph modification problems where the modification operation is the removal of vertices/edges (depending on whether $\x = {\sf v}$ or $\x = {\sf e}$). Interestingly, particular instantiations of $H=\theta_{r}$ 
generate known, well studied,  {\sf NP}-hard problems. For instance, asking whether 
$\vcover_{{\cal C}_{\theta_{r}}}\leq k$ generates  {\sc Vertex Cover} for $r=1$, {\sc Feedback Vertex Set} for 
$r=2$, and {\sc  Diamond Hitting Set} for $r=3$~\cite{FioriniJP10inte,FominMNL12}. Moreover, asking 
whether $\xpack_{{\cal C}_{\theta_{2}}}(G)\geq k$ corresponds to {\sc
  Vertex Cycle Packing}~\cite{BodlaenderTY11kern, Klo02} and {\sc Edge Cycle 
Packing}~\cite{Caprara2003239, Krivelevich:2007:AAH:1290672.1290685} when $\x={\sf v}$ and ${\sf x}={\sf e}$, respectively. Finally,  
asking whether 
$|E(G)|-\ecover_{{\cal C}_{\theta_{3}}}(G)\leq k$  corresponds to the {\sc Maximum Cactus Subgraph}\footnote{The  {\sc Maximum Cactus Subgraph} problem asks, given a graph $G$ and an  integer $k$, whether $G$ contains a subgraph with $k$ edges where no two cycles share an edge. 
It is easy to reduce to this problem the {\sc Vertex Cycle Packing} problem 
on cubic graphs which, in turn, can be proved to be {\sf NP}-complete using a simple variant of the {\sf NP}-completeness proof of  
{\sc Exact Cover by 2-Sets}~\cite{Golovach15}.}. All parameters keep being {\sf NP}-complete to compute
because the aforementioned base cases can be reduced to the general one by replacing each edge by one 
of multiplicity $r-1$.\medskip

From the approximation point of view, 
it was proven 
in~\cite{FominMNL12} that, when $H$ is a planar graph, there is a  
randomized polynomial $O(1)$-approximation algorithm for  $\vcover_{{H}}$.
For the cases of $\vcover_{{\cal C}_{\theta_{r}}}$ and $\vpack_{{\cal C}_{\theta_{r}}}$,
$O(\log n)$-approximations are
known for every $r\geq 1$ because of~\cite{JoretPSST14hitt} (see also~\cite{SalavatipourV05disj}).
Moreover,  $\vcover_{{\cal C}_{\theta_{r}}}$ admits a deterministic $9$-approximation~\cite{FioriniJP10inte}.
Also, about $\epack_{{\cal C}_{\theta_{r}}}(G)$
it is known, from~\cite{KrivelevichNY05appr}, that  
there is a polynomial
$O(\sqrt{\log n})$-approximation algorithm for  the case where $r=2$. 
Notice also that for $r=1$, it is trivial to compute  $\ecover_{{\cal C}_{\theta_{r}}}(G)$ in polynomial time. 
However, to our knowledge, nothing  is known about the computation of $\ecover_{{\cal C}_{\theta_{r}}}(G)$ for $r\geq 3$.

\paragraph{Our results.}
In this paper we introduce a unified approach for the study of the combinatorial interconnections and the approximability 
of the parameters $\vcover_{{\cal C}_{\theta_{r}}}$, $\ecover_{{\cal C}_{\theta_{r}}}$,   $\vpack_{{\cal C}_{\theta_{r}}}$, and $\epack_{{\cal C}_{\theta_{r}}}$.
Our main combinatorial result is the following.
\begin{theorem}
\label{madin}
For every $r\in\N_{\geq 2}$ and every ${\sf x}\in\{{\sf v},{\sf e}\}$  the graph class 
${\cal C}_{\theta_{r}}$ has the {\em {\sf x}-EP-property} with (optimal) gap function $f(k)=O(k\cdot \log k)$.
\end{theorem}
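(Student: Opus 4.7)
My plan is to proceed by induction on $k=\vpack_{\theta_r}(G)$ (respectively $\epack_{\theta_r}(G)$), establishing at each step a hitting set of size $O(\log k)$ whose removal strictly decreases the packing number. Summing over the $k$ iterations gives $\sum_{i=1}^k O(\log i)=O(k\log k)$, which matches the bound claimed in the theorem and is already known to be tight through the case $r=2$, since $\theta_2$-minors coincide with cycles and the original Erd\H{o}s--P\'osa lower bound applies.

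The first step is to invoke the combinatorial content of~\cite{ChatzidimitriouRST15mino}, which I would phrase as a \emph{heir localization} lemma: whenever $G$ contains a $\theta_r$-minor-model, there is one that is trapped inside a \emph{protrusion} $P$ of $G$, that is, a subgraph of $G$ of treewidth bounded by a constant depending only on $r$, with a constant-size boundary to $G\setminus V(P)$, and with $|V(P)|$ bounded polynomially in $k$. I would then exploit the bounded treewidth of $P$ to cover every $\theta_r$-model contained in $P$: a balanced-separator argument on a tree decomposition of $P$ picks at each level a bag whose removal splits the surviving models into two roughly balanced halves, recurses on each side, and accumulates a hitting set of total size $O(\tw(P)\cdot\log|V(P)|)=O(\log k)$. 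Removing this set destroys the trapped model, so $\vpack_{\theta_r}$ drops by at least one, and applying the induction hypothesis to the reduced graph completes the step.

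For the edge version I would adapt this scheme by working with edge-boundaries and edge-separators: the heir lemma is recast so that the boundary of $P$ is an edge cut, and the balanced-separator argument in the tree decomposition of $P$ outputs edge-separators of size $O(\tw(P))$ instead of vertex-separators. The same recursive accounting yields an $O(\log k)$-sized edge hitting set per step and therefore the bound $O(k\log k)$ for $\ecover_{\theta_r}$. Both recursions are constructive, and the polynomial bound on $|V(P)|$ in terms of $k$ ensures that each iteration runs in polynomial time; this is what later delivers the $O(\log\mathrm{OPT})$ approximation announced in the abstract.

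The hard part is the heir localization lemma. While protrusion decompositions are a well-established tool, to obtain $O(\log k)$ rather than $O(\log n)$ per step one needs $|V(P)|$ to be polynomial in $k$, not merely in $n$. This requires a delicate construction that exploits the specific structure of $\theta_r$ together with the assumption $\vpack_{\theta_r}(G)\le k$: informally, a protrusion too large to be handled would itself contain many pairwise-disjoint $\theta_r$-models, contradicting the packing bound. Making this intuition quantitative, and doing so in parallel for the vertex and edge settings so that the separator arguments align, is the main technical obstacle and the place where I would need to draw most heavily on~\cite{ChatzidimitriouRST15mino}.
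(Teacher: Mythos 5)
Your high-level scheme (repeatedly pay $O(\log k)$ into the cover while forcing the packing number to drop, then sum to $O(k\log k)$) is in the spirit of the paper, but both of the mechanisms you rely on are not the ones available, and as stated they fail. First, the ``heir localization'' lemma you attribute to~\cite{ChatzidimitriouRST15mino} is not what that paper provides. The result actually used here (\autoref{rk6ter}) is a trichotomy: either a $\Theta_r$-subdivision with at most $z$ edges, or a \emph{large} $(2r-2)$-partitioned protrusion, or a minor model of a graph of minimum degree roughly $2^{z/O_r(w)}$. No $\theta_r$-model is ever ``trapped'' in a bounded-treewidth protrusion of size polynomial in $k$; the protrusion outcome is used only to \emph{replace} the protrusion by a smaller equivalent graph with both $\xpack$ and $\xcover$ unchanged (\autoref{l:new_reduce}), and the $O(\log k)$ bound comes from a completely different source: setting $z=O_r(\log k)$ makes the third outcome deliver minimum degree at least $k(r+1)$, which immediately yields $k$ disjoint models (\autoref{l:bigepack}, \autoref{l:vpackex}), so if no packing of size $k$ exists one always lands in the small-subdivision or reduce cases (\autoref{l:eppump}, \autoref{t:appxep}).

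Second, even granting your localization lemma, the hitting-set step breaks twice. A set of size $O(\tw(P)\cdot\log|V(P)|)$ cannot in general cover \emph{every} $\theta_r$-model contained in $P$: a protrusion may itself contain many vertex-disjoint copies of $\theta_r$ (which have treewidth $2$), so any such cover must have size at least the number of copies, and the balanced-separator recursion has to continue until the pieces contain no model at all, with total cost far exceeding $O(\log|V(P)|)$. And even if you settle for destroying only the trapped model (or all models inside $P$), the packing number of $G$ need not decrease, since disjoint models living outside $P$, or meeting $P$ but avoiding your set, are untouched; your induction then recurses on a graph with the same parameter $k$ and never terminates with the claimed bound. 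The step that actually forces progress --- in the paper as in the classical Erd\H{o}s--P\'osa argument --- is to delete the \emph{entire} vertex/edge set of a single subdivision with $O_r(\log k)$ edges: any packing of the remaining graph could be augmented by that subdivision, so the packing number drops by one, and each such step charges only $O_r(\log k)$ to the cover. Repairing your proof essentially amounts to replacing your localization-plus-separator step by this trichotomy-based argument, at which point it coincides with the paper's proof.
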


Our proof is unified and treats simultaneously the covering and the packing parameters for both the 
vertex and the edge cases.
This  verifies the optimal combinatorial bound for the case where ${\sf x}={\sf v}$ \cite{FioriniJS13} and 
optimally improves the previous bound in~\cite{2013arXiv1311.1108R} for the case where  ${\sf x}={\sf e}$.
Based on the proof of \autoref{madin}, we prove the following algorithmic~result.
\begin{theorem}
\label{admndi}
For every $r\in \N_{\geq 2}$ and every ${\sf x}\in\{{\sf v},{\sf e}\}$,
there is an algorithm that, given a graph $G$, outputs an
$O(\log {\rm OPT})$-approximation 
of~$\xcover_{{\cal C}_{\theta_{r}}}(G)$ and~$\xpack_{{\cal
    C}_{\theta_{r}}}(G)$ in $O(n\cdot \log(n)\cdot m)$ steps when $\x =
{\sf v}$ and in $O(m^2 \log n)$ steps when $\x = {\sf e}$.
\end{theorem}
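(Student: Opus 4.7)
The plan is to extract from the proof of Theorem~\ref{madin} an algorithm that, given $G$, simultaneously produces an ${\sf x}$-packing $\mathcal{P}$ of $\theta_r$-minor models and an ${\sf x}$-cover $S$ of them satisfying $|S| \leq c\,|\mathcal{P}|\log|\mathcal{P}|$ for some constant $c=c(r)$. Since the combinatorial bound of Theorem~\ref{madin} is obtained by a constructive argument (essentially, by iteratively locating a model together with a small protrusion that separates it from the rest of $G$), what is needed is to implement its inductive steps efficiently and to return along the way both objects that the argument produces.

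Once such a procedure is available, all four approximation claims follow from a single observation. Writing $\mathrm{OPT}_{\rm p}=\xpack_{{\cal C}_{\theta_r}}(G)$ and $\mathrm{OPT}_{\rm c}=\xcover_{{\cal C}_{\theta_r}}(G)$, the trivial inequality $\mathrm{OPT}_{\rm p}\leq\mathrm{OPT}_{\rm c}$ combined with the output guarantee yields
\[
|\mathcal{P}|\;\leq\;\mathrm{OPT}_{\rm p}\;\leq\;\mathrm{OPT}_{\rm c}\;\leq\;|S|\;\leq\; c\,|\mathcal{P}|\log|\mathcal{P}|,
\]
so $|S|/\mathrm{OPT}_{\rm c}\leq c\log|\mathcal{P}|\leq c\log\mathrm{OPT}_{\rm c}$ and, symmetrically, $\mathrm{OPT}_{\rm p}/|\mathcal{P}|\leq c\log|\mathcal{P}|\leq c\log\mathrm{OPT}_{\rm p}$, which is the required $O(\log\mathrm{OPT})$-approximation in each of the four cases.

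The main obstacle is carrying out the induction within the claimed time budgets, namely $O(n\log(n)\cdot m)$ for ${\sf x}={\sf v}$ and $O(m^2\log n)$ for ${\sf x}={\sf e}$. I would expect the scheme to be of divide-and-conquer flavour: at each recursive call either a small separator produced by the combinatorial machinery of~\cite{ChatzidimitriouRST15mino} breaks $G$ into two roughly balanced pieces on which one recurses (the separator being charged to $S$), or the call exposes a $\theta_r$-minor model shielded by a bounded-treewidth protrusion, which is added to $\mathcal{P}$ and excised from $G$ before recursion. The logarithmic factor in the running time should come from the $O(\log n)$ depth of such a balanced recursion---the same source as the $\log k$ factor in the gap of Theorem~\ref{madin}---while keeping the per-level work at $O(m)$ requires, on the one hand, linear-time subroutines for locating $\theta_r$-minors inside protrusions (via standard dynamic-programming techniques on bounded treewidth) and, on the other, a careful amortisation so that edges or vertices are charged only once across levels. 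Balancing the recursion and guaranteeing that the per-level cost stays linear are the two most delicate points, especially for the edge version, where separators are intrinsically harder to locate and an extra factor of $m$ seems unavoidable.
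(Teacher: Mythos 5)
Your approximation analysis (the sandwich $|\mathcal{P}|\leq \xpack \leq \xcover \leq |S| \leq c\,|\mathcal{P}|\log|\mathcal{P}|$) is essentially the same calculation the paper performs with the threshold value $k_0$, so that part is fine. The genuine gap is that the algorithmic core is left as a speculative sketch, and the sketch you give does not match how the claimed running times actually arise, nor is it developed far enough to stand on its own. The paper does not use any balanced-separator divide-and-conquer, and the $\log n$ factor in the running time does not come from recursion depth: it comes from a binary search over the target parameter $k\in\intv{1}{n}$. For each fixed $k$ there is a purely iterative ``reduce or progress'' loop (\autoref{l:eppump}, built on \autoref{rk6ter} and the protrusion replacement machinery of \autoref{l:new_reduce}): each $O_r(m)$-time pass either finds a $\Theta_r$-subdivision with $O_r(\log k)$ edges (which is deleted and charged to the cover), replaces a large partitioned protrusion by a strictly smaller equivalent graph, or exhibits a minor model of a graph of minimum degree $\geq k(r+1)$, from which a packing of size $k$ is extracted (\autoref{l:bigepack}, \autoref{l:vpackex}). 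Since every pass either shrinks the graph or terminates, the fixed-$k$ routine runs in $O(n\cdot m)$ (vertex case) or $O(m^2)$ (edge case) steps, and the binary search multiplies this by $O(\log n)$ (\autoref{t:appxep}, \autoref{c:existentialist}). Note also that the $O(\log k)$ in the gap of \autoref{madin} is the size bound on the subdivision guaranteed by \autoref{rk6ter} when neither a large protrusion nor a high-degree minor is available; it is not an artefact of recursion depth, so your proposed explanation conflates two unrelated logarithms.

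Concretely, what is missing from your proposal is (i) the fixed-$k$ subroutine itself, i.e.\ a proof that one can, in $O_r(m)$ time per step, either make irreversible progress (delete a subdivision of $O_r(\log k)$ size), strictly reduce the graph while preserving $\xpack$ and $\xcover$ (this is exactly \autoref{l:new_reduce}, and it requires the whole signature/folio apparatus of \autoref{sec:red}, which your sketch does not replace), or certify a packing of size $k$; and (ii) the outer binary search that converts the existence-style subroutine into an $O(\log \mathrm{OPT})$-approximation within the stated $O(n\log(n)\cdot m)$ and $O(m^2\log n)$ budgets. As written, your divide-and-conquer scheme would additionally need a balanced-separator guarantee that neither \autoref{rk6ter} nor \autoref{madin} provides, and you acknowledge yourself that the balancing and the per-level amortisation are unresolved; these are precisely the points where the argument would fail without the paper's reduce-or-progress structure.
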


\autoref{admndi}
improves the results in~\cite{JoretPSST14hitt} for the cases of
$\vcover_{{\cal C}_{\theta_{r}}}$ and $\vpack_{{\cal C}_{\theta_{r}}}$
and, to our knowledge, is the first approximation algorithm for
$\ecover_{{\cal C}_{\theta_{r}}}$ and $\epack_{{\cal C}_{\theta_{r}}}$ for $r\geq 3$.
We were also able to derive the following Erdős-Pósa-type result with
linear gap on graphs of bounded tree-partition width (the definition of this 
width parameter is given in~\autoref{boundgraphs}).
\begin{theorem}\label{t:epbound}
 Let $t \in \N$. For every ${\sf x}\in\{{\sf v},{\sf e}\}$, the
  following holds: if ${\cal H}$ is a finite collection of connected graphs and $G$ is
  a graph of tree-partition width at most $t$, then ${\sf x}$-\cover$_{\cal
    H}(G)\leq \alpha \cdot {\sf x}$-\pack$_{\cal H}(G)$, where $\alpha$ is a constant which depends only on $t$ and $\mathcal{H}$.
\end{theorem}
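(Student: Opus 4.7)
Let $h := \max_{H \in \mathcal{H}}|V(H)|$ and fix a tree-partition decomposition $(T,\{B_v\}_{v \in V(T)})$ of $G$ of width at most $t$. Everything hinges on the following structural observation: since each graph in $\mathcal{H}$ is connected with at most $h$ vertices, every $\mathcal{H}$-subgraph $M$ of $G$ satisfies $|V(M)|\leq h$ and the set of bags meeting $V(M)$ induces a subtree of $T$ on at most $h$ nodes. This supports a bottom-up greedy peeling of $G$ along $T$.

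I would root $T$ at an arbitrary node and iterate the following step as long as $G$ still contains an $\mathcal{H}$-subgraph. Choose a node $v$ of maximum depth such that $G_v := G\!\left[\bigcup_{u \text{ in subtree of } v} B_u\right]$ contains an $\mathcal{H}$-subgraph $M$; by maximality of the depth, $V(M)\cap B_v\neq \emptyset$, for otherwise $M$ --- being connected --- would lie in a single child subtree of $v$, contradicting the choice of $v$. For the vertex case, add $S_M := V(M)\cup B_v$ to the cover and $M$ to the packing, delete $S_M$ from $G$, and restart. The key invariant is that the subtree of $v$ in the updated graph contains no $\mathcal{H}$-subgraph: a hypothetical survivor avoids $B_v$ and hence, by connectedness, would sit in a single child subtree of $v$, contradicting once more the choice of $v$. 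Each iteration adds at most $h+t$ vertices to the cover and one subgraph to the (trivially vertex-disjoint) packing, so $\vcover_{\mathcal{H}}(G)\leq (h+t)\cdot \vpack_{\mathcal{H}}(G)$.

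For the edge case I would run the same peeling but replace $S_M$ by a bounded edge set $F_M$ that both destroys $M$ and edge-disconnects the subtree of $v$ from the rest of $G$ --- for instance $F_M := E(M)\cup E(B_v,B_{\mathrm{parent}(v)})\cup E_{G_v}(B_v)$, where $E_{G_v}(B_v)$ denotes the edges of $G_v$ incident to $B_v$. An inner application of the peeling to $G_v - F_M$ removes any remaining $\mathcal{H}$-subgraphs deeper in the subtree before one moves on. The main obstacle I anticipate is bounding $|F_M|$ uniformly by a function of $t$ and $\mathcal{H}$ only, since $|E_{G_v}(B_v)|$ depends a priori on the tree-degree of $v$. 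I would resolve this either by first passing to a normalized tree-partition decomposition in which $T$ has bounded maximum degree (a standard rebalancing that preserves the width up to a constant factor depending only on $t$), or by amortizing the cut edges over the subgraphs packed within the subtree of $v$, using that the cuts used in disjoint subtrees are edge-disjoint. Either route yields a per-iteration cost of $\alpha = \alpha(t,\mathcal{H})$, producing the required linear inequality.
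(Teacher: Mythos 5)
The central structural claim on which everything "hinges" is false for the objects that this theorem actually packs and covers. In the paper, $\xcover_{\cal H}$ and $\xpack_{\cal H}$ are defined via $\mathcal{H}$-\emph{subdivisions} (and \autoref{t:epbound} itself is deduced from the subdivision version applied to $\bigcup_{H\in\mathcal{H}}\ex(H)$, i.e., to minor models), not via subgraphs isomorphic to members of $\mathcal{H}$. A subdivision (or minor model) of a fixed connected $H$ can have arbitrarily many vertices and edges and can meet arbitrarily many bags of the tree-partition, so neither $|V(M)|\leq h$ nor the claim that the bags meeting $V(M)$ span at most $h$ nodes of $T$ holds. Consequently $|S_M|=|V(M)\cup B_v|$ and $|F_M|\supseteq E(M)$ are not bounded by any function of $t$ and $\mathcal{H}$, and the per-iteration cost $\alpha(t,\mathcal{H})$ is exactly what is missing. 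This is not cosmetic: the paper's proof runs a progress-or-reduce loop built on \autoref{l:new_reduce}, whose entire purpose is to guarantee that one can always either shrink the graph without changing either parameter or extract a subdivision with only $O_{h,t}(1)$ edges; producing such a \emph{bounded-size} witness is precisely the step your peeling takes for granted.

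That said, your depth-maximality argument is sound, and the vertex case is repairable without heavy machinery: every subdivision contained in $G_v$ must meet $B_v$, and deleting $B_v$ alone both destroys $M$ and disconnects $G_v\setminus B_v$ (now subdivision-free) from the rest of $G$, so adding just $B_v$ to the cover and $M$ to the packing yields a vertex-disjoint packing and gives $\vcover_{\cal H}(G)\leq t\cdot\vpack_{\cal H}(G)$. The edge case has a second unresolved gap beyond $|E(M)|$: the number of edges of $G_v$ incident to $B_v$ is governed by the tree-degree of $v$, and the ``standard rebalancing'' you invoke does not exist for tree-partitions --- $K_{1,n}$ has tree-partition width $1$, yet in any width-$w$ tree-partition the bag containing the center must be adjacent to $\Omega(n/w)$ bags, so the tree cannot be made bounded-degree while keeping the width bounded. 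The amortization alternative would need an actual argument. As written, the proposal does not establish the theorem for either value of $\x$.
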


Let $\theta_{r,r'}$ (for some $r,r'\in \N_{\geq 1}$) denote the graph obtained
by taking the disjoint union of $\theta_{r}$ and $\theta_{r'}$ and
identifying one vertex of $\theta_{r}$ with one of $\theta_{r'}$.
Another  consequence of our results is that, for every $r,r' \in \N_{\geq 1}$, the
class ${\cal C}_{\theta_{r,r'}}$ has the edge Erdős–Pósa property. 
\begin{theorem}\label{t:epdoublepk}
For every $r,r' \in \N$, there is a function $\newfun{f:ep}^{r,r'}
\colon \N_{\geq 1}\rightarrow\N_{\geq 1}$ such that for every simple graph
$G$ where $k=\epack_{{\cal C}_{\theta_{r,r'}}}(G)$, it holds that
$\ecover_{{\cal C}_{\theta_{r,r'}}}(G)\leq \funref{f:ep}^{r,r'}(k)$.
\end{theorem}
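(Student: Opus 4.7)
The plan is to reduce \autoref{t:epdoublepk} to \autoref{madin} via a preprocessing step. Let $s=\max\{r,r'\}$. Since $\theta_s$ is a subgraph, and thus a minor, of $\theta_{r,r'}$, every minor model of $\theta_{r,r'}$ in $G$ contains a minor model of $\theta_s$. Hence any edge set that intersects every $\theta_s$-minor model of $G$ also intersects every $\theta_{r,r'}$-minor model, giving $\ecover_{{\cal C}_{\theta_{r,r'}}}(G)\leq \ecover_{{\cal C}_{\theta_s}}(G)$. By \autoref{madin}, the right-hand side is $O(\epack_{{\cal C}_{\theta_s}}(G)\cdot\log \epack_{{\cal C}_{\theta_s}}(G))$, so the main task becomes bounding $\epack_{{\cal C}_{\theta_s}}(G)$ by a function of $k=\epack_{{\cal C}_{\theta_{r,r'}}}(G)$.

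Such control fails in general --- e.g., if $G$ is many vertex-disjoint copies of $\theta_s$ then $\epack_{{\cal C}_{\theta_s}}(G)$ is unbounded while $\epack_{{\cal C}_{\theta_{r,r'}}}(G)=0$. To bypass this, first preprocess $G$ to $G^\star$ by iteratively deleting every edge that belongs to no $\theta_{r,r'}$-minor model of $G$. This operation affects neither $\ecover_{{\cal C}_{\theta_{r,r'}}}$ nor $\epack_{{\cal C}_{\theta_{r,r'}}}$, and guarantees that every edge of $G^\star$ lies in some $\theta_{r,r'}$-minor model.

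The key structural claim is then: there is a constant $C=C(r,r')$ with $\epack_{{\cal C}_{\theta_s}}(G^\star)\leq C\cdot k$. Given $N$ edge-disjoint $\theta_s$-models $M_1,\ldots,M_N$ in $G^\star$, for each $M_i$ designate a branch vertex $b_i$. Using that every edge of $G^\star$ belongs to a witnessing $\theta_{r,r'}$-model, one argues via pigeonhole and rerouting that many pairs $(M_i,M_j)$ can be merged into an edge-disjoint $\theta_{r,r'}$-model --- either because they share a branch vertex directly, or because a witness $\theta_{r,r'}$-structure routes through both models. This extracts $\Omega(N)$ edge-disjoint $\theta_{r,r'}$-models in $G^\star$, forcing $N=O(k)$. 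Substituting back into the bound from \autoref{madin} then yields $\ecover_{{\cal C}_{\theta_{r,r'}}}(G)\leq O(k\log k)$, so we may take $\funref{f:ep}^{r,r'}(k)=O(k\log k)$.

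The main obstacle is the structural claim above: orchestrating the pigeonhole/rerouting argument to combine edge-disjoint $\theta_s$-models into edge-disjoint $\theta_{r,r'}$-models. The property that every edge of $G^\star$ participates in a $\theta_{r,r'}$-model is the crucial input here, providing the ``connective tissue'' that allows overlapping $\theta_s$-models to be glued into valid $\theta_{r,r'}$-minor models while preserving edge-disjointness across the entire packing.
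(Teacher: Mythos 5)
Your outer reductions are fine: deleting edges that lie in no $\theta_{r,r'}$-minor model changes neither $\epack_{{\cal C}_{\theta_{r,r'}}}$ nor $\ecover_{{\cal C}_{\theta_{r,r'}}}$, and since every $\theta_{r,r'}$-model contains a $\theta_s$-model (with $s=\max\{r,r'\}$), indeed $\ecover_{{\cal C}_{\theta_{r,r'}}}(G^\star)\leq \ecover_{{\cal C}_{\theta_s}}(G^\star)$. But the entire content of the theorem has been pushed into the ``key structural claim'' $\epack_{{\cal C}_{\theta_s}}(G^\star)\leq C(r,r')\cdot k$, and for that claim you offer only a one-sentence appeal to ``pigeonhole and rerouting.'' This is a genuine gap, not a detail: merging one pair of edge-disjoint $\theta_s$-models into a single $\theta_{r,r'}$-model is easy (two such models plus any connecting path contain $\theta_{r,r'}$ as a minor), but producing $\Omega(N)$ such merged models that are \emph{pairwise} edge-disjoint is the hard part, and nothing in your sketch addresses it. The witnessing $\theta_{r,r'}$-models guaranteed by the preprocessing can all be forced through a common bottleneck (e.g.\ a hub-and-spoke configuration in which $N$ pairwise disjoint $\theta_s$-models are each attached by a path to a single central $\theta_s$-model: every witness for a peripheral edge runs through the hub), connecting paths may traverse edges of other members of the packing, and the models themselves may share vertices; a naive greedy pairing can get stuck after a single step in such configurations even though a large $\theta_{r,r'}$-packing does exist. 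So even if your linear transfer claim is true (I could not find a counterexample, but I also see no short proof), it is not established, and it is precisely the statement one would need to prove.

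For comparison, the paper takes a different and self-contained route: by \autoref{c:boundtpw} (built on the Ding--Oporowski characterization, \autoref{p:ding}), a graph with no ${\sf e}$-$\Theta_{r,r'}$-packing of size $k+1$ has tree-partition width bounded by a function of $k$, because each of the unavoidable patterns (large wall, fan, star, path) visibly contains many edge-disjoint $\theta_{r,r'}$-models; then \autoref{t:epbound2} (a consequence of the reduction \autoref{l:new_reduce}) gives a linear-gap Erdős--Pósa bound on graphs of bounded tree-partition width, yielding $\ecover_{{\cal C}_{\theta_{r,r'}}}(G)\leq \funref{f:newred}(h,t_k)\cdot k$. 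The resulting gap function is large (it inherits $\funref{f:tied}$), whereas your approach, if the transfer lemma were proved with a linear constant, would give the much better bound $O_{r,r'}(k\log k)$ --- but as written the crucial lemma is asserted, not proved, so the proposal does not yet constitute a proof of \autoref{t:epdoublepk}.
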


\paragraph{Our techniques.}
Our proofs are based on the notion of {\sl partitioned protrusion} that,
roughly, is a tree-structured subgraph of $G$ with small boundary to the rest
of $G$ (see \autoref{boundgraphs} for the precise definition).
Partitioned protrusions were essentially introduced in~\cite{ChatzidimitriouRST15mino} by the name {\sl edge-protrusions}
and can be seen as the edge-analogue of the notion
of protrusions introduced in~\cite{BodlaenderFLPST09} (see also~\cite{BodlaenderFLPST09arxiv}). Our approach makes 
strong use of the main result of~\cite{ChatzidimitriouRST15mino}, that is 
equivalently stated as \autoref{rk6ter} in this paper.
According to this result,
there exists a polynomial 
algorithm that, given a graph $G$ and an integer $k$ as an input,
outputs one of the following: (1) a collection of $k$ edge/vertex 
disjoint ${\cal C}_{\theta_{r}}$-subgraphs of $G$, (2) 
 a ${\cal C}_{\theta_{r}}$-subgraph $J$ with $O(\log k)$ edges, or (3) a large {partitioned protrusion} of $G$.

Our approximation algorithm does the following for each $k\leq |V(G)|$.
If the first case of the above combinatorial result applies on $G$, we can safely 
output a  packing of $k$  ${\cal C}_{\theta_{r}}$-subgraphs in $G$.
In the second case, we make some progress as we may remove the vertices/edges of $J$ 
from $G$ and then set $k:=k-1$. 
In order to deal with the third case, 
we prove that in a graph $G$ with a sufficiently large partitioned protrusion, we can either find  some  ${\cal C}_{\theta_{r}}$-subgraph with $O(\log k)$ edges (which is the same as in the second case), or we can  replace it by a smaller  graph where both 
$\xcover_{\cal H}$ and $\xpack_{\cal H}$ remain invariant (\autoref{l:new_reduce}).
The proof that such a reduction is possible is 
given in \autoref{sec:red} and is based 
on a suitable dynamic programming encoding of partial packings 
and coverings that is designed to work on partitioned protrusions.

Notice that  the ``essential step'' in the above procedure is the second case that reduces the packing number of the current graph 
by 1 to the price of reducing  the covering number by $O(\log k)$. This is the main 
argument that supports the claimed $O(\log {\rm OPT})$-approximation algorithm (\autoref{admndi}) and the corresponding Erdős–Pósa relations in~\autoref{madin}. Finally, \autoref{t:epbound} is a combinatorial implication of \autoref{l:new_reduce} and \autoref{t:epdoublepk}
follows by combining \autoref{t:epbound} with the results of  Ding and  Oporowski in~\cite{Ding199645}.

\paragraph{Organization of the paper.} In \autoref{prels} we provide all concepts
and notation that we use in our proofs. 
\autoref{sec:red} contains the proof of \autoref{l:new_reduce}, which is the main technical part of the paper.
The presentation and analysis of our approximation algorithm is done in \autoref{hki6yqpo9},
where \autoref{madin} and~\autoref{admndi} are proven. \autoref{i9o0i4y} contains the proofs of \autoref{t:epbound} and~\autoref{t:epdoublepk}. Finally, we summarize our results and provide several directions for further research in~\autoref{sec:concl}.

\section{Preliminaries}
\label{prels}

\subsection{Basic definitions}
\label{basicdefs}


Let ${\bf t}=(x_{1},\ldots,x_{l})\in \N^{l}$ and $\chi,\psi: \N
\rightarrow \N$.
We say that $\chi(n)=O_{{\bf t}}(\psi(n))$ if there exists a computable
function $\phi:\N^{l} \rightarrow \N$
such that  $\chi(n)=O( \phi({\bf t})\cdot \psi(n))$.
For every $i,j\in \N$, we denote by $\intv{i}{j}$ the interval of $\{i,
\dots, j\}$.

\paragraph{Graphs.} All graphs in this paper are undirected, loopless, and may have multiple edges. 
For this reason, a graph $G$ is represented by a pair $(V,E)$ where $V$
is its vertex set, denoted by $V(G)$ and $E$ is its edge multi-set,
denoted by $E(G)$. The {\em edge-multiplicity} of an edge of $G$ is
the number of times it appears in~$E(G)$. We set $n(G)=|V(G)|$ and
$m(G)=|E(G)|$. If ${\cal H}$ is a finite collection of connected graphs, we set $n({\cal
  H})=\sum_{H\in {\cal H}} n(H)$, $m({\cal H})=\sum_{H\in {\cal H}}
m(H)$, and $\cupall {\cal H} = \bigcup_{G \in {\cal H}}G$. 
We denote by $\delta(G)$ the \emph{minimum degree} of $G$, that is the
minimum number of neighbors a vertex can have in~$G$.

Let ${\sf x}\in \{{\sf v}, {\sf e}\}$ where, in the rest of this paper,  {\sf v} will be
interpreted as ``vertex'' and {\sf e} will be interpreted as ``edge''.
Given a graph $G$, we
denote by ${A}_{\sf x}(G)$ the set of vertices or edges of $G$
depending on whether $\x={\sf v}$ or $\x ={\sf e}$, respectively.

Given a graph $H$ and a graph $J$ that are both subgraphs of the same graph $G$,
we define the subgraph  $H\cap_{G} J$ of $G$ as the graph $(V(H)\cap V(J),E(H)\cap E(J))$.

Given a graph $G$ and a set $S\subseteq V(G)$, such that all vertices
of $S$  have degree 2 in $G$,
we define $\diss(G,S)$ as the graph obtained 
from $G$ after we dissolve in it all vertices in $S$, i.e.,
replace each maximal path whose internal vertices are in $S$ with an edge whose endpoints are 
the endpoints of the path.

A \emph{rooted tree} is a pair $(T,s)$ where $T$ is a tree and $s \in V(T)$ is a distinguished vertex called the \emph{root} of $T$. If $v\in V(T)$, a vertex $u \in V(T)$ is a \emph{descendant} of $v$ in $(T,s)$ if $v$ lies on the (unique) path from $u$ to $s$. Note that~$v$ is a descendant of itself. We denote by $\children_{(T,s)}(v)$ the set of children of~$v$, which are the vertices that are both neighbors and descendants of~$v$.

Given a graph $G$ and an edge $e=(u,v)\in E(G)$, the {\em subdivision of $e$} is defined as the operation of removing $e$ and adding a new vertex $w$ and the edges $(w,u)$ and $(w,v)$.
A graph $H$ is called a {\em subdivision} of $G$, if it can be obtained by performing a series of subdivisions in $G$.

\paragraph{Minors and topological minors.}
Given two graphs $G$ and $H$, we say that $H$ is a {\em minor} of $G$ if there exits 
some function $\phi: V(H)\rightarrow 2^{V(G)}$ such that 
\begin{itemize}
\item for every $v\in V(H)$, $G[\phi(v)]$ is connected;
\item for every two distinct $v,u\in V(H)$, $\phi(v)\cap \phi(u)=\emptyset$; and 
\item for every edge $e=\{v,u\}\in E(H)$ of multiplicity $l$, there are at least $l$ edges in $G$ with one endpoint in $\phi(v)$ 
and the other in $\phi(u)$.
\end{itemize}
We say that $H$ is a {\em topological minor} of $G$ if there exits 
some collection ${\cal P}$ of paths in $G$ and an
injection $\phi: V(H)\rightarrow V(G)$ such that 
\begin{itemize}
\item no path in ${\cal P}$ has an internal vertex that belongs to some other path in ${\cal P}$;
\item $\phi(V(H))$ is the set of endpoints of the paths in ${\cal P}$; and 
\item for every two distinct $v,u\in V(H)$, $\{v,u\}$ is an edge of $H$ of multiplicity $l$ if and only if there are $l$
paths in ${\cal P}$ between $\phi(v)$ and $\phi(u)$.
\end{itemize}

Given a graph $H$, we define by $\ex(H)$ the set of all
topologically-minor minimal graphs that contain $H$ as a minor.
Notice that the size of $\ex(H)$ is upper-bounded by some function of~$m(H)$ and
 that $H$ is a minor of $G$ if it contains a 
member  of $\ex(H)$ as  a topological minor.
An {\em $H$-minor model of $G$} is  any 
minimal subgraph of $G$ that contains a 
member  of $\ex(H)$ as  a topological minor.

\paragraph{Subdivisions.}
The subdivision of an edge $\{u,v\}$ of a graph is the operation that
introduces a new vertex $w$ to the graph adjacent to $u$ and $v$ and
deletes the edge. A graph $G$ is a subdivision of $H$ if it can be
obtained from $H$ by repeteadly subdividing edges.
If $G$ is a graph and  ${\cal H}$ is a finite collection of connected graphs,
an {\em ${\cal H}$-subdivision} of $G$ is a subgraph $M$ of $G$ that is a
subdivision of a graph, denoted by $\hat{M}$, that is isomorphic to a
member of ${\cal H}$.  Clearly, the vertices of $\hat{M}$ are vertices
of $G$ and its edges correspond to paths in $G$ between their
endpoints such that internal vertices of a path do not appear in any other path.
We refer to the vertices of $\hat{M}$ in $G$ as the \emph{branch vertices} of
the ${\cal H}$-subdivision $M$, whereas internal vertices of the paths between branch
vertices will be called~\emph{subdivision vertices} of ${M}$.
Note that we are here dealing with what is sometimes called a
topological minor model.
A graph which contains no $\mathcal{H}$-subdivision is said to be $\mathcal{H}$-free.
Notice that $G$ has an ${\cal H}$-subdivision iff $G$ contains a graph of
$\mathcal{H}$ as a topological minor.

\paragraph{Packings and coverings.}
An ${\sf x}$-{\em ${\cal H}$-packing of size $k$} of a graph $G$ is a collection
${\cal P}$ of $k$ pairwise ${\sf x}$-disjoint ${\cal H}$-subdivisions of $G$.
Given an ${\sf x}\in \{{\sf v}, {\sf e}\}$, we define $\Pkg_{{\sf
x},{\cal H}}^{\geq k}(G)$ as the set of all {\sf x}-${\cal
H}$-packings of $G$ of size at least $k$.    An $\x$-{\em ${\cal H}$-covering} of a graph $G$ is a set $C\subseteq
{A}_{\sf x}(G)$ such that $G\setminus C$ does not contain any ${\cal
H}$-subdivision. We define $\Cvg_{{\sf x},{\cal H}}^{\leq k}(G)$ as the set
of all {\sf x}-${\cal H}$-coverings of $G$ of size at most $k$.  We finally
define
\[ \xcover_{\cal H}(G)=\min\{k\mid \Cvg_{\x,{\cal H}}^{\leq
k}(G)\neq \emptyset\}
\]
and
\[ {\sf x}\mbox{-\pack}_{\cal H}(G)=\max\{k\mid \Pkg_{\x,{\cal
H}}^{\geq k}(G)\neq \emptyset\}.
\]
These numbers are well-defined: $G$ always has a packing
of size $0$ (the empty packing) and a covering of size $|{A}_{\sf x}(G)|$ (which contains
all vertices/edges of~$G$).
It is easy to observe that, for every  graph $G$ and every  finite collection of  connected graphs
${\cal H}$, the following inequalities hold: 
\begin{eqnarray*}
{\sf
v}\mbox{-\cover}_{\cal H}(G)\leq {\sf e}\mbox{-\cover}_{\cal H}(G), &&
{\sf v}\mbox{-\pack}_{\cal H}(G)\leq {\sf e}\mbox{-\pack}_{\cal
H}(G),\\
{\sf
v}\mbox{-\pack}_{\cal H}(G)\leq {\sf v}\mbox{-\cover}_{\cal H}(G), && 
{\sf e}\mbox{-\pack}_{\cal H}(G)\leq {\sf e}\mbox{-\cover}_{\cal H}(G).
\end{eqnarray*}
Notice that the definition of these invariants differs from that given in the
introduction.
 
\subsection{Boundaried graphs}
\label{boundgraphs}

Informally, a boundaried graph will be used to represent a graph that
has been obtained by ``dissecting'' a larger graph along some of 
its edges, where the boundary vertices correspond to edges that 
have been cut.

\paragraph{Boundaried graphs.}
A {\em  boundaried graph} ${\bf G}=(G,B,\lambda)$ is a triple consisting of a graph $G$,
where $B$ is a set of vertices of degree one (called \emph{boundary})
and ${\lambda}$ is a bijection from $B$ to a subset of
$\N_{\geq 1}$. The edges with at least one endpoint in $B$ are called \emph{boundary edges}.
We define $\Es(G)$ as the subset of $E(G)$ of boundary edges.
We stress that instead of $\N_{\geq 1}$ we could choose any other set of symbols to label 
the vertices of~$B$.
We denote the set of labels of ${\bf G}$ by $\Lambda({\bf
  G})=\lambda(B)$. Given a finite collection of connected 
  graphs ${\cal H}$, we say that a boundaried graph ${\bf G}$ is ${\cal H}$-free if $G\setminus B$ is  ${\cal H}$-free.

Two  boundaried graphs ${\bf G}_1$ and ${\bf G}_2$  are {\em
  compatible} if $\Lambda({\bf G}_{1})=\Lambda({\bf G}_{2})$. Let now
${\bf G}_1=(G_{1},B_{1},\lambda_{1})$ and ${\bf
  G}_2=(G_{2},B_{2},\lambda_{2})$ be two compatible boundaried
graphs. We define the graph ${\bf G}_1 {\oplus} {\bf G}_2$  as the
non-boundaried graph obtained by first taking the disjoint union of $G_{1}$ and $G_{2}$,
then, for every $i\in\Lambda({\bf G}_{1}),$ identifying $\lambda_{1}^{-1}(i)$ with  $\lambda_{2}^{-1}(i)$,
and finally dissolving all resulting identified vertices. Suppose that $e$ is an edge of $G={\bf G}_1 {\oplus} {\bf G}_2$
that was created after dissolving the vertex resulting from the identification 
of a vertex $v_{1}$ in $B_1$ and a vertex $v_{2}$ in $B_2$ and 
that  $e_{i}$ is the boundary edge of $G_{i}$ that has $v_{i}$ as endpoint, for $i=1,2$.

\autoref{fig:oplus} shows the result of the operation $\oplus$ on
two graphs. Boundaries are drawn in gray and their labels are
written next to them. The graphs ${\bf G}_1$ and ${\bf G}_2$ on this picture are
compatible as $\Lambda({\bf G}_1) = \Lambda({\bf G}_2) = \{0,1,2,3\}$.

\begin{figure}[ht]
  \centering
  \begin{tikzpicture}[every node/.style = black node, scale = 0.5, font=\small]
    \draw[rounded corners] (3,3.5) -- (3, -3.5) -- (0,-3.5) .. controls ++(-3,0)  and ++(-3,0) .. (0, 3.5) -- cycle;
    \draw (2, 3) node[fill=gray!120!black, label = 0:0] {}
    -- (1, 2) node (g3) {}
    -- (-1, 2) node (g1) {}
    -- (-1, -2) node {}
    -- (1, -2) node (g4) {}
    -- (2, -3) node[fill = gray!120!black, label=0:3] {}
    (2,1) node[fill = gray!120!black, label = 0:1] {}
    -- (g3)
    (2, -1) node[fill = gray!120!black, label = 0:2] {} -- (g4) -- (g1);
    \begin{scope}[xshift = 6cm]
      \draw[rounded corners] (-1, 3.5) -- (-1, -3.5) -- (0, -3.5) .. controls ++(3,0)  and ++(3,0) .. (0, 3.5) -- cycle;
      \draw (0, 3) node[fill= gray!120!black, label=180:2] {} to[bend left] (0,1) node[fill = gray!120!black, label = 180:1] {}
      (0,-1) node[fill = gray!120!black, label = 180:0] {} -- (1, -2) node {} -- (0,-3) node[fill = gray!120!black, label = 180:3] {};
    \end{scope}
    \begin{scope}[xshift = 13cm]
      \draw (0.75, 2.75) .. controls ++(-3,0)  and ++(-3,0) .. (0.75,
      -2.75) -- (1.5, -2.75) .. controls ++(3,0)  and ++(3,0)
      .. (1.5,2.75) -- cycle;
      \draw (0 ,2) node (p1) {}
      -- ++(0, -4) node {}
      -- ++(2,0) node (p2) {}
      -- (p1)
      -- ++(2,0) node (p3) {}
      -- (p2)
      -- ++(1,2) node {}
      -- (p3);
    \end{scope}
    \node[normal] at (4, 0) {\scalebox{2}{$\oplus$}};
    \node[normal] at (10, 0) {\scalebox{2}{$=$}};
    \node[normal] at (1, -4.5) {\scalebox{1}{${\bf G}_1$}};
    \node[normal] at (7, -4.5) {\scalebox{1}{${\bf G}_2$}};
    \node[normal] at (14, -4.5) {\scalebox{1}{$G$}};
    
  \end{tikzpicture}
  \caption{Gluing graphs together: $G ={\bf G}_1 \oplus {\bf G}_2$.}
  \label{fig:oplus}
\end{figure}
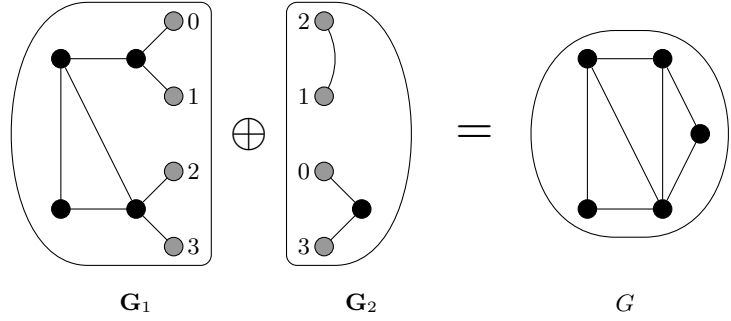

For every $t \in \N_{\geq 1}$, we denote by ${\cal B}_{t}$ all
boundaried graphs whose boundary is labeled by numbers in~$\intv{1}{t}$.
Given a  boundaried graph ${\bf G}=(G,B,\lambda)$ 
and a subset $S$ of $V(G)$ such that all vertices in $S$  have degree
2 in $G$, we define $\diss({\bf G},S)$ as the graph  
$\hat{\bf G}=(\hat{G},B,\lambda)$ where $\hat{G}=\diss(G,S)$.

Let $W$ be a graph and $S$ be a non-empty subset of $V(W)$.
An {\em
$S$-splitting} of $W$ is a pair $({\bf G}_{S},{\bf G}_{S^{\sf c}})$
consisting of two boundaried graphs ${\bf
G}_{S}=(G_{S},B_{S},\lambda_{S})$ and ${\bf G}_{S^{\sf c}}=(G_{S^{\sf
c}},B_{S^{\sf c}},\lambda_{S^{\sf c}})$ that can be obtained as
follows: First, let $W^+$ be the graph obtained by subdividing in $W$
every edge with one endpoint in $S$ and the other in $V(W)
\setminus S$ and let $B$ be the set of created vertices. Let $\lambda$
be any bijection from $B$ to a subset of $\N_{\geq 1}$. Then $G_S =
W^+[S \cup B]$, $G_{S^{\sf c}} = W^+ \setminus S$, $B_S = B_{S^{\sf
c}} = B$, and $\lambda_S = \lambda_{S^{\sf c}} = \lambda$.
Notice that there are infinite such pairs, depending on the numbers that will be assigned to the boundaries 
of ${\bf G}_{S}$ and ${\bf G}_{S^{\sf c}}$.
Moreover, keep in mind that all the boundary edges of $G_{S}$ 
are non-loop edges with exactly one endpoint in $B$ and the same holds for the boundary edges of 
$G_{S^{\sf c}}$. An example of a splitting is given in \autoref{fig:split}, where boundaries are depicted by gray vertices.
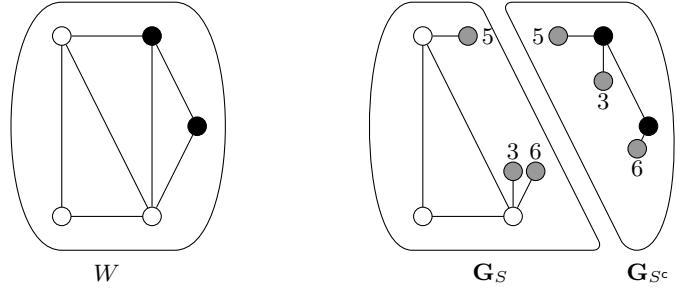
\begin{figure}[ht]
  \centering
  \begin{tikzpicture}[every node/.style = black node, font=\small,
    scale = 0.6]
    \begin{scope}[xshift = -2cm]
      \draw (0, 2.75) .. controls ++(-1.5,0)  and ++(-1.5,0) .. (0,
      -2.75) -- (2.5, -2.75) .. controls ++(1.5,0)  and ++(1.5,0)
      .. (2.5,2.75) -- cycle;
      \draw (0 ,2) node[white node] (p1) {}
      -- ++(0, -4) node[white node] {}
      -- ++(2,0) node[white node] (p2) {}
      -- (p1)
      -- ++(2,0) node (p3) {}
      -- (p2)
      -- ++(1,2) node {}
      -- (p3);
      \node[normal] at (1,-3.25) {$W$};
    \end{scope}
    \begin{scope}[xshift = 6cm]
      \draw[rounded corners] (0, 2.75) .. controls ++(-1.5,0)  and ++(-1.5,0) .. (0,
      -2.75) -- (4, -2.75) -- (1.25,2.75)  -- cycle;
      \draw (0 ,2) node[white node] (p1) {}
      -- ++(0, -4) node[white node] {}
      -- ++(2,0) node[white node] (p2) {}
      -- (p1)
      -- ++(1,0) node[fill = gray!120!black, label = 0:5] (p3h) {}
      (p2) -- ++(0,1) node[fill = gray!120!black, label=90:3] (p3h2) {}
      (p2) -- ++(0.5,1) node[fill = gray!120!black, label=6] (p5h) {};
      \node[normal] at (1.5,-3.25) {${\bf G}_S$};
    \end{scope}
    \begin{scope}[xshift = 9cm, rounded corners]
      \draw[xshift = -1.25cm] (2.75, -2.75) .. controls ++(1.5,0)  and ++(1.5,0)
      .. (2.5,2.75) -- (0, 2.75) -- cycle;
      \draw (2,0) node (p5){}
      -- ++(-1,2) node (p3){}
      -- ++ (-1,0) node[fill = gray!120!black, label = 180:5] {}
      (p3) -- ++(0,-1) node[fill = gray!120!black, label = -90:3] {}
      (p5) -- ++(-0.25,-0.5) node[fill = gray!120!black, label =-90:6] {};
      \node[normal] at (2,-3.25) {${\bf G}_{S^{\sf c}}$};
    \end{scope}
  \end{tikzpicture}
  \caption{Cutting a graph: $({\bf G}_{S}, {\bf G}_{S^{\sf c}})$ is an $S$-splitting of $W$, where $S$ consists of all the white vertices.}
  \label{fig:split}
\end{figure}

We say that ${\bf G}'=(G',B',\lambda')$ is a \emph{boundaried subgraph}
of ${\bf G}=(G,B,\lambda)$ if $G'$ is a subgraph of $G$, $B' \subseteq
B \cap V(G')$ and~$\lambda' = \lambda_{|B'}$.
On the other hand, ${\bf G}$ is a subgraph of a (non-boundaried) graph $H$ if ${\bf G} = {\bf H}_S$ for some $S$-splitting $({\bf H}_S, {\bf H}_{S^{\sf c}})$, where $S\subseteq V(H)$.

If $H$ is a graph, $G$ is a subgraph of $H$, and ${\bf F} =
(F,B,\lambda)$ is a boundaried subgraph of $H$, we define $G \cap_H
{\bf F}$ as follows. Let $S = V(G) \cap (V(F) \setminus B)$ and let
$G^+$ be the graph obtained by subdividing once every edge of $G$ that
has one endpoint in $S$ and the other in~$V(G) \setminus S$. We call
$B'$ the set of created vertices and let $G' = G^+[S \cup B']$. Then
$G'$ is a subgraph of $F$ where $B' \subseteq B$. For every $v \in
B'$, we set $\lambda'(v) = \lambda(v)$, which is allowed according to
the previous remark. Then $G \cap_H {\bf F} = (G', B',
\lambda')$. Observe that $G \cap_H {\bf F}$ is part of an
$S$-splitting of~$G$ (but not necessarily of~$H$).

Given two boundaried graphs ${\bf G}'=(G',B',\lambda')$ and ${\bf G}=(G,B,\lambda)$,  
 we say that they are {\em isomorphic} if there is an isomorphism 
from $G'$ to $G$ that respects the labelings of $B$ and $B'$, i.e., maps  every vertex $x\in B'$ to $\lambda^{-1}(\lambda'(x))\in B$.
Given a boundaried graph ${\bf G}=(G,B,\lambda)$, we denote
 $n({\bf G})=n(G)-|B|$ and $m({\bf G})=m(G)$.

Given a boundaried graph ${\bf G}=(G,B,\lambda)$ and an ${\sf
x}\in\{{\sf v},{\sf e}\}$, we set $A_{\sf x}({\bf G})=V(G)\setminus B$ or $A_{\sf
x}({\bf G})=E(G)$, depending on whether ${\sf x}={\sf v}$
or ${\sf x}={\sf e}$.

\paragraph{Partial structures.}
Let $\cal H$ be a family of graphs.
A boundaried subgraph $\bf J$ of a boundaried graph $\bf G$ is a
\emph{partial $\mathcal{H}$-subdivision} if there is a boundaried graph $\bf
H$ which is compatible with $\bf G$ and a boundaried subgraph ${\bf
  J}'$ of $\bf H$ which is compatible with $\bf J$ such that ${\bf J} \oplus {\bf J}'$ is an
$\mathcal{H}$-subdivision of ${\bf G} \oplus {\bf H}$.
Intuitively, this means that $\bf J$ can be extended into an $\cal
H$-subdivision in some larger graph.  In this case, the
$\mathcal{H}$-subdivision ${\bf J} \oplus {\bf J}'$ is said to be an \emph{extension} of $\bf J$.

Similarly, for every $p \in \N_{\geq 1}$, a collection of boundaried subgraphs
$\mathcal{J} = \{{\bf J}_1, \dots, {\bf
  J}_p\}$ of a graph $\bf G$ is a \emph{partial
  $\x$-$\mathcal{H}$-packing} if there is a boundaried graph $\bf H$ which is
compatible with $\bf G$ and a collection of boundaried subgraphs
$\{{\bf J}_1', \dots, {\bf
  J}_p'\}$ of $\bf H$ such that $\{{\bf J}_1 \oplus {\bf J}_1', \dots,
{\bf J}_p \oplus{\bf
  J}_p'\}$ is an $\x$-$\mathcal{H}$-packing of ${\bf G} \oplus {\bf
  H}$. The obtained packing  is said to be an \emph{extension} of the partial packing~$\mathcal{J}$.
A partial packing is \emph{$\mathcal{H}$-free} if none of its
members (the boundaried graphs $\{{\bf J}_1, \dots, {\bf
  J}_p\}$) has an $\mathcal{H}$-subdivision. Observe
that if the graphs in ${\cal H}$ are connected, then every partial
subdivision of an $\mathcal{H}$-free partial $\x$-$\mathcal{H}$-packing in
$\bf G$ must contain at least one boundary vertex of~$\bf G$.

\paragraph{Partitions and protrusions.} A {\em rooted
  tree-partition} of a graph $G$ is a triple ${\cal D}=({\cal X},T,s)$
where
$(T,s)$ is a rooted tree and ${\cal X}=\{X_t\}_{t\in V(T)}$ is a
partition of $V(G)$ where either $n(T)=1$ or for every $\{x,y\}\in
E(G)$, there exists an edge $\{t,t'\}\in E(T)$ such that
$\{x,y\}\subseteq X_{t}\cup X_{t'}$ (see also
\cite{Seese,Halin1991203,Ding199645}).  Given an edge $f=\{t,t'\}\in
E(T)$, we define $E_{f}$ as the set of edges of $G$ with one endpoint in
$X_{t}$ and the other in $X_{t'}$.  Notice that all edges in $E_{f}$
are non-loop edges. The {\em width} of ${\cal D}$ is defined as
$$\max\{|X_{t}|_{t\in V(T)}\}\cup \{|E_{f}|_{f\in E(T)}\}.$$
The {\em tree-partition width} of $G$ is 
the minimum width over all tree-partitions of $G$ and will be denoted by $\tpw(G)$. 
The elements of ${\cal X}$ are called~\emph{bags}.

In order to decompose graphs along edge cuts, we introduce the
following edge-counterpart of the notion of
\emph{(vertex-)protrusion} introduced
in~\cite{BodlaenderFLPST09,BodlaenderFLPST09arxiv}.

Given a rooted tree-partition ${\cal D}=({\cal X},T,s)$ of $G$ and a vertex $i\in V(T)$, we 
define $T_i$ as the subtree of $T$ induced by the descendants of $i$ (including $i$) and
\[V_{i} = \bigcup_{h\in
  V(T_{i})}X_{h},\ \ \text{and} \quad G_i  =  G [V_i].\]

Let $W$ be a graph and $t \in \N_{\geq 1}$.
  A  pair ${\bf P}=({\bf G},{\cal D})$ is a {\em $t$-partitioned protrusion} of $W$ 
if there exists an $S\subseteq V(W)$ such that 
\begin{itemize}
\item ${\bf G}=(G,B,\lambda)$ is a  boundaried graph where ${\bf G}\in {\cal B}_{t}$ and ${\bf G}={\bf G}_{S}$ 
for some $S$-splitting $({\bf G}_{S},{\bf G}_{S^{\sf c}})$ of $W$ and
\item ${\cal D}=({\cal X},T,s)$ is a rooted tree-partition of
  $G\setminus B$ of width at most $t$, where $X_{s}$
are the neighbors in $G$ of the vertices in $B$.
\end{itemize}
We say that a $t$-partitioned protrusion $(\mathbf{G}, \mathcal{D})$ (with $\mathcal{D} = (\{X_u\}_{u\in V(T)}, T,s)$)
of a graph $W$ is \emph{$\mathcal{H}$-free} if ${\bf G}$ is $\mathcal{H}$-free.
Recall that in the rooted tree-partition $\mathcal{D}$, for every $u
\in V(T)$, we denote by $V_u$ the union of the bags indexed by
descendants of $u$ (including $u$).
For every vertex $u \in V(T)$, we define the {\em $t$-partitioned
  protrusion  ${\bf P}_{u}$ of $W$} as a pair ${\bf P}_{u}=({\bf G}_{u},{\cal D}_{u})$, 
where ${\cal D}_{u}=(\{X_{v}\}_{v\in V_{T_u}},T_{u},u)$
and ${\bf G}_{u}$ is defined as ${\bf G}_{V_{u}}$
for some $V_{u}$-splitting $({\bf G}_{V_{u}},{\bf G}_{V^{\sf c}_{u}})$
of $W$.
Informally speaking, the rooted tree-partition $\mathcal{D}_u$ is the part
of $\mathcal{D}$ corresponding to the subgraph of $\bf G$ induced by vertices in the bags that are
descendants of~$u$. \autoref{fig:partprot} provides an example of
partitioned protrusions.
We choose the labeling function of ${\bf G}_{s}$ so that it is the same as the one of ${\bf G}$, i.e.,
${\bf G}_{s}={\bf G}$. 
Notice that the labelings of all other ${\bf G}_{u}$'s 
are arbitrary.
For every $u \in V(T)$ we define
\[
\mathcal{G}_u = \{\mathbf{G}_l \mid {l
  \in \children_{(T,s)}(u)} \}.
\]

\begin{figure}[h]
  \centering
  \begin{tikzpicture}
    \draw[rounded corners] (1,0.75) rectangle (2.25,5.25); 
    \draw[rounded corners, fill = black!20] (2.5,0.75) rectangle (9.75,5.25); 
    \draw[rounded corners, fill = black!10] (5.5,1) rectangle (9.5,4.5); 
    \begin{scope}[every node/.style = bag]
      \draw (3, 2.5) node[label=90:$s$] {$X_s$}
      --++(1, 0) node (b1) {}
      --++(2, 0) node[label=90:$u$] (b3) {$X_u$}
      --++ (1, 1) node (b4) {}
      --++ (1, -1) node (b5) {}
      --++ (1, 0) node{}
      (b5) -- ++(1, 1) node {}
      (b4) -- ++(1, 0) node {}
      (b1) -- ++(0.75, 0.75) node {}
      (b3) --++ (1, -1) node {}
      --++(1, 0) node {};
    \end{scope}
    \draw (1.85, 2.5) ellipse (0.25cm and 0.75cm) node {$B$};
    \draw (1.85, 3) -- (2.8, 2.7);
    \draw (1.85, 2.8) -- (2.8, 2.7);
    \draw (1.85, 2.3) -- (2.8, 2.5);
    \draw (1.85, 3) -- (2.8, 2.5);
    \draw (1.85, 2.1) -- (2.8, 2.4);

    \draw[decorate,decoration={brace,amplitude=5pt}] (9.75, 0.5) --
    (5.5, 0.5) node[midway, yshift = -0.4cm] {$\mathbf{G}_u$};
    \draw[decorate,decoration={brace,amplitude=5pt}] (9.75, 0) --
    (2.5, 0) node[midway, yshift = -0.4cm] {$\mathbf{G}$};
    \draw[decorate,decoration={brace,amplitude=5pt}] (9.75, -0.5) --
    (1, -0.5) node[midway, yshift = -0.4cm] {$W$};
    \draw[decorate,decoration={brace,amplitude=5pt}] (5.55, 3.95) --
    (9.45, 3.95) node[midway, yshift = 0.4cm] {$\mathcal{D}_u$};
    \draw[decorate,decoration={brace,amplitude=5pt}] (2.55, 4.7) --
    (9.7, 4.7) node[midway, yshift = 0.4cm] {$\mathcal{D}$};
  \end{tikzpicture}
  \caption{The partitioned protrusion ${\bf P_u} = ({\bf G}_u, \mathcal{D}_u)$ as a substructure
    of $({\bf G}, \mathcal{D})$. White circles
    represent bags of the rooted tree-partitions. Only the edges
    between $B$ and $X_s$ are depicted}
  \label{fig:partprot}
\end{figure}
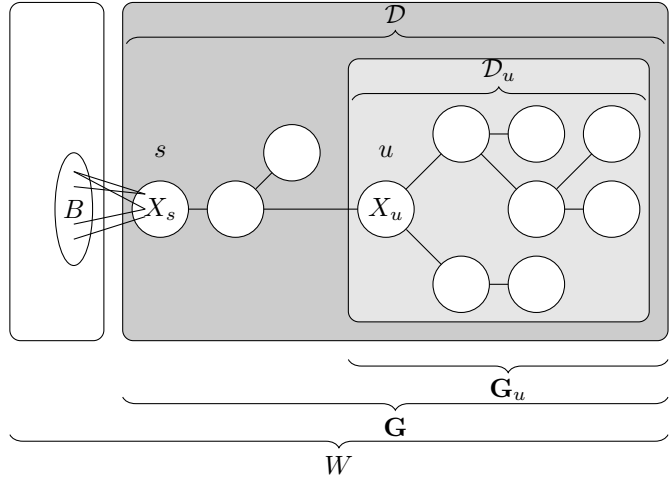

\subsection{Encodings, signatures, and folios}

In this section we introduce tools that we will use to sort
boundaried graphs depending on the subdivisions that are realizable inside.
\paragraph{Encodings.} Let $\mathcal{H}$ be a family of graphs, let
$t \in \N_{\geq 1}$, and let $\x \in \{{\sf v}, {\sf e}\}$.
Recall that if $M$ is an $\mathcal{H}$-subdivision in $G$, we denote by
$\hat{M}$ a graph isomorphic to some graph of $\mathcal{H}$ such that
$M$ is a subdivision of $\hat{M}$ (see \autoref{basicdefs}).
If ${\bf G}=(G,B,\lambda)\in {\cal
  B}_{t}$ is a boundaried graph and $S\subseteq A_{\sf
  x}({\bf G})$, we define $\mu^{\sf x}_{\cal H}({\bf G},S)$ as the
collection of all sets $\{({\bf
  J}_{1},L_{1}),\ldots,({\bf J}_{\sigma},L_{\sigma})\}$ such that
\begin{enumerate}[(i)]
\item $\{{\bf J}_1, \dots, {\bf
  J}_\sigma\}$ is a partial $\x$-$\cal H$-packing of ${\bf G} \setminus S$ of
size~$\sigma$ and
\item $L_i = V(\hat{M_i})\cap V(G)$, where $M_i$ is an extension of
  ${\bf J_i}$, for every $i \in \intv{1}{\sigma}$.
\end{enumerate}

\begin{figure}[ht]
  \centering
  \begin{tikzpicture}[every node/.style = black node, scale = 0.425]
    \begin{scope}[yshift = -11.5cm, xshift = -1cm]
      \node[normal] at (1,0.5) {$\mathcal{H} =  \{\textcolor{colork4}{K_4}, \textcolor{colork23}{K_{2,3}}\}$};
    \end{scope}
    \begin{scope}[yshift = -6cm]
      \draw[rounded corners] (-1,4) -- (-1, -4) -- (-6,-4) .. controls ++(-3,0)  and ++(-3,0) .. (-6, 4) -- cycle;
      \node[normal] at (-7, 0) {$\bf F$};
      %
      \draw (-1.5, 3.0) node[fill = colork23] (f0) {};
      \draw (-1.5, 2.0) node[fill = colork4] (f1) {};
      \draw (-1.5, 1.0) node[fill = colork4] (f2) {};
      \draw (-1.5, 0.0) node[fill = colork23] (f3) {};
      \draw (-1.5, -1.0) node[fill = colork23] (f4) {};
      \draw (-1.5, -2.0) node[fill = colork4] (f5) {};
      \draw (-1.5, -3.0) node[fill = colork4] (f6) {};
      %
      \node[black node, fill = colork23] (m1) at (-5,2) {};
      \draw[color = colork23, very thick] (m1) circle (10pt); 
      \node[cnode] (c1) at (-2.5, 2.5) {}; 
      \node[cnode] (c2) at (-2.5, .5) {}; 
      \draw[color = colork23, very thick] (c2) circle (10pt); 
      \node[cnode] (c3) at (-3, -1) {}; 
      \draw[color = colork23] (f0) -- (c1) -- (m1) -- (c2) -- (f3)
      (m1) .. controls ++(0, -2) and ++(-2, 0) .. (c3) -- (f4);
      \node[black node, fill = colork4] (n1) at (-6, 0) {};
      \draw[color = colork4, very thick] (n1) circle (10pt); 
      \node[black node, fill = colork4] (n2) at (-4, 0) {};
      \draw[color = colork4, very thick] (n2) circle (10pt); 
      \draw[color = colork4] (n2) -- (n1) .. controls ++(0,3) and ++(-3, 1) .. (c1) -- (f1)
      (n1) .. controls ++(0, -2) and ++(-3,0) .. (f6)
      (n2) -- (c3) .. controls ++(0, -1) and ++(-1,0) .. (f5)
      (n2) -- (c2) -- (f2);
      \draw[rounded corners] (1,4) -- (1, -4) -- (6,-4) .. controls ++(3,0)  and ++(3,0) .. (6, 4) -- cycle; 
      \node[normal] at (7, 0) {$\bf G$};
      \draw (1.5, 3.0) node[fill = colork23] (g0) {};
      \draw[color = colork23, very thick] (g0) circle (10pt); 
      \draw (1.5, 1.8) node[fill = colork4] (g1) {};
      \draw (1.5, 0.6) node[cnode] (g2) {};
      \draw (1.5, -0.6) node[fill = colork23] (g3) {};
      \draw (1.5, -1.8) node[fill = colork4] (g4) {};
      \draw (1.5, -3.0) node[fill = colork4] (g5) {};
      %
      \node[black node, fill = colork23] (m2) at (4,-1) {};
      \draw[color = colork23, very thick] (m2) circle (10pt); 
      \node[cnode] (c4) at (2.5, -1.5) {}; 
      \draw[color = colork23, very thick] (c4) circle (10pt); 
      \draw[color = colork23] (g2) -- (m2) -- (c4) -- (g3)
      (m2) .. controls ++(0,2) and ++(2,0) .. (g0);
      \node[black node, fill = colork4] (n3) at (5, 1) {};
      \draw[color = colork4, very thick] (n3) circle (10pt); 
      \node[black node, fill = colork4] (n4) at (4, -2) {};
      \draw[color = colork4, very thick] (n4) circle (10pt); 
      \draw[color = colork4] (g4) -- (c4) -- (n4) .. controls ++(-0.25, -0.25) and ++(1, 0) .. (g5)
      (n4) .. controls ++(1, 1) and ++(0, -1) .. (n3)
      (g1) -- (n3) -- (g2); 
      %
      \draw[color = colork23]
        (f0) -- (g0)
        (f3) -- (g2)
        (f4) -- (g3);
        \draw [color = colork4]
        (f1) -- (g1)
        (f2) -- (g2)
        (f5) -- (g4)
        (f6) -- (g5);
    \end{scope}
  \end{tikzpicture}
  \caption{A member of $\Pkg_{{\sf e},{\cal H}}^{\geq 1}({\bf F}\oplus {\bf G})$. Branch vertices are~circled.}
  \label{fig:folio}
\end{figure}

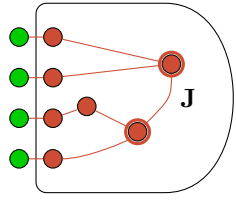
\begin{figure}
  \centering
    \begin{tikzpicture}[every node/.style = black node,scale = 0.45]
      \draw[rounded corners] (1,2.8) -- (1, -2.8) -- (4.5,-2.8) .. controls ++(3,0)  and ++(3,0) .. (4.5, 2.8) -- cycle; 
      \node[normal] at (5.5, 0) {${\bf J}$};
      \draw (1.5, 1.8) node[fill = colork4] (g1) {} ++(-1,0) node[fill = green!80!black] (f1) {};
      \draw (1.5, 0.6) node[fill = colork4] (g2) {} ++(-1,0) node[fill = green!80!black] (f2) {};
      \draw (1.5, -0.6) node[fill = colork4] (g4) {} ++(-1,0) node[fill = green!80!black] (f5) {};
      \draw (1.5, -1.8) node[fill = colork4] (g5) {} ++(-1,0) node[fill = green!80!black] (f6) {};
      %
      \node[fill = colork4] (c4) at (2.5, -0.25) {}; 
      \node[black node, fill = colork4] (n3) at (5, 1) {};
      \draw[color = colork4, very thick] (n3) circle (10pt); 
      \node[black node, fill = colork4] (n4) at (4, -1) {};
      \draw[color = colork4, very thick] (n4) circle (10pt); 
      \draw[color = colork4] (g4) -- (c4) -- (n4) .. controls ++(-0.25, -0.25) and ++(1, 0) .. (g5)
      (n4) .. controls ++(1, 1) and ++(0, -1) .. (n3)
      (g1) -- (n3) -- (g2); 
        \draw [color = colork4]
        (f1) -- (g1)
        (f2) -- (g2)
        (f5) -- (g4)
        (f6) -- (g5);
  \end{tikzpicture}
  \caption{A partial subdivision from the packing of
    \autoref{fig:folio}. Branch vertices are circled.}
  \label{fig:J}
\end{figure}

In other words, $L_i$ contains branch vertices of the partial subdivision
${\bf J}_i$ for every $i \in \intv{1}{\sigma}$ (see Figures~\ref{fig:folio} and~\ref{fig:J}).
The set $\mu^{\sf x}_{\cal H}({\bf G},S)$ encodes all different
restrictions in ${\bf G}$ of partial {\sf x}-${\cal H}$-packings
that avoid the set $S$. Given a boundaried graph ${\bf G} = (G,B,
\lambda)$ and a set $L \subseteq V(G)$ such that every vertex of $V(G)
\setminus L$ has degree 2 in $G$, we define $\kappa(\mathbf{G}, L)$ as
the boundaried graph obtained from $\mathbf{G}$ by dissolving every
vertex of $V(G) \setminus L$, i.e., $\kappa(\mathbf{G}, L) =
(\diss(\mathbf{G}, V(G) \setminus L), B, \lambda)$.
In the definition of $\kappa$ we assume that the boundary vertices of
$\kappa(\mathbf{G}, L)$ remain the same as in ${\bf G}$ while the
other vertices are treated as new vertices (see \autoref{fig:Jcomp}).

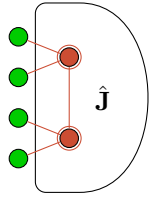
\begin{figure}[h]
  \centering
    \begin{tikzpicture}[every node/.style = black node, scale = 0.45]
      \draw[rounded corners] (1,2.8) -- (1, -2.8) -- (2,-2.8) .. controls ++(3,0)  and ++(3,0) .. (2, 2.8) -- cycle; 
      \node[normal] at (3, 0) {$\hat{\bf J}$};
      \draw (0.5, 1.8) node[fill = green!80!black] (f1) {};
      \draw (0.5, 0.6) node[fill = green!80!black] (f2) {};
      \draw (0.5, -0.6) node[fill = green!80!black] (f5) {};
      \draw (0.5, -1.8) node[fill = green!80!black] (f6) {};
      %
      \node[black node, fill = colork4] (n3) at (2, 1.2) {};
      \draw[color = colork4] (n3) circle (10pt); 
      \node[black node, fill = colork4] (n4) at (2, -1.2) {};
      \draw[color = colork4] (n4) circle (10pt); 
        \draw [color = colork4]
        (f1) -- (n3) -- (f2)
        (f5) -- (n4) -- (f6)
        (n3) -- (n4);
  \end{tikzpicture}
  \caption{The compression of the partial packing of \autoref{fig:J}: $\hat{\bf J} = \kappa({\bf J},L)$.}
  \label{fig:Jcomp}
\end{figure}

This allows us to introduce the following notation aimed at
representing, intuitively, the essential part of each partial packing.
\begin{eqnarray*}
\hat{\mu}^{\sf x}_{\cal H}({\bf G},S) & = &  \{\hat{\cal J}=\{\hat{\bf
  J}_{1},\ldots,\hat{\bf J}_{\sigma}\}=\{\kappa({\bf
  J}_{1}, L_{1}),\ldots,\kappa({\bf J}_{\sigma}, L_{\sigma})\}\mid \\
& &     \ \    \{({\bf J}_{1},L_{1}),\ldots,({\bf J}_{\sigma},L_{\sigma})\}\in\mu_{\cal H}({\bf G},S)\}.
\end{eqnarray*}

\paragraph{Isomorphisms.}
If ${\bf G}=(G,B,\lambda)$ and ${\bf G'}=(G',B',\lambda')$ are two
compatible boundaried graphs in ${\cal B}_{t}$,
 $S\in V(G)$, and $S'\in V(G')$, 
we say that a member $\hat{\cal J}$ of $\hat{\mu}^{\sf x}_{\cal H}({\bf G},S)$
and a member $\hat{\cal J}'$ of $\hat{\mu}^{\sf x}_{\cal H}({\bf G}',S')$
are {\em isomorphic}   if there is a bijection between them such that 
paired elements are isomorphic.
We also say that $\hat{\mu}^{\sf x}_{\cal H}({\bf G},S)$ and $\hat{\mu}^{\sf x}_{\cal H}({\bf G}',S')$
are {\em isomorphic} if there is a bijection between them such that 
paired elements are isomorphic.

We now come to the point where we can define, for every boundaried graph, 
a signature encoding all the possible partial packings that can be
realized in this graph.
\paragraph{Signatures and folios.} 
For  every $y \in \N$, we set
\begin{eqnarray*}
\sig^{\sf x}_{\cal H}({\bf G},y) & = &  \{\hat{\mu}^{\sf x}_{\cal
  H}({\bf G},S),\ S \subseteq A_{\x}(G),\ |S|=y\}
\end{eqnarray*}
and, given two compatible $t$-boundaried graphs ${\bf G}$ and ${\bf
  G}'$ and a $y \in \N$, 
we say that ${\sf sig}^{\sf x}_{\cal H}({\bf G},y)$  and ${\sf sig}^{\sf x}_{\cal H}({\bf G}',y)$ are {\em isomorphic}
if there is a bijection between them such that 
paired elements are  isomorphic.

Finally,  for  $\rho \in \N$, we set
\begin{eqnarray*}
{\sf folio}_{{\cal H},\rho}({\bf G}) & = &  (\sig^{\sf v}_{\cal H}({\bf G},0),\ldots,{\sf sig}^{\sf v}_{\cal H}({\bf G},\rho),{\sf sig}^{\sf e}_{\cal H}({\bf G},0),\ldots,{\sf sig}^{\sf e}_{\cal H}({\bf G},\rho)).
\end{eqnarray*}
Given two $t$-boundaried graphs ${\bf G}$ and ${\bf G}'$,  a
$\rho \in \N$, and a  finite collection of  connected graphs
${\cal H}$, we say that ${\bf G}\simeq_{\mathcal{H},\rho}{\bf G}'$ if
${\bf G}$ and ${\bf G}'$ are compatible, neither ${\bf G}$ nor
${\bf G}'$ contains an $\mathcal{H}$-subdivision, and the elements of ${\sf
  folio}_{{\cal H},\rho}({\bf G})$ and  ${\sf folio}_{{\cal
    H},\rho}({\bf G}')$ are position-wise~isomorphic.

\section{The reduction}
\label{sec:red}

The purpose of this section is to prove the following lemma.

\begin{lemma}\label{l:new_reduce}
There exists a function $\newfun{f:newred} \colon \N^2\rightarrow \N$
and  an algorithm that, given a positive integer $t$, 
 a finite collection ${\cal H}$
of connected graphs where $h=m({\cal H})$, and a $t$-partitioned
protrusion  ${\bf P} = ({\bf G}, (\mathcal{X}, T, s))$ of a graph $W$ with $n({\bf G}) > \funref{f:newred}(h, t)$, outputs either
\begin{itemize}
\item an $\mathcal{H}$-subdivision of $W$ with at most
  $\funref{f:newred}(h, t)$ edges or
\item a graph $W'$ such that\begin{align*}
  \xpack_\mathcal{H}(W') &= \xpack_\mathcal{H}(W),\\
  \xcover_\mathcal{H}(W') &= \xcover_\mathcal{H}(W),\ \text{and}\\
    n(W') &< n(W).  
  \end{align*}
\end{itemize}
Furthermore, this algorithm runs in time $O_{t,h}(n(T))$. Moreover, this algorithm can be enhanced 
so that given a  ${\sf x}$-{${\cal H}$-packing} (resp.  ${\sf x}$-{${\cal H}$-covering}) in $W'$ then it outputs a same size 
${\sf x}$-{${\cal H}$-packing}  (resp.  ${\sf x}$-{${\cal H}$-covering}) in $W$.
\end{lemma}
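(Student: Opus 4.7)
The plan is to adapt the standard protrusion-replacement strategy to the setting of partitioned protrusions, using the equivalence $\simeq_{\mathcal{H},\rho}$ introduced in the preliminaries. My first step would be to choose $\rho = \rho(t,h)$ large enough to ensure the following substitution property: whenever $\mathbf{G}_1 \simeq_{\mathcal{H},\rho} \mathbf{G}_2$ are compatible $\mathcal{H}$-free boundaried graphs in $\mathcal{B}_t$ and $\mathbf{H}$ is any compatible host, then $\xpack_{\mathcal{H}}(\mathbf{G}_1 \oplus \mathbf{H}) = \xpack_{\mathcal{H}}(\mathbf{G}_2 \oplus \mathbf{H})$ and similarly for $\xcover_{\mathcal{H}}$. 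The value of $\rho$ is dictated only by $t$ and $h$ because connectedness of the members of $\mathcal{H}$ forces every partial subdivision in an $\mathcal{H}$-free boundaried graph to meet the boundary, so any $\mathcal{H}$-free partial packing has at most $|B|\leq t$ parts and the relevant restrictions of covering sets are likewise controlled by $t$ and $h$.

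I would then verify that the family of $\mathcal{H}$-free boundaried graphs in $\mathcal{B}_t$ has only finitely many $\simeq_{\mathcal{H},\rho}$-classes; call this number $N = N(t,h)$. This reduces to bounding the cardinality of a folio: a signature $\sig^{\x}_{\mathcal{H}}(\mathbf{G},y)$ is a set of $\mathcal{H}$-free partial packings of size at most $t$, each of whose members $\hat{\mathbf{J}}_i$ is a compressed partial subdivision with at most $n(\mathcal{H}) + t$ vertices, so there are only finitely many such signatures up to isomorphism. For each class I would fix a canonical representative of minimum size; let $r = r(t,h)$ denote the maximum size across these representatives, and define $\funref{f:newred}(h,t)$ larger than $r(t,h)$ and the subdivision-size bound obtained below.

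The algorithm performs a bottom-up traversal of $T$. At each node $u$, it inductively maintains the $\simeq_{\mathcal{H},\rho}$-class of $\mathbf{G}_u$, computed from the classes of its children and the bag $X_u$ via a transition table that is precomputed once and for all in time depending only on $t$ and $h$, so each node is handled in $O_{t,h}(1)$ time and the total running time is $O_{t,h}(n(T))$. The traversal halts at the first $u$ for which either (i) $\mathbf{G}_u$ is not $\mathcal{H}$-free, or (ii) $\mathbf{G}_u$ is $\mathcal{H}$-free with $n(\mathbf{G}_u) > r(t,h)$. In case (i), every proper descendant $l$ of $u$ is $\mathcal{H}$-free with $n(\mathbf{G}_l)\leq r(t,h)$; moreover any $\mathcal{H}$-subdivision in $\mathbf{G}_u$ uses only a bounded number of child subtrees (at most $2\,m(\mathcal{H})$, one per connection to $X_u$), so we can extract an $\mathcal{H}$-subdivision of $W$ with at most $\funref{f:newred}(h,t)$ edges by restricting to those subtrees. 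In case (ii), the substitution property licenses replacing the subgraph of $W$ corresponding to $\mathbf{G}_u$ by the canonical representative of its class, yielding $W'$ with $n(W') < n(W)$ and preserving both pack and cover numbers. The traversal eventually reaches the root $s$, where $\mathbf{G}_s = \mathbf{G}$; since $n(\mathbf{G}) > \funref{f:newred}(h,t) > r(t,h)$, at least case (ii) must trigger there if no earlier case did. The reverse translation of packings and coverings from $W'$ back to $W$ follows from the explicit correspondence between partial packings and partial coverings encoded in the folios of the two equivalent boundaried graphs.

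The principal obstacle is proving the substitution property of the first paragraph, namely that $\simeq_{\mathcal{H},\rho}$ captures \emph{all} the information needed for interchangeability under both $\xpack_{\mathcal{H}}$ and $\xcover_{\mathcal{H}}$. This requires an explicit translation of an optimal packing or covering of $\mathbf{G}_1 \oplus \mathbf{H}$ into one of $\mathbf{G}_2 \oplus \mathbf{H}$ via the signature data, while carefully respecting the disjointness conditions of $\x$-packings and the global covering invariant across the gluing. A secondary technical point is the subdivision-extraction step in case (i): controlling the edge count of the output subdivision requires first ``flattening'' each child subtree by its canonical representative so that each subdivision path has length bounded by $r(t,h)$, all without breaking the claimed running time.
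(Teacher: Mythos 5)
Your overall architecture (finitely many classes of the folio-based equivalence, a replacement lemma showing equivalent boundaried graphs are interchangeable for both $\xpack_{\mathcal H}$ and $\xcover_{\mathcal H}$, and a linear-time traversal of $T$) matches the paper's, and the substitution property you isolate as the ``principal obstacle'' is exactly \autoref{l:protrep}/\autoref{c:equiv}, proved there via the bijection between the sets $\hat{\mu}^{\sf x}_{\mathcal H}(\cdot,S)$. The genuine gap is in your reduce step: you replace $\mathbf{G}_u$ by a ``canonical representative of minimum size'' of its class and set $r(t,h)$ to be the maximum size of these representatives. Nothing in your argument shows that these representatives can be computed, nor that $r(t,h)$ is bounded by a computable function of $t$ and $h$: knowing the number of classes is finite tells you each class has a smallest member, but gives no procedure to find it and no effective bound on its size (you cannot enumerate graphs by size and stop, since you never know when all realized classes have appeared). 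This matters here because the paper's $O_{t,h}(\cdot)$ convention explicitly requires a \emph{computable} hidden function, and because $\funref{f:newred}(h,t)$ is later passed as an explicit numerical parameter ($w$ and $z$) to the algorithm of \autoref{rk6ter} inside \autoref{l:eppump}; a non-constructive threshold would break \autoref{admndi}.

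The paper sidesteps canonical representatives entirely: it never replaces a protrusion by an external graph, only by material already present in $W$. Concretely, if some node of $T$ has more than $\funref{f:maxdeg}(h,t)$ children with small subtrees, pigeonhole gives $\funref{f:intersect}(h,t)+1$ pairwise equivalent siblings, and since any $\mathcal H$-free partial packing meets at most $\funref{f:intersect}(h,t)$ of the $\mathcal G_s$ (\autoref{l:internal}\eqref{e:fb}), one sibling subtree can simply be \emph{deleted} (\autoref{l:bdg}); otherwise $T$ contains a root-to-leaf path longer than the number of equivalence classes, so two nested sub-protrusions satisfy ${\bf G}_v\simeq_{\mathcal H,t}{\bf G}_w$ with $w$ a descendant of $v$, and replacing ${\bf G}_{G_v}$ by its strict subgraph ${\bf G}_{G_w}$ is both constructive and strictly size-decreasing (\autoref{l:lpth}). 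This also yields an explicit, computable $\funref{f:newred}(h,t)=2ht^3(\funref{f:maxdeg}(h,t))^{\funref{f:eqclass}(h,t)+1}$ via a height/degree count (\autoref{nr2}), which simultaneously bounds the edge count of any small subdivision found, handling the multi-edge issue you left implicit. To repair your proof you would need to either prove an effective upper bound on the size of a minimum representative in each realized class --- which essentially amounts to re-deriving the paper's path/degree pigeonhole --- or switch to the internal-replacement scheme above.
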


In other words we can, in linear time, either find a {\sl small}
$\mathcal{H}$-subdivision, or reduce the graph to a smaller one where the
parameters of packing and covering stay the same. 
The main lines of the proof are the following. We consider the
equivalence relation $\simeq_{\mathcal{H},t}$ on $\mathcal{H}$-free
$t$-partitioned protrusions. Informally, $\simeq_{\mathcal{H},t}$
relates the partitioned protrusions in which the same partial packings
are realizable. The first step is to show that this relation has a
finite number of equivalence classes. Recall that each vertex $i$ of the tree
$T$ of the partitioned protrusion $\bf P$ defines a partitioned
protrusion ${\bf P}_i$. We then show that if $T$ is too large (i.e.\ it has
a long path or a vertex of large degree), then for several vertices,
such partitioned
protrusions belong to the same equivalence
class of~$\simeq_{\mathcal{H},t}$. We finally show that in this case,
some of
these vertices can be deleted without changing the packing and covering numbers.

Before giving the proof of \autoref{l:new_reduce}, we need to prove several
intermediate results.
In the sequel, unless stated otherwise, we assume that $\sf{x} \in
\{\sf{v}, \sf{e}\}$, $t\in \N_{\geq 1}$ and that $\mathcal{H}$ is a  finite 
collection of
connected graphs. We set~$h = m(\mathcal{H})$.
Recall that, for a $t$-partitioned protrusion $({\bf G},
(\mathcal{X}, T, s))$ of a graph $W$, and $v\in V(T)$, we denote by
$\mathcal{G}_v$ the set of all boundaried graphs of the form ${\bf
  G}_{V_u}$, for some $V_u$-splitting $({\bf
  G}_{V_u}, {\bf G}_{V_u^{\sf c}})$ of $W$, where $u$ is a child of
$v$. Informally, these are the boundaried graphs induced by the bags
of the subtrees of $T$ rooted at the children of~$v$.

\begin{lemma}\label{l:internal}
  There are two functions $\newfun{f:modbv} \colon \N^2 \to \N$
  and $\newfun{f:intersect} \colon \N^2 \to \N$ such that, for every
  graph $W$ and every $t$-partitioned protrusion $({\bf G},
  (\mathcal{X}, T, s))$ of $W$, if $\mathcal{P}$ is an
  $\mathcal{H}$-free partial $\x$-$\mathcal{H}$-packing in
  $\mathbf{G}$ then:
  \begin{enumerate}[(a)]
  \item \label{e:fa} 
 The partial subdivisions of graphs in ${\cal H}$  that are contained in ${\cal P}$ have in total at most 
 $\funref{f:modbv}(h, t)$ branch~vertices.
  \item \label{e:fb} $\mathcal{P}$ intersects at most $\funref{f:intersect}(h ,t)$
    graphs of~$\mathcal{G}_s$.
  \end{enumerate}
\end{lemma}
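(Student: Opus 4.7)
Both parts rely on first bounding $|\mathcal{P}|$ by $t$. Since every graph of $\mathcal{H}$ is connected and $\mathcal{P}$ is $\mathcal{H}$-free, the observation following the definition of partial packings implies that every $\mathbf{J}_i \in \mathcal{P}$ contains at least one boundary vertex of $\mathbf{G}$. Since $\mathbf{G} \in \mathcal{B}_t$, at most $t$ such vertices exist, and each has degree $1$ in $G$; so in both the vertex- and the edge-disjoint case, distinct members of $\mathcal{P}$ must correspond to distinct boundary vertices (resp.\ boundary edges), whence $|\mathcal{P}| \leq t$.

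For part~(a), each $\mathbf{J}_i$ contributes at most $|V(\hat M_i)| \leq n(\cupall \mathcal{H})$ branch vertices, a quantity bounded by a function of $h$ alone (any connected graph with $m \geq 1$ edges has at most $m+1$ vertices, and the case of a trivial $\mathcal{H}$ is immediate). Summing over the at most $t$ partial subdivisions sets $\funref{f:modbv}(h,t) = t \cdot n(\cupall \mathcal{H})$.

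For part~(b), fix $\mathbf{J}_i \in \mathcal{P}$ and a child $l$ of $s$ with $V(\mathbf{J}_i) \cap V_l \neq \emptyset$. The key structural observation is that $X_s$ separates $V_l$ from $V(G) \setminus (B \cup V_l)$ inside $G \setminus B$: by the tree-partition property, edges of $G \setminus B$ only join adjacent bags of $T$, so any edge leaving $V_l$ must go to $X_s$, and the protrusion structure attaches boundary vertices of $\mathbf{G}$ only to $X_s$, so $B$ offers no alternative route into $V_l$. Now $\mathbf{J}_i \setminus B$ is connected (degree-$1$ vertices can be peeled off without disconnecting) and meets $X_s$ via the neighbor in $G$ of a boundary vertex used by $\mathbf{J}_i$, so reaching $V_l$ forces $\mathbf{J}_i$ to contain at least one edge of $G$ between $X_s$ and $X_l$. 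Distinct children $l$ correspond to distinct such edges, whence the number of children intersected by $\mathbf{J}_i$ is bounded by the total $\mathbf{J}_i$-degree of $V(\mathbf{J}_i) \cap X_s$; this is at most $|X_s|$ times the maximum degree of any graph in $\mathcal{H}$ (subdivision vertices have degree $2$, branch vertices inherit their degree from $\hat M_i$), and hence at most $t \cdot 2h$. Summing over the at most $t$ partial subdivisions yields $\funref{f:intersect}(h,t) = 2\, t^2 \cdot h$. The main obstacle is formalizing this separation claim, which truly requires the combination of the tree-partition condition with the specific structure of a partitioned protrusion rather than either alone.
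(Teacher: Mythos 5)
Your reduction to $|\mathcal{P}|\leq t$ and your part (a) coincide with the paper's argument (the paper uses the per-member bound $\max_{H\in\mathcal{H}}n(H)\leq h$ and sets $\funref{f:modbv}(h,t)=t\cdot h$; the difference with your $t\cdot n(\cupall\mathcal{H})$ is immaterial). For part (b) you take a genuinely different route: the paper counts the at most $\funref{f:modbv}(h,t)$ branch vertices plus the at most $(\funref{f:modbv}(h,t))^2\cdot h$ branch-vertex-free paths joining them and charges these objects to graphs of $\mathcal{G}_s$, whereas you localize everything at $X_s$ — every visit of a member of $\mathcal{P}$ to a subtree $V_l$ forces a distinct $X_s$--$X_l$ edge, and the number of such edges is controlled by the degrees, in the subdivision, of the at most $t$ vertices of $X_s$ that the member uses. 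Your version is arguably cleaner, since it does not have to worry about a single connecting path weaving through several children of $s$, and it yields a comparable $O_{h,t}(1)$ bound, which is all that matters downstream.

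One step of your write-up is not right as stated, though it is repairable. A partial subdivision $\mathbf{J}_i$ (hence also $\mathbf{J}_i\setminus B$) need not be connected: its extension $M_i={\bf J}_i\oplus{\bf J}_i'$ may leave and re-enter $\mathbf{G}$ several times, so $\mathbf{J}_i$ can have several components and ``peeling off degree-$1$ vertices'' does not establish connectivity. What you actually need, and what is true, is that \emph{every connected component} of $\mathbf{J}_i$ contains a vertex of $B$: a component avoiding $B$ would remain a connected component of $M_i$ after gluing, hence would equal all of $M_i$ (as $\hat M_i$ is connected), so $\mathbf{J}_i$ would contain an $\mathcal{H}$-subdivision, contradicting the $\mathcal{H}$-freeness of $\mathcal{P}$. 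With this substitution, each component of $\mathbf{J}_i$ meeting $V_l$ also meets $B$, hence must leave $V_l$, and since the only edges leaving $V_l$ go to $X_s$ (no edges join $B$ to $V_l$, as the neighbors of $B$ all lie in $X_s$), your separation and degree-counting argument goes through unchanged.
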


\begin{proof}
  \noindent \textit{Proof of \eqref{e:fa}.}
  First, note that any $\mathcal{H}$-free partial $\x$-$\mathcal{H}$-packing in
  $\mathbf{G}$ has cardinality at most~$t,$ because each
  partial subdivision it contains must use a boundary edge of $\mathbf{G}$, and
  two distinct subdivisions of the same packing are 
  (at least) edge-disjoint. %
  Also, each of these partial subdivisions contains at most $\max_{H \in
    \mathcal{H}} n(H) \leq
  h$ branch vertices. Consequently, for every $\mathcal{H}$-free
  partial $\sf x$-$\mathcal{H}$-packing in $\mathbf{G},$ the number of branch vertices
  of graphs of $\mathcal{H}$ it induces in $\mathbf{G}$ is at most
  $t\cdot h$. Hence the function $\funref{f:modbv}(h,t) := t
  \cdot h$ upper-bounds the amount of branch vertices each $\mathcal{H}$-free
  partial packing can contain.

  \noindent \textit{Proof of \eqref{e:fb}.} Let $\zeta$ be the maximum multiplicity of an
  edge in a graph of~$\mathcal{H}$. Because of \eqref{e:fa}, every $\mathcal{H}$-free partial
  $\x$-$\mathcal{H}$-packing $\mathcal{P}$ in ${\bf G}$ has at most
  $\funref{f:modbv}(h, t)$ branch vertices of graphs of
  $\mathcal{H}$, so at most $\funref{f:modbv}(h, t)$
  graphs of $\mathcal{G}_s$ may contain such vertices.
  Besides, $\mathcal{P}$ might also contain paths free of branch
  vertices linking pairs of branch vertices. Since there are at most $(\funref{f:modbv}(h, t))²$ such pairs and no pair
  will need to be connected with more than $\zeta \leq h$
  distinct paths, it follows that at most $(\funref{f:modbv}(h, t))² \cdot  h$
  graphs of 
  $\mathcal{G}_s$ contain vertices from these paths. Therefore, every
  $\mathcal{H}$-free partial $\sf x$-$\mathcal{H}$-packing intersects at most
  $\funref{f:modbv}(h, t) +
 (\funref{f:modbv}(h, t))² \cdot h =:
  \funref{f:intersect}(h, t)$ graphs of~$\mathcal{G}_s.$
\end{proof}

\begin{lemma}\label{l:muhat_finite}
  There is a function $\newfun{f:simclas} \colon \N^2 \to \N$ such that the image of the function $\hat{\mu}^{\sf x}_{\mathcal H}$, when its
  domain is restricted to
\[
\{(\mathbf{G},S),\ \mathbf{G}\ \text{is}\ \mathcal{H}\text{\rm -free } {\rm and}\ S \subseteq A_{\sf x}(\mathbf{G})\},
\]
  has size upper-bounded by $\funref{f:simclas}(h,t)$.
\end{lemma}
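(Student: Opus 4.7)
The strategy is to control each element of the image by controlling the size of its ``primitive pieces'', the compressed partial subdivisions $\hat{\bf J}_i$. Let ${\bf G}=(G,B,\lambda)$ be $\mathcal{H}$-free with $S\subseteq A_{\sf x}({\bf G})$, and consider any element
$\hat{\mathcal{J}}=\{\hat{\bf J}_1,\dots,\hat{\bf J}_\sigma\}=\{\kappa({\bf J}_1,L_1),\dots,\kappa({\bf J}_\sigma,L_\sigma)\}$
of $\hat{\mu}^{\sf x}_{\cal H}({\bf G},S)$. The underlying partial packing $\{{\bf J}_1,\dots,{\bf J}_\sigma\}$ is $\mathcal{H}$-free, so by the opening observation in the proof of \autoref{l:internal}\eqref{e:fa}, we have $\sigma\leq t$ (each ${\bf J}_i$ must use a distinct boundary edge of ${\bf G}$). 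Moreover each $L_i$ is contained in the branch-vertex set $V(\hat{M}_i)$ of an extension $M_i$ of ${\bf J}_i$, whose underlying graph is a member of~$\mathcal{H}$ and hence has at most $h+1$ vertices; so $|L_i|\leq h+1$.

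From the definition of $\kappa$, the vertex set of $\hat{\bf J}_i$ consists of $L_i$ together with the boundary vertices of ${\bf J}_i$ (a subset of $B$ whose labels lie in~$\intv{1}{t}$); hence $|V(\hat{\bf J}_i)|\leq h+1+t$. Each edge of $\hat{\bf J}_i$ arises from a maximal subdivided path of $M_i$ between two such vertices, so there are at most $\binom{h+1+t}{2}+(h+1+t)$ such edges, each of multiplicity bounded by the maximum edge-multiplicity $\zeta\leq h$ occurring in a graph of~$\mathcal{H}$. Consequently the number of isomorphism classes of boundaried graphs that can occur as some $\hat{\bf J}_i$ is bounded by a function $g_1(h,t)$ depending only on $h$ and~$t$.

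Since $\hat{\mathcal{J}}$ is a set of at most $t$ such pieces, the number of isomorphism classes of collections $\hat{\mathcal{J}}$ is at most $g_2(h,t):=g_1(h,t)^t$. Finally, $\hat{\mu}^{\sf x}_{\cal H}({\bf G},S)$ is itself a set of $\hat{\mathcal{J}}$'s, and under the isomorphism of such objects defined earlier in this section it is determined by the multiset of isomorphism classes of its members; so at most $2^{g_2(h,t)}$ values can occur, and setting $\funref{f:simclas}(h,t):=2^{g_2(h,t)}$ finishes the proof. The argument is essentially bookkeeping once \autoref{l:internal} is available; the only mildly delicate point is verifying that $\kappa$ is well-defined on each pair $({\bf J}_i,L_i)$, i.e.\ that every vertex of $V({\bf J}_i)\setminus(L_i\cup B)$ has degree~$2$ in ${\bf J}_i$, which holds because such vertices are interior subdivision vertices of the extension~$M_i$.
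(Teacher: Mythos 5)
Your proof is correct and follows essentially the same route as the paper: both invoke \autoref{l:internal}\eqref{e:fa} to bound the branch vertices, deduce that every compressed partial packing lies in a fixed finite class of small boundaried graphs determined by $h$ and $t$, and then bound the image by a power set. You are in fact a bit more careful than the paper about the nested set structure (pieces, collections of pieces, sets of collections, counted up to isomorphism) and about bounding edge multiplicities, but the underlying argument is the same.
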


\begin{proof}
  Let $\mathbf{G}$ be an $\mathcal{H}$-free $t$-boundaried graph and let $S \subseteq
  A_{\sf x}(\mathcal{G}).$ 
  From \autoref{l:internal}(\ref{e:fa}), every $\mathcal{H}$-free partial $\sf
  x$-$\mathcal{H}$-packing in $\mathbf{G}$ contains at most
  $\funref{f:modbv}(h,t)$ branch vertices.
  This partial packing can in addition use at most $t$ boundary vertices.
  Let $\mathcal{C}_{h,t}$ be the class of all $(\leq t)$-boundaried
  graphs on at most $\funref{f:modbv}(h,t) + t$ vertices. Clearly
  the size of this class is a function depending on~$h$ and~$t$ only.
  Recall that the elements of the set $\hat{\mu}^{\sf x}_{\mathcal
    H}(\mathbf{G}, S)$ are obtained from partial $\sf
  x$-$\mathcal{H}$-packings by dissolving internal vertices of the
  paths linking branch vertices, hence every element of $\hat{\mu}^{\sf x}_{\mathcal
    H}(\mathbf{G}, S)$  is a $t$-boundaried graph of
  $\mathcal{B}_t$ having at most $\funref{f:modbv}(h,t)
  + t$ vertices. Therefore, for any $\mathcal{H}$-free $t$-boundaried graph $\mathbf{G}$ and
  subset~$S \subseteq A_{\sf x}(\mathbf{G})$, we have $\hat{\mu}^{\sf
    x}_{\mathcal H}(\mathbf{G}, S) \subseteq \mathcal{C}_{h,t}$.
  As a consequence, the image of the function $\hat{\mu}^{\sf x}_{\mathcal
    H}$ when restricted to $\mathcal{H}$-free $t$-boundaried
   graphs $\mathbf{G} \in \mathcal{B}_t$ (and
  subsets~$S \subseteq A_{\sf x}(\mathbf{G})$) is a subset of the
  power set of $\mathcal{C}_{h,t}$, so its size is upper-bounded by a
  function (which we call~$\funref{f:simclas}$) that depends only on $h$ and~$t$.
\end{proof}

\begin{corollary}\label{c:finite-ec}
  There is a function $\newfun{f:eqclass} \colon \N^2
  \to \N$ such that the relation $\simeq_{\mathcal{H},t}$
  partitions $\mathcal{H}$-free $t$-boundaried graphs into at
  most $\funref{f:eqclass}(h,t)$ equivalence~classes.
\end{corollary}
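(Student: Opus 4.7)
The plan is to reduce the corollary to a direct counting argument built on top of Lemma~\ref{l:muhat_finite}. By the very definition of $\simeq_{\mathcal{H},t}$, an equivalence class of $\mathcal{H}$-free $t$-boundaried graphs is determined by two pieces of data: (i) the label set of the boundary (which governs compatibility) and (ii) the position-wise isomorphism type of the folio ${\sf folio}_{\mathcal{H}, t}({\bf G})$. I will bound the number of possibilities for each in terms of $h$ and $t$ and multiply.

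First I would handle (i): since every graph in $\mathcal{B}_t$ has its boundary labels drawn from $\intv{1}{t}$, there are at most $2^t$ possible label sets $\Lambda({\bf G})$, hence at most $2^t$ compatibility classes. Now I would fix a compatibility class and count folios inside it. A folio is a tuple of $2(t+1)$ signatures, one per pair $(\x, y)$ with $\x \in \{{\sf v},{\sf e}\}$ and $y \in \intv{0}{t}$. By definition, a signature $\sig^{\sf x}_{\mathcal{H}}({\bf G}, y)$ is a set whose elements lie (up to isomorphism) in the image of $\hat{\mu}^{\sf x}_{\mathcal{H}}$ restricted to pairs with an $\mathcal{H}$-free boundaried graph, and this image has size at most $\funref{f:simclas}(h, t)$ by Lemma~\ref{l:muhat_finite}. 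Consequently, the number of distinct signatures (up to the isomorphism declared in the paper) is at most $2^{\funref{f:simclas}(h, t)}$.

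Multiplying across the $2(t+1)$ positions of the folio and across compatibility classes, the number of equivalence classes of $\simeq_{\mathcal{H}, t}$ is at most
\[
\funref{f:eqclass}(h, t) \ := \ 2^{t} \cdot \left( 2^{\funref{f:simclas}(h, t)} \right)^{2(t+1)},
\]
a quantity depending only on $h$ and $t$, which is the desired bound. The only non-routine ingredient is the finiteness furnished by Lemma~\ref{l:muhat_finite}; once that is in hand, the argument is a straightforward product-rule count, so I do not anticipate any genuine obstacle beyond being careful that signatures are counted up to the appropriate notion of isomorphism rather than as set-theoretic objects.
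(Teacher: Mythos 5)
Your proof is correct and is exactly the counting argument the paper intends (the paper states the corollary without proof, treating it as immediate from \autoref{l:muhat_finite}): bound the compatibility classes by $2^t$, bound each of the $2(t+1)$ signatures in the folio by $2^{\funref{f:simclas}(h,t)}$ using the lemma, and multiply. No gaps.
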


 The following lemma follows directly from the definition of~$\hat{\mu}^{\sf x}_\mathcal{H}$.
\begin{lemma}\label{l:diss-pack}
  Let $\mathbf{F}, \mathbf{G} \in \mathcal{B}_t$ be two compatible~$t$-boundaried graphs and let~$k \in \N$. Then we have:

  \[\Pkg_{{\sf x},{\cal H}}^{\geq k}(\mathbf{F} \oplus \mathbf{G}) \neq
  \emptyset \iff \exists \hat{\mathcal{J}} \in \hat{\mu}^{\sf x}_{\mathcal
    H}(\mathbf{G}, \emptyset),\ \Pkg_{{\sf x},{\cal H}}^{\geq
    k}(\mathbf{F} \oplus \cupall \hat{\mathcal{J}}) \neq
  \emptyset.\]
\end{lemma}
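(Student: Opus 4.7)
The plan is to unfold the definitions of $\hat{\mu}^{\sf x}_{\mathcal H}$ and the compression operator $\kappa$ in both directions of the equivalence. The key property driving both directions is that $\kappa$ only dissolves degree-2 vertices that are \emph{not} branch vertices of the underlying partial subdivisions; since dissolving such a vertex is the inverse of subdividing an edge, it preserves both the topological structure witnessing an $\mathcal{H}$-subdivision and the (vertex- or edge-) disjointness required by a packing. Both directions therefore amount to moving a packing across the operation $\kappa$ applied coordinate-wise to a partial packing in $\mathbf{G}$.

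For the forward direction, assume there is a packing $\{M_1, \dots, M_k\} \in \Pkg_{\x,\mathcal{H}}^{\geq k}(\mathbf{F}\oplus\mathbf{G})$. I would restrict each $M_i$ to the $\mathbf{G}$-side by setting $\mathbf{J}_i = M_i \cap_{\mathbf{F}\oplus\mathbf{G}} \mathbf{G}$ and $L_i = V(\hat{M}_i) \cap V(G)$. By construction each $M_i$ is an extension of $\mathbf{J}_i$, so $\{(\mathbf{J}_i, L_i)\}_{i=1}^{k} \in \mu^{\sf x}_\mathcal{H}(\mathbf{G}, \emptyset)$, and consequently $\hat{\mathcal{J}} := \{\kappa(\mathbf{J}_i, L_i)\}_{i=1}^{k}$ lies in $\hat{\mu}^{\sf x}_\mathcal{H}(\mathbf{G}, \emptyset)$. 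Each $M_i$ then descends to an $\mathcal{H}$-subdivision in $\mathbf{F} \oplus \cupall \hat{\mathcal{J}}$ by collapsing every subdivided path between branch vertices on its $\mathbf{G}$-side to the corresponding edge of $\kappa(\mathbf{J}_i, L_i)$; disjointness is inherited since the original $M_i$'s were disjoint, yielding a packing of size $k$ there.

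For the backward direction, unpack the hypothesis that $\hat{\mathcal{J}} = \{\kappa(\mathbf{J}_i, L_i)\}_{i=1}^{\sigma}$ arises from some $\{(\mathbf{J}_i, L_i)\} \in \mu^{\sf x}_\mathcal{H}(\mathbf{G}, \emptyset)$, and take a packing $\{N_1, \dots, N_k\}$ of $\mathbf{F} \oplus \cupall \hat{\mathcal{J}}$. I would lift it back to $\mathbf{F}\oplus\mathbf{G}$ by reversing the dissolutions: every edge of $\kappa(\mathbf{J}_i, L_i)$ corresponds to a path in $\mathbf{J}_i$ whose internal vertices were dissolved, and reinserting that path in place of the edge wherever some $N_j$ uses it produces $\mathcal{H}$-subdivisions $N_j'$ inside $\mathbf{F}\oplus\mathbf{G}$. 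The only point requiring a moment's care is disjointness: the reinserted paths belonging to different $\mathbf{J}_i$'s are pairwise $\x$-disjoint because $\{\mathbf{J}_i\}_{i=1}^{\sigma}$ is itself a partial $\x$-$\mathcal{H}$-packing of $\mathbf{G}$, so the $\x$-disjointness of $\{N_j\}_j$ transfers to $\{N_j'\}_j$, giving the required element of $\Pkg_{\x,\mathcal{H}}^{\geq k}(\mathbf{F}\oplus\mathbf{G})$.
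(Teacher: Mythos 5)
Your proof is correct and matches the paper's intent: the paper gives no argument at all, stating only that the lemma ``follows directly from the definition of $\hat{\mu}^{\sf x}_{\mathcal H}$'', and your two directions (restrict-and-compress via $\kappa$, then re-expand dissolved paths) are exactly the definitional unfolding being alluded to. The only nuance worth noting is that disjointness of the lifted subdivisions also uses that distinct paths \emph{within the same} $\mathbf{J}_i$ are internally disjoint, not only that distinct $\mathbf{J}_i$'s are $\x$-disjoint, but this is immediate from $\mathbf{J}_i$ being a partial subdivision.
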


The choice of the definition of the relation $\simeq$ is justified  by
the following lemma. Roughly speaking, it states that we can replace a
$t$-partitioned protrusion of a graph with any other
$\simeq_{\mathcal{H},t}$-equivalent $t$-partitioned protrusion without
changing the covering and packing number of the graph. The reduction
algorithm  that we give after this lemma  relies on this powerful property.

\begin{lemma}[protrusion replacement]\label{l:protrep}
  Let $\mathbf{F}, \mathbf{G}, \mathbf{G'} \in \mathcal{B}_t$ be three
  compatible boundaried graphs such that $\mathbf{G} \simeq_{\mathcal{H},
    t} \mathbf{G'}$. For every $k \in \N$, we have:
  \begin{enumerate}[(i)]
  \item \label{e:c1} $\Pkg_{{\sf
        x},{\cal H}}^{\geq k}(\mathbf{F} \oplus \mathbf{G}) \neq \emptyset \iff
    \Pkg_{{\sf x},{\cal H}}^{\geq k}(\mathbf{F} \oplus \mathbf{G'}) \neq \emptyset$ and
  \item \label{e:c2} $\Cvg_{{\sf x},{\cal H}}^{\leq k}(\mathbf{F} \oplus \mathbf{G}) \neq \emptyset
    \iff \Cvg_{{\sf x},{\cal H}}^{\leq k}(\mathbf{F} \oplus \mathbf{G'}) \neq \emptyset$.
  \end{enumerate}
\end{lemma}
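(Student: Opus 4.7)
My plan is to reduce each of the two equivalences to the combinatorial match between compressed partial packings supplied by the hypothesis $\mathbf{G}\simeq_{\mathcal{H},t}\mathbf{G}'$, and to treat each direction by symmetry. For part (i), the key tool is \autoref{l:diss-pack}, which rewrites the existence of a $k$-packing in $\mathbf{F}\oplus\mathbf{G}$ as the existence of some $\hat{\mathcal{J}}\in\hat{\mu}^{\sf x}_{\mathcal H}(\mathbf{G},\emptyset)$ for which $\mathbf{F}\oplus\cupall\hat{\mathcal{J}}$ admits a $k$-packing. The assumption $\mathbf{G}\simeq_{\mathcal H,t}\mathbf{G}'$, restricted to the $y=0$ component of the folio, yields a bijection between $\hat{\mu}^{\sf x}_{\mathcal H}(\mathbf{G},\emptyset)$ and $\hat{\mu}^{\sf x}_{\mathcal H}(\mathbf{G}',\emptyset)$ pairing elements that are isomorphic as labelled collections of boundaried graphs. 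Since $\oplus$ depends only on the boundary labels, each paired $(\hat{\mathcal{J}},\hat{\mathcal{J}}')$ yields $\mathbf{F}\oplus\cupall\hat{\mathcal{J}}\cong\mathbf{F}\oplus\cupall\hat{\mathcal{J}}'$, so a $k$-packing exists in one iff in the other; applying \autoref{l:diss-pack} on the $\mathbf{G}'$-side closes the equivalence.

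For part (ii), the delicacy is that the folio $\folio_{\mathcal{H},t}$ only records signatures for $y\le t$, whereas the $\mathbf{G}$-part of a covering might be larger. Given a covering $C\subseteq A_{\sf x}(\mathbf{F}\oplus\mathbf{G})$ of size $\le k$, I canonically split it as $C=C_F\cup C_G$ with $C_G\subseteq A_{\sf x}(\mathbf{G})$. My first step is to show that we may assume $|C_G|\le t$: if not, I replace $C_G$ by $\Es(G)$ in the edge case or by $N_G(B)\setminus B$ in the vertex case, each of cardinality at most $t$, and verify that the result is still a covering of $\mathbf{F}\oplus\mathbf{G}$ of size at most $k$. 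This is valid because $\mathbf{G}$ is $\mathcal{H}$-free (by definition of $\simeq_{\mathcal{H},t}$), so every $\mathcal{H}$-subdivision of $\mathbf{F}\oplus\mathbf{G}$ either lives entirely on the $\mathbf{F}$-side (and was hence already hit by $C_F$) or must cross the new separator. With $|C_G|\le t$ in hand, the isomorphism $\sig^{\sf x}_{\mathcal H}(\mathbf{G},|C_G|)\cong\sig^{\sf x}_{\mathcal H}(\mathbf{G}',|C_G|)$ produces $C_{G'}\subseteq A_{\sf x}(\mathbf{G}')$ of size $|C_G|$ with $\hat{\mu}^{\sf x}_{\mathcal H}(\mathbf{G},C_G)\cong\hat{\mu}^{\sf x}_{\mathcal H}(\mathbf{G}',C_{G'})$, and I set $C':=C_F\cup C_{G'}$.

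To verify that $C'$ covers $\mathbf{F}\oplus\mathbf{G}'$ I argue by contradiction: an $\mathcal{H}$-subdivision $M'$ of $\mathbf{F}\oplus\mathbf{G}'$ avoiding $C'$ cannot live entirely on the $\mathbf{F}$-side (otherwise $M'$ would already be an $\mathcal{H}$-subdivision of $\mathbf{F}\oplus\mathbf{G}$ avoiding $C$), hence its trace on the $\mathbf{G}'$-side is a non-trivial partial subdivision whose compression gives a singleton element $\{\hat{\mathbf{J}}'\}\in\hat{\mu}^{\sf x}_{\mathcal H}(\mathbf{G}',C_{G'})$. The isomorphic counterpart $\{\hat{\mathbf{J}}\}\in\hat{\mu}^{\sf x}_{\mathcal H}(\mathbf{G},C_G)$ is realised, by the very definition of $\hat{\mu}$, by some partial subdivision $\mathbf{J}$ in $\mathbf{G}\setminus C_G$; gluing $\mathbf{J}$ with the $\mathbf{F}$-fragment of $M'$ along matching boundary labels produces an $\mathcal{H}$-subdivision of $\mathbf{F}\oplus\mathbf{G}$ avoiding $C$, contradicting that $C$ was a covering. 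The main obstacle is the careful bookkeeping showing that label-respecting isomorphisms of compressed objects in $\hat{\mu}$ translate faithfully into isomorphisms of $\oplus$-assembled graphs (for part (i)) and into valid gluings of partial subdivisions with $\mathbf{F}$-fragments (for part (ii)); this is sound because both $\oplus$ and the compression $\kappa$ retain precisely the boundary-label and subdivided-path information that controls how partial pieces combine into global ones.
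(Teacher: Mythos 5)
Your proposal is correct and follows essentially the same route as the paper's proof: item (i) via \autoref{l:diss-pack} together with the bijection between $\hat{\mu}^{\sf x}_{\mathcal H}(\mathbf{G},\emptyset)$ and $\hat{\mu}^{\sf x}_{\mathcal H}(\mathbf{G}',\emptyset)$, and item (ii) by first shrinking the $\mathbf{G}$-part of the cover to size at most $t$, then transferring it through the signature isomorphism and deriving a contradiction from any surviving subdivision. The only cosmetic difference is that the paper gets $|S|\le t$ from minimality of $C$ whereas you perform the boundary-separator replacement explicitly; both are valid.
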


\begin{proof}
  Notice that, by definition of $\simeq_{\mathcal, t}$, the boundaried
  graphs $\mathbf{G}$ and $\mathbf{G'}$ are $\mathcal{H}$-free.

  \noindent \textit{Proof of item (\ref{e:c1}), ``$\Rightarrow$''.}
  Let $\mathcal{M}$ be an $\x$-$\mathcal{H}$-packing of size at least
  $k$ in $\mathbf{F} \oplus \mathbf{G}$, whose set of branch vertices is $L$. We define
  \begin{eqnarray*}
 {\bf J}_{\mathbf{F}} & = &  (\cupall \mathcal{M}) \cap_{{\bf F} \oplus {\bf
      G}} {\bf F},\\
{\bf J}_{\mathbf{G}} & = &  (\cupall \mathcal{M})
  \cap_{{\bf F} \oplus {\bf G}} {\bf G}, and\\
  \hat{\mathbf{J}}_{\mathbf{G}}  & = & \bigcup_{{M}\in {\cal M}}\kappa(M\cap_{{\bf F} \oplus {\bf G}} {\bf G},
  L\cap V(G)).
  \end{eqnarray*}
   Note that $\hat{\mathbf{J}}_{\mathbf{G}} \in \hat{\mu}^{\sf x}_{\mathcal
      H}(\mathbf{G}, \emptyset)$ and that ${\bf F} \oplus \hat{\mathbf{J}}_{\mathbf{G}} $ has an $\x$-$\mathcal{H}$-packing of size at
    least~$k$ (cf. \autoref{l:diss-pack}). 
    By definition of $\simeq,$ there is a bijection $\psi$ between
    $\hat{\mu}^{\sf x}_{\mathcal H}(\mathbf{G}, \emptyset)$ and
    $\hat{\mu}^{\x}_{\mathcal H}(\mathbf{G'}, \emptyset)$. Let
    $\hat{\mathbf{J}}_{\mathbf{G}}'$ be the image of
    $\hat{\mathbf{J}}_{\mathbf{G}}$ by $\psi.$ 
    Since $\hat{\mathbf{J}}_{\mathbf{G}}'$ and $\hat{\mathbf{J}}_{\mathbf{G}}$
    are isomorphic, ${\bf F} \oplus \hat{\mathbf{J}}_{\mathbf{G}} '$ also has an
    $\x$-$\mathcal{H}$-packing of size at least $k$. By
    \autoref{l:diss-pack}, this implies that such a packing exists
    in ${\bf F} \oplus {\bf G'}$ as well. The direction
    ``$\Leftarrow$'' is symmetric as $\bf G$ and ${\bf G'}$ play the
    same role.

    \noindent \textit{Proof of item (\ref{e:c2}), ``$\Rightarrow$''.} Let $C \subseteq
    A_{\x}(\mathbf{F} \oplus \mathbf{G})$ be a minimum
    $\x$-$\mathcal{H}$-covering of $\bf F\oplus G$ of size at most
    $k$. Let $S = C \cap A_\x({\bf G})$. Since we assume that
    $\mathbf{G}$ is $\mathcal{H}$-free and that $C$ is minimum, we can
    also assume that $|S| \leq t$ (otherwise we could get a smaller covering by taking the $t$ boundary vertices/edges of $\bf G$).
    By our assumption that ${\bf
      G} \simeq_{\mathcal{H}, t} {\bf G'}$, there is an isomorphism
    between $\sig^\x_{\mathcal{H}}({\bf G}, |S|)$ and
    $\sig^\x_{\mathcal{H}}({\bf G'}, |S|)$. Let $S' \subseteq 
    A_\x({\bf G'})$ be a set such that $\hat{\mu}^{\x}_{\mathcal
      H}(\mathbf{G}, S)$ is sent to $\hat{\mu}^{\x}_{\mathcal H}(\mathbf{G'}, S')$ by this  
    isomorphism.
    Then observe that every partial packing $\mathcal{J}'$ of
    ${\bf G}' \setminus S'$,
    such that $\left (\mathbf{F} \setminus C \right ) \oplus
    (\cupall \mathcal{J}')$ has an $\mathcal{H}$-subdivision, can be translated into a
    partial packing $\mathcal{J}$ of
    ${\bf G} \setminus S$
    such that $\left (\mathbf{F} \setminus C \right )\oplus
    (\cupall\mathcal{J})$ also has such a subdivision, in the same way as in the
    proof of item (\ref{e:c1}) above. As
    $C$ is a cover, this would lead to contradiction.
    Therefore ${\mu}^{\x}_{\mathcal H}({\bf G}, S)$ does not contain such a partial packing. As a
    consequence, $\left ( C \cap A_\x({\bf F})\right ) \cup S'$ is a covering of ${\bf F}
    \oplus {\bf G'}$ of size at most $k$.
    As in the previous case, the proof of direction ``$\Leftarrow$''
    comes from the symmetry in the statement.
\end{proof}

\autoref{l:protrep} can be rewritten as follows.

\begin{corollary}\label{c:equiv}
  Under the assumptions of \autoref{l:protrep}, we have $\xpack({\bf F}
  \oplus {\bf G}) = \xpack({\bf F} \oplus {\bf G'})$ and $\xcover({\bf F} \oplus {\bf G}) =
  \xcover({\bf F} \oplus {\bf G'}).$
\end{corollary}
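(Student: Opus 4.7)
The plan is to observe that this corollary is essentially a restatement of \autoref{l:protrep} in terms of the numerical invariants $\xpack$ and $\xcover$, rather than the emptiness of the sets $\Pkg_{\x,\mathcal{H}}^{\geq k}$ and $\Cvg_{\x,\mathcal{H}}^{\leq k}$, so the proof amounts to unfolding the definitions of these invariants and applying the lemma pointwise in $k$.

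More concretely, for the packing equality, I would recall that
\[
\xpack_\mathcal{H}(\mathbf{F}\oplus\mathbf{G}) \;=\; \max\bigl\{k \in \N \,\big|\, \Pkg_{\x,\mathcal{H}}^{\geq k}(\mathbf{F}\oplus\mathbf{G}) \neq \emptyset\bigr\},
\]
and likewise for $\mathbf{F}\oplus\mathbf{G'}$. By \autoref{l:protrep}(\ref{e:c1}), for every $k\in\N$ the set $\Pkg_{\x,\mathcal{H}}^{\geq k}(\mathbf{F}\oplus\mathbf{G})$ is nonempty if and only if $\Pkg_{\x,\mathcal{H}}^{\geq k}(\mathbf{F}\oplus\mathbf{G'})$ is nonempty; hence the two sets of admissible values of $k$ coincide and so do their maxima, giving $\xpack_\mathcal{H}(\mathbf{F}\oplus\mathbf{G})=\xpack_\mathcal{H}(\mathbf{F}\oplus\mathbf{G'})$.

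The covering equality is entirely analogous: using
\[
\xcover_\mathcal{H}(\mathbf{F}\oplus\mathbf{G}) \;=\; \min\bigl\{k \in \N \,\big|\, \Cvg_{\x,\mathcal{H}}^{\leq k}(\mathbf{F}\oplus\mathbf{G}) \neq \emptyset\bigr\},
\]
together with \autoref{l:protrep}(\ref{e:c2}), the two sets of admissible $k$ coincide and so do their minima.

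There is no real obstacle here, since the corollary is a direct reformulation; the substantive content was already carried out in \autoref{l:protrep}, which was the argument that partial packings and coverings on $\mathbf{G}$ can be transported to $\mathbf{G'}$ (and back) via the bijection witnessing $\mathbf{G}\simeq_{\mathcal{H},t}\mathbf{G'}$. The only thing worth being a little careful about is to note that the suprema and infima are actually attained (so ``max'' and ``min'' are well defined): on the packing side this is because packings are finite and of bounded size, and on the covering side because $A_\x(\mathbf{F}\oplus\mathbf{G})$ is itself a covering, so the set whose minimum is taken is always nonempty.
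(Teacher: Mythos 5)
Your proof is correct and matches the paper's intent exactly: the paper states this corollary as a direct rewriting of \autoref{l:protrep} and gives no further argument, and your unfolding of the definitions of $\xpack$ and $\xcover$ together with the pointwise application of the lemma for each $k$ is precisely the routine verification being elided. The remark about the max and min being attained is a reasonable extra care but is already covered by the paper's observation that the empty packing and the full vertex/edge set always witness nonemptiness.
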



Recall that $\funref{f:eqclass}(h,
t)$ denotes the number of equivalence classes of
$\simeq_{\mathcal{H},t}$ among boundaried graphs of~$\mathcal{B}_t$.
For every $h,t\in \N$, let
\[\newfun{f:maxdeg}(h,t) = \funref{f:eqclass}(h,
t) \cdot  \funref{f:intersect}(h, t)\quad \text{and}\quad \funref{f:newred}(h, t) = 2ht^3\cdot \left (\funref{f:maxdeg}(h, t) \right )^{\funref{f:eqclass}(h, t)+1}.\]
Let us give some intuition about these definitions. The first remark
is an application of the pigeonhole principle.

\begin{observation}\label{nr1}
  In a collection of more than $\funref{f:maxdeg}(h,t)$
  $\mathcal{H}$-free boundaried graphs of $\mathcal{B}_t$, there
  is one that is equivalent (w.r.t.\ $\simeq_{\mathcal{H}, t}$) to
  $\funref{f:intersect}(h, t)$ other graphs of the collection.
\end{observation}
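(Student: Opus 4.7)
The plan is to invoke the pigeonhole principle directly, using the finite bound on the number of equivalence classes of $\simeq_{\mathcal{H},t}$ that was just established.

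First, I would recall that by \autoref{c:finite-ec}, the relation $\simeq_{\mathcal{H},t}$ partitions the $\mathcal{H}$-free boundaried graphs of $\mathcal{B}_t$ into at most $\funref{f:eqclass}(h,t)$ equivalence classes. Let $\mathcal{F}$ denote the given collection, with $|\mathcal{F}| > \funref{f:maxdeg}(h,t) = \funref{f:eqclass}(h,t)\cdot \funref{f:intersect}(h,t)$, and consider the function that maps each member of $\mathcal{F}$ to its $\simeq_{\mathcal{H},t}$-class. Since the codomain has cardinality at most $\funref{f:eqclass}(h,t)$, the pigeonhole principle guarantees that some class $\mathcal{C}$ receives strictly more than $\funref{f:intersect}(h,t)$ members of $\mathcal{F}$, so $|\mathcal{F}\cap \mathcal{C}| \geq \funref{f:intersect}(h,t) + 1$.

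Finally, I would conclude by picking any element $\mathbf{G} \in \mathcal{F}\cap \mathcal{C}$; the remaining at least $\funref{f:intersect}(h,t)$ graphs in $\mathcal{F}\cap \mathcal{C}$ are (by definition of being in the same equivalence class) all $\simeq_{\mathcal{H},t}$-equivalent to $\mathbf{G}$, which is exactly the conclusion of the observation.

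There is no real obstacle here: the statement was set up so that the constants $\funref{f:maxdeg}$ and $\funref{f:eqclass}$ fit together precisely to make the pigeonhole bound tight. The only mild point of care is the off-by-one in the phrasing ``$\funref{f:intersect}(h,t)$ \emph{other} graphs,'' which is why we need strict inequality $|\mathcal{F}| > \funref{f:maxdeg}(h,t)$ rather than $\geq$, producing a class of size at least $\funref{f:intersect}(h,t)+1$ and hence $\funref{f:intersect}(h,t)$ neighbors for the distinguished element.
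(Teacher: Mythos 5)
Your proof is correct and is exactly the argument the paper intends: the text introduces this observation as ``an application of the pigeonhole principle'' and relies on \autoref{c:finite-ec} for the bound $\funref{f:eqclass}(h,t)$ on the number of classes, just as you do. Your handling of the off-by-one (strict inequality yielding a class of size at least $\funref{f:intersect}(h,t)+1$, hence $\funref{f:intersect}(h,t)$ \emph{other} equivalent graphs) is the only point requiring care, and you got it right.
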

\begin{observation}\label{nr2}
If $(T, s, \mathcal{X})$ is a rooted tree-partition of a graph $G$ with the
following properties:
\begin{itemize}
\item $(T, \mathcal{X})$ has width at most $t$;
\item $T$ has height at most $\funref{f:eqclass}(h,
t)$; and
\item $T$ has degree at most $\funref{f:maxdeg}(h, t)+1$,
\end{itemize}
then $G$ has at most
$\funref{f:newred}(h,t)$ vertices, and every $\mathcal{H}$-subdivision of $G$ has at most $\funref{f:newred}(h,t)$~edges.
\end{observation}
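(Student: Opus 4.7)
The statement is essentially a counting argument with no genuinely hard step: once $n(T)$ is controlled, so is $n(G)$, and then so is the size of any $\mathcal{H}$-subdivision of $G$.

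First I would bound $n(T)$. Write $d=\funref{f:maxdeg}(h,t)$ and $H=\funref{f:eqclass}(h,t)$ for brevity. By the degree hypothesis, the root of $T$ has at most $d+1$ children and every other vertex has at most $d$ children (its total degree being at most $d+1$, with one of its neighbors being its parent). Summing level by level, a geometric-sum computation yields
\[
n(T)\ \le\ 1+\sum_{i=1}^{H}(d+1)\cdot d^{\,i-1}\ \le\ 3\,d^{H+1}.
\]

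Next, the width hypothesis forces each bag of $\mathcal X$ to contain at most $t$ vertices, so $n(G)\le t\cdot n(T)\le 3t\,d^{H+1}$. For the second claim, let $M$ be any $\mathcal H$-subdivision of $G$, say a subdivision of some graph $\hat M$ isomorphic to a member of $\mathcal H$, so $m(\hat M)\le h$. The paths of $M$ replacing the edges of $\hat M$ are internally vertex-disjoint, hence a standard double-count gives $m(M)=m(\hat M)+|V(M)|-|V(\hat M)|\le h+n(G)$. Combining, both $n(G)$ and $m(M)$ are bounded by $h+3t\,d^{H+1}$, which is in turn bounded by $\funref{f:newred}(h,t)=2ht^3\,d^{H+1}$.

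There is no real obstacle in the argument; the only thing to check is that the slack factor $2ht^3$ in the definition of $\funref{f:newred}$ absorbs the constants $3t$ and $h$ coming from the two estimates, which is a routine verification. The definitions of $\funref{f:newred}$, $\funref{f:maxdeg}$, and $\funref{f:eqclass}$ were crafted precisely so that this geometric sum closes, so the whole observation reduces to the tree-counting step above.
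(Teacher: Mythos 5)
Your proposal is correct, and the first half (level-by-level count of $n(T)$, then $n(G)\le t\cdot n(T)$) is exactly what the paper does. Where you diverge is the bound on the edges of an $\mathcal{H}$-subdivision $M$. The paper first bounds the number of distinct multiedges of $G$ — at most $t(t-1)/2$ inside each bag plus $|E_f|\le t$ across each tree edge, so at most $(t+t^2)\cdot n(T)$ in total — and then observes that a subdivision traverses any given multiedge at most $h$ times, giving the factor $h$ in $\funref{f:newred}$. You instead use the identity $m(M)=m(\hat M)+|V(M)|-|V(\hat M)|\le h+n(G)$, which follows from the internal disjointness of the subdivision paths and bypasses any edge-counting in $G$ altogether. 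Your route is more elementary and yields a tighter bound ($h+n(G)$ rather than $h\cdot(t+t^2)\cdot n(T)$); the paper's route has the mild advantage of also recording a bound on $m(G)$ itself, which is the form of the estimate reused in the proof of Lemma~\ref{l:lpth} (where the multiplicity-truncation trick is applied again). Your closing remark about the slack factor is the right thing to check: one needs $3t\le 2ht^3$ and $h+3t\,d^{H+1}\le 2ht^3 d^{H+1}$, which hold for all parameter values arising in the paper (where $t=2r-2\ge 2$ and $h=m(\Theta_r)\ge 2$); the paper is no more careful about these degenerate small cases than you are.
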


\begin{proof}
The above assumptions imply that $T$ has at most
$\left (\funref{f:maxdeg}(h, t) \right
)^{\funref{f:eqclass}(h, t)+1}$ vertices.
Every bag of $(T, \mathcal{X})$ contains at most $t$ vertices of $G$,
therefore $G$ has at most $\left (\funref{f:maxdeg}(h, t)
\right )^{\funref{f:eqclass}(h, t)+1} \cdot t \leq
\funref{f:newred}(h,t)$ vertices.
Also, every bag induces a subgraph with at most $t(t-1)/2$ multiedges (i.e.\ without counting multiplicities), and for every edge $f$ of
$T$ we have $|E_f| \leq t$, hence every bag contributes for at most $t^2 + t$ multiedges. Therefore $G$ has at most $(t+t^2)\left (\funref{f:maxdeg}(h, t) \right
)^{\funref{f:eqclass}(h, t)+1}$ multiedges.
Now, observe that an edge of $G$ is used at most $h$ times by an $\mathcal{H}$-subdivision (if it is a multiple edge), since every path connecting two branch vertices of a subdivision uses a given edge at most once. We deduce that an $\mathcal{H}$-subdivision of $G$ contains at most $h \cdot (t+t^2)\left (\funref{f:maxdeg}(h, t) \right
)^{\funref{f:eqclass}(h, t)+1} \leq \funref{f:newred}(h,t)$~edges.
\end{proof}

The two next lemmas are the main tools used in the proof
of~\autoref{l:new_reduce}. Under different conditions, they provide either a small
subdivision, or a reduced graph. \autoref{l:bdg} considers
the case where a vertex of $T$ has high degree, whereas
\autoref{l:lpth} deals with the situation where $T$ has a long path.

\begin{lemma}\label{l:bdg}
 Let ${\bf P} = ({\bf G}, (\mathcal{X}, T, s))$ be a $t$-partitioned
  protrusion of a graph $W$, and let $u\in V(T)$ be a vertex with
  more than $\funref{f:maxdeg}(h, t)$ children such that
  for every $v \in \children_{(T,s)}(u)$,  we have $m({\bf G}_v) \leq
  \funref{f:eqclass}(h, t)$. Then, either
  \begin{itemize}
  \item $W$ contains  an $\mathcal{H}$-subdivision $M$ with at most $\funref{f:eqclass}(h, t)$ edges or
  \item there exists a graph $W'$ such that
 \begin{align*}
  \xpack_\mathcal{H}(W') &= \xpack_\mathcal{H}(W),\\
  \xcover_\mathcal{H}(W') &= \xcover_\mathcal{H}(W),\ \text{and}\\
    n(W') &< n(W).  
  \end{align*}  \end{itemize}
  Moreover, there is an algorithm that, given such ${\bf P},W,$ and $u$,  returns
  either $M$ or $W'$ as above in  $O_{h, t}(1)$ steps.
\end{lemma}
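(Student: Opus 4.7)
The strategy is a case analysis, followed by a protrusion replacement. First I would inspect each child $v \in \children_{(T,s)}(u)$ and check whether ${\bf G}_v$ is $\mathcal{H}$-free. Since $m({\bf G}_v) \leq \funref{f:eqclass}(h,t)$ by hypothesis, if some ${\bf G}_v$ contains an $\mathcal{H}$-subdivision then it has at most $\funref{f:eqclass}(h,t)$ edges, and as $G_v \setminus B_v$ is a subgraph of $W$, this yields the required small subdivision~$M$.

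Otherwise every ${\bf G}_v$ is $\mathcal{H}$-free. By \autoref{c:finite-ec}, the relation $\simeq_{\mathcal{H},t}$ has at most $\funref{f:eqclass}(h,t)$ equivalence classes on $\mathcal{H}$-free $t$-boundaried graphs, and because $u$ has more than $\funref{f:maxdeg}(h,t) = \funref{f:eqclass}(h,t) \cdot \funref{f:intersect}(h,t)$ children, \autoref{nr1} produces children $v_1, \dots, v_{r+1}$ with $r = \funref{f:intersect}(h,t)$ whose boundaried graphs are pairwise $\simeq_{\mathcal{H},t}$-equivalent. I would then set $W' = W \setminus V_{v_1}$; since $X_{v_1}$ is nonempty, $n(W') < n(W)$.

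To establish $\xpack_\mathcal{H}(W') = \xpack_\mathcal{H}(W)$ and $\xcover_\mathcal{H}(W') = \xcover_\mathcal{H}(W)$, I would decompose $W$ along the edges leaving $V_u$, writing $W = {\bf F} \oplus {\bf G}_u$ for an appropriate boundaried graph ${\bf F}$, and correspondingly $W' = {\bf F} \oplus \tilde{\bf G}_u$, where $\tilde{\bf G}_u$ is obtained from ${\bf G}_u$ by deleting $V_{v_1}$. By \autoref{c:equiv} it then suffices to prove ${\bf G}_u \simeq_{\mathcal{H},t} \tilde{\bf G}_u$; compatibility is immediate. The heart of the matter is matching the two folios position-wise. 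Given any $\mathcal{H}$-free partial $\x$-$\mathcal{H}$-packing $\mathcal{P}$ realising an element of $\hat{\mu}^\x_\mathcal{H}({\bf G}_u, S)$, \autoref{l:internal}(b) applied to the partitioned protrusion $({\bf G}_u, \mathcal{D}_u)$ shows that $\mathcal{P}$ intersects at most $r$ graphs of $\mathcal{G}_u$, so at least one branch $v_j$ with $j \geq 2$ is untouched by $\mathcal{P}$; via the isomorphism between the folios of ${\bf G}_{v_1}$ and ${\bf G}_{v_j}$, any part of $\mathcal{P}$ living inside the branch at $v_1$ can be rerouted through $v_j$, yielding an isomorphic element of $\hat{\mu}^\x_\mathcal{H}(\tilde{\bf G}_u, S')$ for an appropriately transposed set $S'$ of the same size. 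The reverse direction is trivial because every partial packing in $\tilde{\bf G}_u$ is also a partial packing in ${\bf G}_u$; the analogous swap argument governs the covering part via the scheme of \autoref{l:protrep}(ii).

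The main obstacle is ensuring that ${\bf G}_u$ and $\tilde{\bf G}_u$ are themselves $\mathcal{H}$-free, a prerequisite for $\simeq_{\mathcal{H},t}$: a full $\mathcal{H}$-subdivision of $W[V_u]$ could span $X_u$ together with several children's subtrees, escaping detection inside any single ${\bf G}_v$. To handle this I would exploit the fact that any $\mathcal{H}$-subdivision has at most $h$ branch vertices and at most $h$ linking paths, and that each path can enter a given child's subtree at most $|X_u| \leq t$ times (since internal path vertices are distinct and re-entry must occur via $X_u$); together with the bound $m({\bf G}_v) \leq \funref{f:eqclass}(h,t)$ per child, this forces any such spanning subdivision to lie inside a subgraph of $O_{h,t}(1)$ edges, from which an $\mathcal{H}$-subdivision of size $O_{h,t}(1)$ can be extracted and returned as $M$. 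Finally, the $O_{h,t}(1)$ running time follows because the algorithm only processes the $\funref{f:maxdeg}(h,t)+1$ candidate children of $u$, each of bounded size, and all subroutines (subdivision testing, equivalence-class identification, folio comparison) are bounded in terms of $h$ and $t$ alone.
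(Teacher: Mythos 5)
Your skeleton coincides with the paper's: test each child for an $\mathcal{H}$-subdivision (which, having at most $m({\bf G}_v)\leq \funref{f:eqclass}(h,t)$ edges, can be returned as $M$), and otherwise use \autoref{nr1} to obtain $\funref{f:intersect}(h,t)+1$ children whose boundaried graphs are pairwise $\simeq_{\mathcal{H},t}$-equivalent and delete one of them. Where you diverge is in the justification of $\xpack_\mathcal{H}(W')=\xpack_\mathcal{H}(W)$ and $\xcover_\mathcal{H}(W')=\xcover_\mathcal{H}(W)$: the paper works directly with the packings/covers of $W$ — assuming ${\bf G}_u$ is $\mathcal{H}$-free, the trace of any packing on ${\bf G}_u$ is an $\mathcal{H}$-free partial packing, meets at most $\funref{f:intersect}(h,t)$ children by \autoref{l:internal}(\ref{e:fb}), hence misses one equivalent sibling through which the part inside the deleted child can be re-realized — whereas you reduce to \autoref{c:equiv} by claiming the stronger statement ${\bf G}_u\simeq_{\mathcal{H},t}\tilde{\bf G}_u$.

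That stronger claim is where the gap lies. The relation $\simeq_{\mathcal{H},t}$ requires the whole folios to agree, i.e.\ the signatures $\sig^{\sf x}_{\mathcal{H}}(\cdot,y)$ for every $y\leq t$, so you must match $\hat{\mu}^{\sf x}_{\mathcal{H}}({\bf G}_u,S)$ against $\hat{\mu}^{\sf x}_{\mathcal{H}}(\tilde{\bf G}_u,S')$ for arbitrary deletion sets, and produce an actual bijection between signature elements, not just one inclusion. Your rerouting only secures a sibling $v_j$ avoided by the partial packing; it may be hit by $S$ (or by the given $S'$ in the converse direction), and then the child-level equivalence ${\bf G}_{v_1}\simeq_{\mathcal{H},t}{\bf G}_{v_j}$ does not let you transplant a trace living in ${\bf G}_{v_1}$ (with empty deletion set there) into ${\bf G}_{v_j}\setminus\bigl(S\cap A_{\sf x}({\bf G}_{v_j})\bigr)$: the signature isomorphism pairs that restriction of $S$ with \emph{some} set of ${\bf G}_{v_1}$, not with the empty set. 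Counting, a partial packing that uses $v_1$ can touch $\funref{f:intersect}(h,t)-1$ of the siblings and $S$ can touch $t$ more, so the $\funref{f:intersect}(h,t)+1$ equivalent children guaranteed by the hypothesis (via \autoref{nr1}) need not leave any sibling free of both; hence the equivalence ${\bf G}_u\simeq_{\mathcal{H},t}\tilde{\bf G}_u$ is not established by the argument as written. The paper never needs this: it only ever exchanges around the one packing (or subdivision) under consideration, for which a sibling avoided by that single object suffices. A smaller point: in your ``spanning subdivision'' case the returned $M$ can have on the order of $h\,(t+1)\,\funref{f:eqclass}(h,t)$ edges, exceeding the bound $\funref{f:eqclass}(h,t)$ claimed in the statement (the paper instead assumes ${\bf G}_u$ is $\mathcal{H}$-free, so its $M$ lies inside a single child); this is harmless for the later applications but does not prove the lemma as stated.
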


\begin{proof}
  As $u$ has more than $\funref{f:maxdeg}(h,t)$ children,
  it contains a collection of $d = \funref{f:intersect}(h,
  t) +1$ children $v_1, \dots, v_d$, such that ${\bf G}_{v_1}
  \simeq_{\mathcal{H}, t} {\bf G}_{v_i}$ for every $i \in
  \intv{2}{d}$ (by \autoref{nr1}). 
Let us now assume that ${\bf G}_u$ is $\mathcal{H}$-free.  Since every $\x$-$\mathcal{H}$-packing of $W$ will intersect at most
  $\funref{f:intersect}(h, t)$ bags of children of $u$ (by \autoref{l:internal}(\ref{e:fb})), we can safely
  delete one of the $\funref{f:intersect}(h, t) +1$
  equivalent subgraphs mentioned above. We use the following algorithm
  in order to find such a bag to delete or a small $\mathcal{H}$-subdivision.
  \begin{enumerate}
  \item Let $A$ be an array of $\funref{f:eqclass}(h, t)$ counters initialized to
    0, each corresponding to a distinct equivalence class
    of~$\simeq_{\mathcal{H}, t}$;
  \item Pick a vertex $v \in \children_{(T,s)}(u)$ that has not been
    considered yet;\label{e:pick}
  \item If $G_v$ contains an $\mathcal{H}$-subdivision $\mathcal{M}$, then
    return $\mathcal{M}$ and exit;\label{e:modcheck}
  \item Otherwise, increment the counter of $A$ corresponding to the
    equivalence class of ${\bf G}_u$ by one;\label{e:folcomp}
  \item \label{e:end} If this counter reaches $d+1$, return $v$, otherwise go back to
    Line~\ref{e:pick}.
  \end{enumerate}

  Notice that the subdivision returned in Line~\ref{e:modcheck} has size at
  most $t\cdot \funref{f:eqclass}(h, t)$ (as we assume that $m(G_v) \leq
  \funref{f:eqclass}(h, t)$) and that the vertex returned in Line~\ref{e:end} has the desired property. By
  \autoref{c:finite-ec}, the relation
  $\simeq_{\mathcal{H}, t}$ has at most $\funref{f:eqclass}(h, t)$
  equivalence classes, thus the main loop will be run at most $\funref{f:eqclass}(h, t)
  \cdot \funref{f:intersect}(h, t) + 1$ times (by the
  pigeonhole principle). Eventually, Lines~\ref{e:modcheck}
  and~\ref{e:folcomp} can be performed in $O_{h, t}(1)$-time given
  that $G_v$ has size bounded by a function of~$h$ and~$t$.

    In the end, we return $W' = W \setminus V(G_v)$ if the algorithm
    outputs $v$ and $\mathcal{M}$ otherwise.
\end{proof}

\begin{lemma}\label{l:lpth}
  There is an algorithm that, given a $t$-partitioned protrusion ${\bf P} = ({\bf G}, (\mathcal{X},
  T, s))$ of a graph $W$ and a vertex $u \in
  V(T)$ such that 
  \begin{itemize}
  \item $u$ has height exactly $\funref{f:eqclass}(h,t)$ in~$(T,s)$,
  \item the graph of ${\bf G}_u$ is $\mathcal{H}$-free, and
  \item $T_u$ has maximum degree at most $\funref{f:maxdeg}(h, t)$,
  \end{itemize}
  outputs a graph $W'$ such that
  \begin{align*}
  \xpack_\mathcal{H}(W') &= \xpack_\mathcal{H}(W),\\
  \xcover_\mathcal{H}(W') &= \xcover_\mathcal{H}(W),\ \text{and}\\
    n(W') &< n(W).  
  \end{align*}
  Moreover, this algorithm runs in $O_{h, t}(1)$-time.
\end{lemma}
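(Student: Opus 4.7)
The plan is a pigeonhole argument along a long descending path in $T_u$, followed by a protrusion replacement via \autoref{c:equiv}. Since $u$ has height exactly $\funref{f:eqclass}(h,t)$ in $(T,s)$, there exists a path $u = u_0, u_1, \dots, u_L$ in $T_u$ with $L = \funref{f:eqclass}(h,t)$ going down to a leaf. For each $i \in \intv{0}{L}$, take the sub-protrusion ${\bf P}_{u_i} = ({\bf G}_{u_i}, \mathcal{D}_{u_i})$ as defined in \autoref{boundgraphs}. Because ${\bf G}_u$ is $\mathcal{H}$-free by hypothesis and each ${\bf G}_{u_i}$ is a boundaried subgraph of ${\bf G}_u$, every ${\bf G}_{u_i}$ is $\mathcal{H}$-free as well, and each is in $\mathcal{B}_t$.

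By \autoref{c:finite-ec}, there are at most $\funref{f:eqclass}(h,t)$ equivalence classes of $\simeq_{\mathcal{H},t}$ among $\mathcal{H}$-free boundaried graphs of $\mathcal{B}_t$. Since the labelings of ${\bf G}_{u_v}$ for $v \neq s$ are arbitrary (cf.\ \autoref{boundgraphs}), we are free to choose each labeling so as to maximize the chance of compatibility; in particular, since each boundary has size at most $t$, this amounts to at most $O_t(1)$ choices per vertex. Collecting together the $L+1 = \funref{f:eqclass}(h,t) + 1$ boundaried graphs ${\bf G}_{u_0}, \dots, {\bf G}_{u_L}$, the pigeonhole principle produces indices $i < j$ such that, after a suitable relabeling of ${\bf G}_{u_j}$, we have ${\bf G}_{u_i} \simeq_{\mathcal{H}, t} {\bf G}_{u_j}$.

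Now decompose $W = {\bf F} \oplus {\bf G}_{u_i}$ where ${\bf F} = {\bf G}_{V_{u_i}^{\sf c}}$ arises from the $V_{u_i}$-splitting of $W$, chosen so that its boundary labeling agrees with the labeling of ${\bf G}_{u_i}$. After relabeling ${\bf G}_{u_j}$ to be compatible with ${\bf F}$ (which is possible precisely because ${\bf G}_{u_j}$ is compatible with ${\bf G}_{u_i}$), define $W' = {\bf F} \oplus {\bf G}_{u_j}$. Applying \autoref{c:equiv} yields
\[
\xpack_{\mathcal{H}}(W') = \xpack_{\mathcal{H}}(W) \quad \text{and} \quad \xcover_{\mathcal{H}}(W') = \xcover_{\mathcal{H}}(W).
\]
Since $u_j$ is a strict descendant of $u_i$ and the bags of a tree-partition are non-empty, $V_{u_j} \subsetneq V_{u_i}$, hence $n({\bf G}_{u_j}) < n({\bf G}_{u_i})$ and consequently $n(W') < n(W)$.

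For the running time, note that $T_u$ has height $L$ and degree at most $\funref{f:maxdeg}(h,t)$, so by \autoref{nr2} the graph $G_u$ has $O_{h,t}(1)$ vertices and edges; in particular each ${\bf G}_{u_i}$ has size $O_{h,t}(1)$. Computing each folio and checking equivalence between the $O_{h,t}(1)$ pairs therefore takes $O_{h,t}(1)$ time, and constructing $W'$ takes $O_{h,t}(1)$ additional time. The main subtlety I expect is the bookkeeping around compatibility: the relation $\simeq_{\mathcal{H}, t}$ is only defined between compatible boundaried graphs, so we must exploit the freedom in labeling ${\bf G}_{u_j}$ to make the $\oplus$-replacement well-defined. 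This is a minor point because the boundaries involved have size at most $t$, so the number of admissible labelings is bounded by a function of $t$ alone and can be absorbed into the constant in the pigeonhole bound (equivalently, into $\funref{f:eqclass}(h,t)$).
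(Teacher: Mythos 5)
Your proposal is correct and follows essentially the same route as the paper: find a root-to-leaf path of $\funref{f:eqclass}(h,t)+1$ vertices in $T_u$, apply the pigeonhole principle to get two $\simeq_{\mathcal{H},t}$-equivalent boundaried graphs ${\bf G}_{u_i}$ and ${\bf G}_{u_j}$ with $u_j$ a descendant of $u_i$, and swap the larger for the smaller via \autoref{c:equiv}. Your explicit handling of the labeling/compatibility bookkeeping is a point the paper glosses over, but it does not change the argument.
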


\begin{proof}
As in the proof of \autoref{nr2}, we use the fact that every $\mathcal{H}$-subdivision of ${\bf G}_u$ uses every (multi)edge at most $h$ times. A consequence is that the boundaried subgraph of ${\bf G}_u$ obtained by setting the multiplicity of every edge $e$ to the minimum of $h$ and the multiplicity of $e$ contains an $\mathcal{H}$-subdivision iff ${\bf G}_u$ does.
As the number of vertices and edges of this subgraph is bounded by a function of $h$ and $t$, we can therefore check in $O_{h,t}(1)$-time if ${\bf G}_u$ contains an $\mathcal{H}$-subdivision. If one is found, it has at most $\funref{f:newred}(h,t)$ edges (\autoref{nr2}) and we are done.

Let us now consider the case where ${\bf G}_u$ is $\mathcal{H}$-free.
  By definition of the vertex $u$, there is a path of
  $\funref{f:eqclass}(h,t) + 1$ vertices from a leaf of $T_u$ to 
  $u$. Let us arbitrarily choose, for every vertex $v$ of this path, a
  $V_v$-splitting $({\bf G}_{G_v}, {\bf G}_{G_v^c})$ of~$\bf G$.
  By definition of $\funref{f:eqclass}(h,t)$ (the number of equivalence classes in $\simeq_{\mathcal{H},t}$ in $\mathcal{B}_t$), there
  are two distinct vertices $v,w$ on this path such that ${\bf G}_v
  \simeq_{\mathcal{H}, t} {\bf G}_w$.
  As mentioned above, the number
  edges of ${\bf G}_u$ is bounded by a function of $h$ and $t$, hence finding these two 
  vertices can be done in $O_{h, t}(1)$-time.  
  Let us assume
  without loss of generality that $w$ is a descendant of $v$.
  Let ${\bf H}$ be the boundaried graph such that $W = {\bf H}
  \oplus {\bf G}_{G_v}$ and let $W'  = {\bf H} \oplus {\bf
    G}_{G_w}$. By \autoref{c:equiv}, we have
  $\xpack_\mathcal{H}(W') = \xpack_\mathcal{H}(W)$ and
  $\xcover_\mathcal{H}(W') = \xcover_\mathcal{H}(W)$. Furthermore, the graph $W'$
  is clearly smaller than $W$.
\end{proof}

We are now ready to prove \autoref{l:new_reduce}.

\begin{proof}[Proof of \autoref{l:new_reduce}]
  Observe that since $n(G) > \funref{f:newred}(h, t)$ and each bag of $(\mathcal{X}, T, s)$ contains at
most $t$ vertices, we have
\[n(T) > \funref{f:newred}(h, t)/t = \left (\funref{f:eqclass}(h, t) \cdot 
  \funref{f:intersect}(h, t) \right
)^{\funref{f:eqclass}(h,t)}.\]
Therefore, $T$ has either diameter more than $\funref{f:eqclass}(h,t)$ or a vertex of
degree more than $\funref{f:eqclass}(h, t) \cdot 
\funref{f:intersect}(h, t)$.

Let us consider the following procedure.
\begin{enumerate}
\item By a DFS on $(T,s)$, compute the height of each vertex of $T$
  and find (if it exists) a vertex $v$ of degree more than $\funref{f:maxdeg}(h, t)+1$ and height at most $\funref{f:eqclass}(h,t)-1$ that
  has minimum height.\label{e:bfs0}
\item If such a vertex $v$ is found, then apply the algorithm of
  \autoref{l:bdg} on ${\bf P}$ and~$v$, and return the obtained~result.
\item Otherwise, find a vertex $u$ of height exactly 
  $\funref{f:eqclass}(h,t)$ in $(T,s)$ and then apply the algorithm of
  \autoref{l:lpth} on $\bf P$ and $(T_u, u)$ and return the obtained result.\label{hei}
\end{enumerate}
  Observe that since $n(G) > \funref{f:newred}(h, t)$,
  \autoref{nr2} implies that either $T$ has diameter more
  than $\funref{f:eqclass}(h,t)$, or it contains a vertex of
degree more than $\funref{f:maxdeg}(h, t)+1$. Therefore, the
vertex $u$ of line~\ref{hei} always exists in the case where
no vertex of high degree is found in line~\ref{e:bfs0}.
The correctness of this algorithm follows from 
\autoref{l:bdg} and \autoref{l:lpth}. The DFS done in the first
step takes time $O(n(T))$ and the rest of the algorithm takes time
$O_{h, t}(1)$ according to the aforementioned lemmata.

The fact that a ${\sf x}$-{${\cal H}$-packing}
of $W'$ can be lifted to an equal size ${\sf x}$-{${\cal H}$-packing}
of $W$ follows easily when \autoref{l:bdg} is applied, as then $W'$ is a subgraph of $W$.
In the case that  \autoref{l:lpth} is applied, the new ${\sf x}$-{${\cal H}$-packing}
is obtained because of \autoref{c:equiv} and the supporting protrusion replacement \autoref{l:protrep}, whose proof 
indicates which part of the ${\sf x}$-{${\cal H}$-packing} of $W'$
should be replaced with what in order to obtain an equal size  ${\sf x}$-{${\cal H}$-packing} of~$W$.
\end{proof}
%
%

\section{From the \texorpdfstring{Erdős-Pósa}{Erdös-Posa property} property to approximation}
\label{hki6yqpo9}

For the purposes of this section we define $\Theta_{r}={\sf ex}(\theta_{r})$. 
We need the following that is one of the main results in~\cite[Theorem~3.2]{ChatzidimitriouRST15mino}.

\begin{proposition}
\label{rk6ter}
There is an algorithm that, with input three positive integers $r\geq 2, w, z$ and a connected $m$-edge graph $W$, where $m\geq z>r\geq 2$,
outputs one of the following:
\begin{itemize} 
\item a $\Theta_{r}$-subdivision of $W$ with at most $z$ edges, 
\item a  $(2r-2)$-partitioned protrusion $({\bf G},{\cal D})$ of $W$, 
where ${\bf G}=(G,B,\lambda)$ and such that $G$ is a connected graph 
and $n({\bf G})>w,$ or
\item an $H$-minor model of $W$ for some graph $H$ with $\delta(H)\geq \frac{1}{r-1}2^{\frac{z-5r}{4r(2w+1)}}$,
\end{itemize}
in $O_{r}(m)$ steps.
\end{proposition}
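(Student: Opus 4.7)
\emph{Proof proposal.} The plan is to adapt a \emph{win--win} recursive decomposition strategy that simultaneously searches for (a) a small $\Theta_r$-subdivision, (b) a large $(2r-2)$-partitioned protrusion, or (c) a dense minor. The skeleton of the algorithm is a DFS (or BFS) tree $T$ of $W$ together with its set of non-tree back edges, processed bottom-up; every primitive operation should take only $O_r(1)$ time per vertex so the overall running time is $O_r(m)$.

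First, I would scan the fundamental cycles of $T$ locally: two or more back edges whose fundamental cycles share a common tree path quickly yield a pair of vertices connected by $r$ internally disjoint paths of small total length, i.e.\ a $\Theta_r$-subdivision on at most $z$ edges. If such a short certificate is found, return it immediately. In parallel, I would peel off subtrees $T'$ of $T$ whose edge-boundary to the rest of $W$ has at most $2r-2$ edges. When peeling fails at some $T'$, Menger's theorem provides $2r-1$ edge-disjoint paths between $V(T')$ and $V(W)\setminus V(T')$; pairing and routing these paths through the common tree portion produces a $\theta_r$-minor model. If its connecting paths can be chosen short, this yields a small $\Theta_r$-subdivision and case~(a) applies; otherwise the length of the paths is booked as extra edges that will fuel case~(c).

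The successfully peeled pieces, grouped into bags of size at most $2w+1$ and organised along $T$, form a rooted tree-partition of $W$ of width at most $2r-2$. If some descendant subtree aggregates more than $w$ internal vertices, output it together with its induced tree-partition as the required $(2r-2)$-partitioned protrusion $({\bf G},{\cal D})$, which gives case~(b). Otherwise every descendant subtree is small, so contracting each bag to a single vertex produces a minor of $W$ whose vertex count shrinks by a factor proportional to $2w+1$ while preserving essentially all of the $m\geq z$ edges. Iterating the bag-contraction forces the average degree of the resulting minors to grow geometrically; after roughly $(z-5r)/(4r(2w+1))$ rounds one extracts, via a standard min-degree-subgraph step, an $H$-minor model with $\delta(H)\geq \frac{1}{r-1}\cdot 2^{(z-5r)/(4r(2w+1))}$, which is case~(c).

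The delicate part is aligning the quantitative accounting so that the stated exponential bound falls out exactly. Concretely, I would need to certify that every round of failed protrusion extraction multiplies the average degree by roughly two while spending at most $4r(2w+1)$ edges (the denominator of the exponent), and that the additive $-5r$ in the numerator absorbs the initialisation cost of the DFS plus the Menger routing. This bookkeeping — showing that the paths produced by Menger's theorem can be reused across contraction rounds without double-counting — is the main obstacle; the rest is near-linear-time implementation of DFS, fundamental-cycle enumeration, and edge-connectivity peeling, all of which are standard but require care to stay within the $O_r(m)$ budget.
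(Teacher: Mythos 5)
There is no proof of this statement to compare against inside the paper: \autoref{rk6ter} is imported verbatim as Theorem~3.2 of~\cite{ChatzidimitriouRST15mino} and used as a black box. So the only question is whether your sketch would stand on its own, and it would not. You explicitly defer the entire quantitative core --- why the third outcome has minimum degree exactly $\frac{1}{r-1}2^{(z-5r)/(4r(2w+1))}$ --- to ``bookkeeping'' that you acknowledge you have not done, and the mechanism you propose for producing that bound is not one that works. Contracting the bags of a tree-partition of width $2r-2$ and ``iterating'' cannot make the average degree of the resulting minors grow geometrically: the adhesion between adjacent bags is at most $2r-2$, so each contracted vertex has degree bounded by $(2r-2)$ times the degree of the corresponding tree node, and repeated contraction of a fixed tree-partition just collapses the graph to a single vertex. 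The actual source of the exponential in results of this type is different: in a graph with no $\Theta_r$-subdivision on at most $z$ edges one first suppresses long induced paths of degree-$2$ vertices (this is where the partitioned-protrusion outcome comes from --- a large such piece is exactly outcome two), and in what remains every vertex has degree at least $3$ up to bounded-length subdivisions, so balls of radius proportional to $z/r$ grow exponentially because any two short paths meeting twice would close a small $\theta_2$, and $r$ of them a small $\theta_r$; contracting disjoint such balls yields the high-minimum-degree minor. Your sketch never establishes any growth statement of this kind.

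Two further steps are asserted without justification. First, ``two or more back edges whose fundamental cycles share a common tree path'' give at most a $\theta_3$, not a $\theta_r$, and nothing guarantees the shared configuration is short; to get $\Theta_r$ you need $r$ internally disjoint paths between the \emph{same} pair of vertices, with total length at most $z$, which is precisely the hard part. Second, when peeling fails, Menger gives $2r-1$ edge-disjoint paths from $V(T')$ to its complement, but these paths have $2(2r-1)$ distinct endpoints in general and do not ``pair and route'' into a $\theta_r$-minor model without an additional argument concentrating them onto two branch sets. As it stands the proposal is a research plan rather than a proof, and the parts it leaves open are the parts that make the cited theorem nontrivial.
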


\subsection{A lemma on reduce or progress}

The proof of the next lemma combines 
\autoref{rk6ter} and \autoref{l:new_reduce}.
\begin{lemma}[Reduce or progress]
    \label{l:eppump}
  There is an algorithm that, with input  ${\sf x}\in\{{\sf v},{\sf
    e}\}$, $r \in \N_{\geq 2}$, $k \in \N$ and an $n$-vertex graph $W$,
  outputs one of the following:
  \begin{itemize} 
  \item a $\Theta_{r}$-subdivision of $W$ with at most $O_{r}(\log k)$ edges;
  \item  a graph $W'$ where
    \begin{align*}
      \xcover_{\cal H}(W') &= \xcover_{\cal H}(W),\\
      \xpack_{\cal H}(W') &=\xpack_{\cal H}(W),\ \text{and}\\
      n(W') &< n(W); or
    \end{align*}
  \item an $H$-minor model in $W$, for some graph $H$ with $\delta(H)\geq k(r+1)$,
  \end{itemize}
 in $O_{r}(m)$ steps. Moreover, this algorithm can be enhanced 
so that, when the second case applies, given a  ${\sf x}$-{${\cal H}$-packing} 
(resp.  ${\sf x}$-{${\cal H}$-covering}) in $W'$ then it outputs a same size 
${\sf x}$-{${\cal H}$-packing}  (resp.  ${\sf x}$-{${\cal H}$-covering}) in $W$.
\end{lemma}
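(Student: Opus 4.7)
The plan is to combine \autoref{rk6ter} with \autoref{l:new_reduce} by a single calibration of their parameters. I would instantiate \autoref{l:new_reduce} with the family $\mathcal{H} := \Theta_r = \ex(\theta_r)$, so that $h := m(\mathcal{H})$ is a function of $r$ alone, with tree-partition width $t := 2r-2$ (matching the width produced by \autoref{rk6ter}), and with protrusion-size threshold $w := \funref{f:newred}(h, t)$, which is again a constant depending only on $r$. The remaining free parameter is the edge-bound $z$ of \autoref{rk6ter}: I would pick it as the least integer for which $\tfrac{1}{r-1}\,2^{(z-5r)/(4r(2w+1))} \geq k(r+1)$. Solving this inequality, and taking the maximum with $r+1$ to respect the precondition $z > r$, gives $z = O_r(\log k)$.

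With these choices, the algorithm is a direct case analysis on the outcome of \autoref{rk6ter} applied to $W$ (or to each connected component of $W$; components with $m < z$ can be inspected in $O_r(\log k)$ time and either contribute a small $\Theta_r$-subdivision or can be safely ignored since they then carry no $\theta_r$-minor). If \autoref{rk6ter} returns a $\Theta_r$-subdivision with at most $z = O_r(\log k)$ edges, I return it. If it returns a $(2r-2)$-partitioned protrusion $({\bf G},\mathcal{D})$ with $n({\bf G}) > w$, the precondition of \autoref{l:new_reduce} is met and the latter outputs either an $\mathcal{H}$-subdivision with at most $\funref{f:newred}(h,t) = O_r(1) \leq O_r(\log k)$ edges (which is in particular a $\Theta_r$-subdivision of $W$ of the required size) or a graph $W'$ with the invariance properties specified in the second bullet; moreover, the enhancement of \autoref{l:new_reduce} lifts any packing or covering from $W'$ back to $W$. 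If \autoref{rk6ter} returns an $H$-minor model in $W$, my choice of $z$ guarantees $\delta(H) \geq k(r+1)$. For the running time, \autoref{rk6ter} contributes $O_r(m)$ and \autoref{l:new_reduce} contributes $O_{h,t}(n(T)) = O_r(m)$, giving the claimed total.

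The hard part will be the coupled calibration of $z$: it has to be simultaneously large enough that the exponential lower bound on $\delta(H)$ delivered by \autoref{rk6ter} reaches the threshold $k(r+1)$, and small enough that the same $z$ bounds the edge count of the first-outcome subdivision by $O_r(\log k)$. Fortunately, the exponential form of the lower bound on $\delta(H)$ in \autoref{rk6ter} means that a logarithmic $z = O_r(\log k)$ suffices on both fronts, so this balance is painless once spelled out. The only further technical nuisances are the preconditions of \autoref{rk6ter} (connectedness of the input and $m \geq z > r$), both handled by running the algorithm component-wise and disposing of small components by brute force.
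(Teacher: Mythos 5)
Your proposal is correct and follows essentially the same route as the paper: instantiate $\mathcal{H}=\Theta_r$, $t=2r-2$, $w=\funref{f:newred}(h,t)$, calibrate $z=O_r(\log k)$ so that the degree bound of \autoref{rk6ter} reaches $k(r+1)$, and then dispatch the three outcomes, using \autoref{l:new_reduce} (with its lifting enhancement) in the protrusion case. Your component-wise treatment of the preconditions of \autoref{rk6ter} is a minor bookkeeping addition that the paper's proof leaves implicit; otherwise the arguments coincide.
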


\begin{proof}
We set $t = 2r-2$, $w = \funref{f:newred}(h,
  t)$, $z = 2r(w-1)\log(k(r + 1)(r - 1)) + 5r$, and $h = m(\Theta_r)$. Observe that
  $z = O_r(\log k)$ and $h,t,w = O_r(1)$. Also observe that our choice
  for variable~$z$ ensures that~$2^\frac{z-5r}{2r(w-1)}/(r-1) = k(r+1)$.\smallskip
  
  By applying the algorithm of \autoref{rk6ter} to $r,w,z$, and $W$,
  we obtain in~$O_r(m(W))$-time either:
  \begin{itemize} 
  \item a $\Theta_{r}$-subdivision in $W$ of at most $z$ edges (first case),
  \item a $(2r-2)$-edge-protrusion $Y$ of $W$ with extension $> w$ (second case), or
  \item an $H$-minor model $M$ in $W$, for some graph $H$ with
    $\delta(H)\geq k(r+1)$ (third case).
  \end{itemize}

  In the first case, we return the obtained $\Theta_r$-subdivision.\smallskip
  In the second case, by applying the algorithm of \autoref{l:new_reduce} on $Y$, we
  get in $O(n(W))$-time either a $\Theta_{r}$-subdivision of $W$ on at most $w = O_r(1)$
  vertices, or a graph  $W'$ where, for ${\sf x}\in\{{\sf v},{\sf e}\}$,\  
    $\xcover_{\cal H}(W')=\xcover_{\cal H}(W)$, $\xpack_{\cal
      H}(W')=\xpack_{\cal H}(W)$ and $n(W')<n(W)$.\smallskip
  In the third case, we return the minor model $M$.\smallskip\smallskip

    In each of the above cases, we get after $O(m)$ steps either a minor model of a graph with
    minimum degree more than $k(r+1)$, a $\Theta_{r}$-subdivision in
    $W$ with at most $z$ edges, or an equivalent graph of smaller size.
\end{proof}

It might not be clear yet to what purpose the minor model of a graph of
degree more than $k(r+1)$ output by the algorithm of
\autoref{l:eppump} can be used. An answer is given by the following
lemmata, which state that such a graph contains a packing of at least
$k$ minor models of~$\Theta_r$. These lemmata will be used in the design of
the appoximation algorithms in \autoref{s:appx}.

\begin{lemma}\label{l:bigepack}
  There is an algorithm that, given $k,r \in \N_{\geq 1}$ and a graph
  $G$ with $\delta(G) \geq kr$, returns a member of ${\bf P}_{{\sf
      e},\Theta_{r}}^{\geq k}(G)$ in $G$ in $O(m)$ steps.
\end{lemma}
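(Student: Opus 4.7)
The plan is induction on $k$. The key combinatorial observation is that a minimal $\Theta_r$-subdivision $M$ of $G$ (equivalently, a minimal $\theta_r$-minor model) takes the form of a ``dumbbell'': two vertex-disjoint branch sets $A$ and $B$, each spanned by a tree $T_A, T_B$, joined by exactly $r$ bridge edges. By minimality all leaves of $T_A$ (resp.\ $T_B$) are bridge endpoints, and a direct case analysis (splitting on $b(v) + \deg_{T_A}(v)$ where $b(v)$ counts the bridges at $v$) shows that every vertex of $M$ has degree at most $r$ in $M$. Consequently, removing $E(M)$ from $G$ decreases the degree of every vertex by at most $r$, so if $\delta(G) \geq kr$ then $\delta(G \setminus E(M)) \geq (k-1)r$, enabling the induction.

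For the base case $k=1$, given $\delta(G) \geq r$, I would find a single $\theta_r$-minor model as follows. Pick a non-cut vertex $v$ in some connected component $C$ of $G$, which always exists since $\delta(G) \geq r \geq 1$ forces every component to have at least two vertices and any connected graph with at least two vertices has a non-cut vertex (for instance, a DFS-tree leaf). Then $v$ has at least $r$ neighbors, all lying in the connected graph $C \setminus v$; let $N$ be a choice of $r$ such neighbors. A BFS in $C \setminus v$ produces a spanning tree, which is then pruned in linear time to its minimal subtree $T'$ containing $N$ by iteratively deleting leaves outside $N$. The subgraph $M_1 := \{v\} \cup T'$ together with the $r$ edges from $v$ to $N$ forms a minimal $\theta_r$-minor model; by the structure analysis above it is indeed a subdivision of some graph in $\Theta_r$, and the whole construction runs in $O(m)$ time.

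The inductive step is then immediate: extract $M_1$ as above, then recurse on $G \setminus E(M_1)$ with parameter $k-1$ (using $\delta(G \setminus E(M_1)) \geq (k-1)r$) to obtain $M_2, \ldots, M_k$; the family $\{M_1, \ldots, M_k\}$ is the required edge-disjoint packing. The main obstacle is meeting the claimed strict $O(m)$ runtime, since naive iteration yields $O(km)$. I plan to achieve linear time by avoiding a complete BFS at each step: because each extracted $M_i$ contains $O(r)$ vertices of nontrivial structure and touches at most $r$ edges per vertex, the BFS/DFS and block data structures built in the first iteration can be repaired locally after each edge removal rather than recomputed from scratch, so that the total work across all $k$ iterations amortizes to $O(m)$.
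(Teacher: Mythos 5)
Your combinatorial core is sound: the degree bound for a minimal $\theta_r$-minor model (each component of $T_A-v$ must contain a bridge endpoint, so $\deg_{T_A}(v)\leq r-b(v)$) is correct, and it does give $\delta(G\setminus E(M_1))\geq (k-1)r$, so the induction produces $k$ edge-disjoint models. The genuine gap is the running time. Each extracted $M_i$ is not ``$O(r)$ vertices of nontrivial structure'': the pruned Steiner tree $T'$ may contain $\Omega(n)$ degree-two vertices, so a single iteration can delete $\Omega(n)$ edges, destroy the spanning/DFS tree and the block structure of the component, and change which vertices are cut vertices. Maintaining a spanning tree and cut-vertex information under adversarial edge deletions is a decremental connectivity problem; there is no obvious way to ``repair locally'' in amortized $O(1)$ per deleted edge, and you give no argument for it. As written, your algorithm runs in $O(km)$, which can be $\Theta(nm)$, not the claimed $O(m)$.

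The paper avoids the iteration entirely. It greedily grows a single maximal path $P$ (each edge touched $O(1)$ times, so $O(m)$ total); maximality forces the last vertex $v$ to have all of its $\geq kr$ neighbors on $P$. Listing the first $kr$ such neighbors $w_0,\dots,w_{kr-1}$ in path order and taking, for each $i$, the subgraph $S_i$ induced by $v$ together with the subpath from $w_{ir}$ to $w_{(i+1)r-1}$, one gets $k$ subgraphs that pairwise intersect only in $\{v\}$ (hence are edge-disjoint), each being a fan that contains $\theta_r$ as a minor and therefore a $\Theta_r$-subdivision. This one-pass construction is what actually delivers $O(m)$. If you want to salvage your approach, you would either have to settle for $O(km)$ or replace the per-step recomputation with a construction that, like the paper's, localizes all $k$ models around a single precomputed structure.
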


\begin{proof}
Starting from any vertex $u$, we grow a maximal path $P$ in $G$ by iteratively
adding to $P$ a vertex that is adjacent to the previously added
vertex but does not belong to~$P$. Since $\delta(G) \geq kr$, any
such path will have length at least $kr+1$.
At the end, all the neighbors of the last vertex $v$ of $P$ belong to $P$
(otherwise $P$ could be extended). Since $v$ has degree at least $kr$,
$v$ has at least $kr$ neighbors in $P$. Let $w_0, \dots, w_{kr-1}$ be
an enumeration of the $kr$ first neighbors of $v$ in the order given
by $P$, starting from $u$. For every $i \in \intv{0}{k-1}$, let $S_i$
be the subgraph of $G$ induced by $v$ and the subpath of $P$ starting at
$w_{ir}$ and ending at $w_{(i+1)r-1}$. Observe that for every $i \in \intv{0}{k-1}$,
$S_i$ contains a $\Theta_r$-subdivision and that the intersection of every
pair of graphs from $\{S_i\}_{i \in \intv{0}{k-1}}$ is $\{v\}$. Hence $P$ contains a member of ${\bf P}_{{\sf
    e},\Theta_{r}}^{\geq k}(G)$, as desired. Every edge of $G$ is considered at most
once in this algorithm, yielding to a running time of $O(m)$~steps. 
\end{proof}

\begin{corollary}\label{c:extract}
  There is an algorithm that, given $r\in \N_{\geq 1}$ and a graph $G$
  with $\delta(G) \geq r$, returns a $\Theta_r$-subdivision in $G$ in~$O(m)$-steps.
\end{corollary}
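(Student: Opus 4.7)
The plan is to simply instantiate \autoref{l:bigepack} with $k = 1$. Observe that the hypothesis $\delta(G) \geq r$ of the corollary is exactly the hypothesis $\delta(G) \geq k r$ of \autoref{l:bigepack} in the special case $k=1$. Applying the algorithm guaranteed by that lemma, we obtain in $O(m)$ steps a member of ${\bf P}_{{\sf e}, \Theta_r}^{\geq 1}(G)$, that is, an edge-disjoint packing of at least one $\Theta_r$-subdivision. Returning any element of this (nonempty) packing yields the desired $\Theta_r$-subdivision.

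The only thing worth noting is that the running time is inherited verbatim from \autoref{l:bigepack}, since no additional work is performed. There is no real obstacle here: the corollary is a direct specialization, and the inspection of the proof of \autoref{l:bigepack} even shows that when $k=1$ the algorithm reduces to growing a single maximal path and identifying $r$ back-neighbors of its endpoint, which is itself a standard construction.
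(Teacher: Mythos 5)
Your proposal is correct and is exactly the intended derivation: the paper states this as an immediate corollary of \autoref{l:bigepack}, obtained by setting $k=1$, which is precisely what you do. The hypothesis, output, and running time all specialize verbatim, so there is nothing further to check.
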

Observe that the previous lemma only deals with edge-disjoint packings.
An analogue of \autoref{l:bigepack} for vertex-disjoint packings
can be proved using \autoref{p:bazgan}, to the price of a worse time complexity.

\begin{proposition}[Theorem 12 of~\cite{Bazgan2007979}]\label{p:bazgan}
  Given $k,r \in \N_{\geq1}$ and an input graph $G$ such that
  $\delta(G)\geq k(r + 1) - 1$, a partition $(V_1, \dots, V_k)$ of
  $V(G)$ satisfying $\forall i \in \intv{1}{k},\ \delta(G[V_i]) \geq
  r$ can be found in $O(n^{c})$ steps for some~$c\in \N$.
\end{proposition}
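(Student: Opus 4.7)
The plan is to prove this via a standard local-search argument driven by a potential function counting intra-part edges. For a partition $P = (V_1, \dots, V_k)$ of $V(G)$ into $k$ nonempty parts (which exists since $\delta(G) \geq k(r+1)-1$ forces $n \geq k(r+1) \geq k$), define $\Phi(P) = \sum_{i=1}^k |E(G[V_i])|$. Starting from any such initial partition, iterate the following local move: while there exists a pair $(i,v)$ with $v \in V_i$ and $\deg_{G[V_i]}(v) < r$, relocate $v$ from $V_i$ to the part $V_j$ ($j \neq i$) that contains the maximum number of neighbors of $v$. Return the final partition.

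The key step is verifying that each such move strictly increases $\Phi$, via a short pigeonhole estimate. If $v \in V_i$ has $\deg_{G[V_i]}(v) \leq r-1$, then $v$ has at least $k(r+1) - 1 - (r-1) = r(k-1)+k$ neighbors outside $V_i$, distributed among the $k-1$ other parts. Some part $V_j$ therefore contains at least $\lceil (r(k-1)+k)/(k-1) \rceil \geq r+1$ of them. Moving $v$ to $V_j$ removes at most $r-1$ intra-part edges and introduces at least $r+1$, so $\Phi$ increases by at least $2$. Since $\Phi$ is integer-valued and bounded above by $|E(G)| \leq n^2$, the procedure halts after $O(n^2)$ iterations, each of which scans all vertex-part pairs and computes degrees in parts in polynomial time. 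This gives total runtime $O(n^c)$ for some constant $c$, and upon termination every $v \in V_i$ satisfies $\deg_{G[V_i]}(v) \geq r$, yielding $\delta(G[V_i]) \geq r$ for all $i$.

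The step I expect to be the main obstacle is guaranteeing that all parts remain nonempty throughout the local search: if $V_i = \{v\}$ and $v$ is a violator, the move empties $V_i$ and the output is no longer a $k$-partition. Two standard remedies apply. The first is to restrict attention to moves that preserve nonemptiness and to prove, by an extremal argument on a maximizer of $\Phi$ among nonempty $k$-partitions, that a singleton $V_i = \{v\}$ violating $\deg_{G[V_i]}(v) \geq r$ cannot occur at the optimum; intuitively, the assumption $\delta(G) \geq k(r+1) - 1$ gives enough slack to rebalance. The second is to combine every offending move with a simultaneous transfer of a carefully chosen vertex from the destination part back into $V_i$, verifying that the pigeonhole surplus is large enough to absorb the cost of the compensating transfer while still strictly increasing $\Phi$. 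Either route closes the proof and yields the polynomial-time algorithm claimed by the proposition.
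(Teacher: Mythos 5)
The paper does not prove this proposition at all: it is imported as Theorem~12 of~\cite{Bazgan2007979}, so the only question is whether your argument stands on its own, and it does not. The nonemptiness issue that you yourself flag as ``the main obstacle'' is exactly where the argument breaks, and neither of the two remedies you sketch can be carried out. Take $G=K_4$, $k=2$, $r=1$, so that $\delta(G)=3=k(r+1)-1$. Among partitions into two nonempty parts, the maximizer of your potential $\Phi$ is the split $(\{a\},\{b,c,d\})$ with $\Phi=3$, whereas every feasible partition (a $2+2$ split, each side with minimum degree $1$) has $\Phi=2$. This refutes your first remedy: a singleton violating part \emph{can} occur at the $\Phi$-maximizer over nonempty $k$-partitions. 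It also kills the second remedy: at that stuck configuration, any compound move that inserts $a$ into the large part and pulls some vertex back into the singleton produces again a singleton plus a triangle, with $\Phi=3$ unchanged and a new violator, so no compound move increases $\Phi$ and the search can cycle without ever reaching a feasible partition. More fundamentally, since feasible partitions can have strictly smaller $\Phi$ than infeasible ones, no procedure that only performs $\Phi$-increasing moves subject to nonemptiness can be guaranteed to terminate at a feasible partition: the raw intra-part edge count is the wrong potential, not merely a potential with a technical corner case left open.

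This is precisely the known subtlety in Stiebitz-type decomposition theorems: the existence proofs (Stiebitz's theorem, iterated to give the threshold $k(r+1)-1$) and the algorithmic version of Bazgan, Tuza and Vanderpooten rely on a more delicate weighting than $\sum_i |E(G[V_i])|$, combined with extremal choices of vertex sets and exchange steps engineered so that the partition stays proper; making that search run in polynomial time is the actual content of the cited Theorem~12. Your pigeonhole computation for moving a violator out of a part of size at least two is correct (the destination part even contains at least $r+2$ of its neighbors), and the $O(n^c)$ bookkeeping would be fine, but the heart of the statement --- termination at a partition into $k$ nonempty parts all of minimum degree at least $r$ --- is not established by your argument.
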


\begin{lemma}\label{l:vpackex}
There is an algorithm that, given $k,r \in \N_{\geq 1}$ and a graph $G$ with $\delta(G) \geq k(r
+ 1) - 1$, outputs a member of ${\bf P}_{{\sf v},\Theta_{r}}^{\geq
  k}(G)$ in $O(n^c + m)$ steps, where $c$ is the constant of \autoref{p:bazgan}.
\end{lemma}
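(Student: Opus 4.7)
The plan is to combine \autoref{p:bazgan} with \autoref{c:extract}. Since the hypothesis $\delta(G) \geq k(r+1) - 1$ is exactly the one required by \autoref{p:bazgan}, we first invoke that result to obtain, in $O(n^c)$ steps, a partition $(V_1, \dots, V_k)$ of $V(G)$ such that $\delta(G[V_i]) \geq r$ for every $i \in \intv{1}{k}$.

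Then, for every $i \in \intv{1}{k}$, we apply the algorithm of \autoref{c:extract} to the subgraph $G[V_i]$: since this subgraph has minimum degree at least $r$, the algorithm returns a $\Theta_r$-subdivision $M_i$ of $G[V_i]$ in $O(m(G[V_i]))$ steps. Since the sets $V_1, \dots, V_k$ are pairwise disjoint, the subdivisions $M_1, \dots, M_k$ are vertex-disjoint, hence $\{M_1, \dots, M_k\}$ is a member of ${\bf P}_{{\sf v},\Theta_{r}}^{\geq k}(G)$.

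For the running time, the call to \autoref{p:bazgan} costs $O(n^c)$ steps, while the $k$ calls to the algorithm of \autoref{c:extract} cost in total $O\bigl(\sum_{i=1}^{k} m(G[V_i])\bigr) = O(m)$ steps, since the edge sets of the induced subgraphs $G[V_i]$ are pairwise disjoint. The total running time is therefore $O(n^c + m)$, as claimed. There is no real obstacle here: the lemma is essentially a packaging of the two cited results, with the only verification being that the minimum-degree threshold of \autoref{p:bazgan} matches the hypothesis of the statement and that vertex-disjointness of the $V_i$'s automatically lifts to vertex-disjointness of the extracted subdivisions.
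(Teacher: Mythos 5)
Your proof is correct and follows exactly the paper's argument: apply \autoref{p:bazgan} to partition $V(G)$ into $k$ parts each inducing a subgraph of minimum degree at least $r$, then extract a $\Theta_r$-subdivision from each part via \autoref{c:extract}, with the same running-time accounting. No discrepancies.
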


\begin{proof}
After applying the algorithm of \autoref{p:bazgan} on $G$ to
obtain in $O(n^c)$-time $k$ graphs $G[V_1], \dots, G[V_k]$, we extract
a $\Theta_r$-subdivision from each of them using~\autoref{c:extract}.
\end{proof}

\subsection{Approximation algorithms}
\label{s:appx}

As $\Theta_r$-subdivisions are exactly the subgraph minimal graphs that contain $\Theta_r$ as
a minor, a graph has a $\Theta_r$-{\sf x}-packing of size $k$ iff it contains
$k$ {\sf x}-disjoint subgraphs, each containing $\theta_r$ as a minor.
Therefore, \autoref{madin} is a direct combinatorial consequence of the following.

\begin{theorem}\label{t:appxep}
  There is a function $\newfun{appx} \colon \N \to \N$ and an algorithm that, with
  input  ${\sf x}\in\{{\sf v},{\sf e}\}$, $r \in \N_{\geq 2}$, $k \in
  \N$, and an $n$-vertex graph $W$, outputs either a
  $\x$-$\Theta_{r}$-packing of $W$ of size~$k$ or an
  $\x$-$\Theta_{r}$-covering of $W$ of size at most~$\funref{appx}(r)\cdot k\cdot \log
  k$. Moreover, this algorithm runs in $O(m^2)$ steps if $\x = {\sf
    e}$ and in $O_r(n^{c} + n \cdot m)$ steps if $\x = {\sf
    v}$, where $c$ is the constant from~\autoref{p:bazgan}.
\end{theorem}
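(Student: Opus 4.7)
The plan is to combine the algorithm of \autoref{l:eppump} with the packing-extraction procedures of \autoref{l:bigepack} and \autoref{l:vpackex} inside a greedy loop. Throughout the execution, maintain a working graph $W_0$ (initialized to $W$), a partial $\x$-$\Theta_r$-packing $\mathcal{P}$ of $W$ (initialized to the empty set), and a candidate covering $C \subseteq A_\x(W)$ (initialized to the empty set), preserving the informal invariant that $W_0$ is obtained from $W\setminus C$ by a sequence of protrusion reductions. At each iteration, call \autoref{l:eppump} on $W_0$ with parameter $k' := k - |\mathcal{P}|$ and branch on its three outcomes. If a small $\Theta_r$-subdivision $J$ with at most $O_r(\log k')$ edges is returned, append $J$ to $\mathcal{P}$, add $A_\x(J)$ to $C$, and delete $A_\x(J)$ from $W_0$. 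If a reduced graph $W_0'$ is returned, replace $W_0$ by $W_0'$ and remember to translate the final output back through this reduction. If instead an $H$-minor model with $\delta(H) \geq k'(r+1)$ is returned, feed it to \autoref{l:bigepack} (when $\x = \sf e$) or to \autoref{l:vpackex} (when $\x = \sf v$) to extract an $\x$-$\Theta_r$-packing of $W_0$ of size $k'$, concatenate it with $\mathcal{P}$ to reach size $k$, and output the resulting packing.

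The loop terminates in at most $n$ iterations because every non-exit iteration strictly decreases $n(W_0)$: the first case deletes the non-empty set $A_\x(J)$, and the second case does so by construction. The loop halts for one of three reasons. When the third case fires, a packing of size $k$ is returned immediately. When $|\mathcal{P}|$ reaches $k$ purely through first-case steps, the same packing $\mathcal{P}$ is the output. Finally, if neither occurs, $W_0$ has become $\Theta_r$-free (neither the first nor the third case of \autoref{l:eppump} can then apply, and the second case cannot be repeated forever as it strictly shrinks $W_0$), and we output $C$. In this last scenario, $C$ is a valid $\x$-$\Theta_r$-covering of $W$, since by construction every $\Theta_r$-subdivision of $W$ must meet $C$. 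Its size is bounded by the number of first-case invocations times $O_r(\log k)$, and since each first-case step increments $|\mathcal{P}|$ and $|\mathcal{P}|$ never exceeds $k$, we obtain $|C| \leq \funref{appx}(r) \cdot k \log k$ for an appropriate $\funref{appx}(r) = O_r(1)$.

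For the running time, each iteration costs $O_r(m)$ by \autoref{l:eppump}, giving $O_r(n \cdot m)$ over the $O(n)$ iterations. In the edge case, discarding isolated vertices at the outset (they are irrelevant to every packing and covering) ensures $n \leq 2m$, so this is $O(m^2)$. In the vertex case, one pays an additional $O_r(n^c)$ only for the single invocation of \autoref{p:bazgan} inside \autoref{l:vpackex} when the third case fires, matching the claimed $O_r(n^c + n \cdot m)$ bound. The main point that requires care is the lifting of the final output through the sequence of second-case reductions accumulated along the loop: this is handled by chaining the enhancement clause of \autoref{l:eppump} (inherited from \autoref{l:new_reduce}), which guarantees that packings and coverings can be translated across a single reduction without changing their size, so composing these translations recovers an output of the announced cardinality in the original graph $W$.
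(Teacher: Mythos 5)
Your proposal is correct and follows essentially the same route as the paper's proof: a greedy loop around \autoref{l:eppump} with the three outcomes (collect a small $\Theta_r$-subdivision as both partial packing and partial covering, extract a size-$k$ packing from the high-degree minor model via \autoref{l:bigepack} or \autoref{l:vpackex}, or replace the graph by the reduced one), with the final packing/covering lifted back through the reductions via the enhancement clause of \autoref{l:new_reduce}; using the residual parameter $k-|\mathcal{P}|$ instead of a fixed $k$ is an inessential variation. The only nit is the termination count in the case $\x={\sf e}$: deleting $A_{\sf e}(J)$ removes edges but no vertices, so $n(W_0)$ need not drop; one should instead bound the iterations by the decrease of $n(W_0)+m(W_0)$ (or by $|A_{\sf x}(W)|$ plus the number of reduce steps), which after your isolated-vertex preprocessing still gives $O(m)$ iterations and the claimed $O(m^2)$ bound.
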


\begin{proof}
  Let $\funref{appx}(r): \N\rightarrow \N$ be a function such that each $\Theta_r$-subdivision output by the
  algorithm of \autoref{l:eppump} has size at most $\funref{appx}(r)\cdot \log k$.
  We consider the following procedure.
  \begin{enumerate}
  \item $G:=W$ ; $P := \emptyset$;
  \item \label{e:loop} Apply the algorithm of \autoref{l:eppump}
    on $(\x, r, k,G)$:
    \begin{description}
    \item[Progress:] if the output is a $\Theta_r$-subdivision $M$, let $G:= G\setminus
      A_\x(M)$ and $P = P \cup \{M\}$;
    \item[Win:] if the output is a $H$-minor model $M$ in $W$ for some
      graph $H$ with $\delta(H)\geq k(r+1)$, apply the algorithm
      of \autoref{l:bigepack} (if $\x={\sf e}$) or the one of
      \autoref{l:vpackex} (if $\x={\sf v}$) to $H$ to obtain a
      member of ${\bf P}_{{\sf x},\Theta_{r}}^{\geq k}(H)$. Using $M$,
      translate this packing into a member of ${\bf P}_{{\sf
          x},\Theta_{r}}^{\geq k}(W)$ and return this new packing;

    \item[Reduce:] otherwise, the output is a graph $G'$. Then let $G := G'$.
    \end{description}
    \item If $|P|=k$ then return $P$ which is a member of ${\bf P}_{{\sf
          x},\Theta_{r}}^{\geq k}(W)$;
    \item If $n(W) = 0$ then return $P$ which is in this case a member of ${\bf C}_{{\sf
          x},\Theta_{r}}^{\leq \funref{appx}(r)\log k}(W)$;
    \item Otherwise, go back to Line~\ref{e:loop}.
  \end{enumerate}
This algorithm clearly returns the desired result. Furthermore, the
loop is executed at most $\left | A_{\sf x}(W)\right |$ times (as an
element of $A_{\sf x}(W)$ is deleted each time we reach the
\emph{Progress} case) and each call to the algorithm
of \autoref{l:eppump} takes $O(m(W))$ steps.
When the algorithm reaches the ``Win'' case (which can happen at most
once), the calls to the algorithm of \autoref{l:bigepack} (if
$\x={\sf e}$) or the one of \autoref{l:vpackex} (if
$\x={\sf v}$), respectively, take
$O(m(H))$ and $O\left ((n(H))^c \right )$ steps.
Therefore, in total, this algorithm terminates in $O(m^2)$ steps
if $\x={\sf e}$ and in $O\left(n^c +n \cdot m\right )$ steps if~$\x={\sf v}$.
\end{proof}

Observe that if the algorithm of \autoref{t:appxep} reaches the
``Win'' case, then the input graph is known to contain an $\x$-$\Theta_r$-packing
of size at least $k$. As a consequence, if we are only interested in the
existence of a packing or covering, the call to the algorithm of
\autoref{l:bigepack} or \autoref{l:vpackex} is not necessary.

\begin{corollary}\label{c:existentialist}
  There is an algorithm that, with input  ${\sf x}\in\{{\sf v},{\sf
    e}\}$, $r \in \N_{\geq 2}$, $k \in \N$, and a graph
  $W$, outputs 0 only if $W$ has an $\x$-$\Theta_{r}$-packing of
  size~$k$ or 1 only if $W$ has an $\x$-$\Theta_{r}$-covering of
  size at most~$\funref{appx}(r)\cdot k\cdot \log k$. Furthermore this algorithm
  runs in $O_r(n \cdot m)$ steps when $\x = {\sf v}$ and in $O_r(m^2)$
  steps when~$\x = {\sf e}$.
\end{corollary}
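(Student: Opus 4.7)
The plan is to strip the constructive parts from the algorithm of Theorem~\ref{t:appxep}. As noted in the remark preceding the corollary, whenever the main loop of that algorithm reaches the ``Win'' branch, i.e.\ Lemma~\ref{l:eppump} returns an $H$-minor model with $\delta(H)\geq k(r+1)$, the existence of an $\x$-$\Theta_r$-packing of size $k$ in $W$ is already certified through Lemma~\ref{l:bigepack} or Lemma~\ref{l:vpackex}, without actually running them. Analogously, if the accumulated partial packing reaches size $k$ we again have certified existence, and if the running graph shrinks to nothing then the previously extracted subdivisions form a covering of $W$ of size at most $\funref{appx}(r)\cdot k\cdot \log k$.

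Concretely, I would run the following loop: maintain a graph $G$ initialized to $W$ and a counter $p:=0$; at each step call Lemma~\ref{l:eppump} on $(\x, r, k, G)$. In the Progress branch, remove the returned subdivision's vertex/edge set from $G$, set $p:=p+1$, and return $0$ immediately if $p=k$. In the Reduce branch, replace $G$ by the returned reduced graph, returning $1$ if this new graph is empty. In the Win branch, return $0$ right away, skipping the invocation of Lemma~\ref{l:bigepack} or~\ref{l:vpackex}.

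Correctness follows from the invariants established inside Lemma~\ref{l:eppump}: in the Progress branch the collected subdivisions are pairwise $\x$-disjoint in $W$, so reaching $p=k$ witnesses a packing; in the Win branch a packing is guaranteed by Lemma~\ref{l:bigepack}/\ref{l:vpackex} even though we do not compute it; and the fact that \xcover$_{\mathcal H}$ and \xpack$_{\mathcal H}$ are preserved by the Reduce branch ensures that, when $G$ becomes empty, the $p\leq k$ accumulated subdivisions correspond to an $\x$-$\Theta_r$-covering of $W$ of size at most $\funref{appx}(r)\cdot k\cdot \log k$.

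For the running time, the same amortization as in the proof of Theorem~\ref{t:appxep} applies: each iteration either strictly decreases $|A_\x(G)|$ (Progress) or strictly decreases $n(G)$ (Reduce), so the loop is executed $O(|A_\x(W)|)$ times; since each call to Lemma~\ref{l:eppump} costs $O_r(m)$ and we never invoke Lemma~\ref{l:vpackex} (whose $O(n^c)$ cost was the bottleneck in the vertex case of Theorem~\ref{t:appxep}), the total cost is $O_r(n\cdot m)$ for $\x={\sf v}$ and $O_r(m^2)$ for $\x={\sf e}$, matching the claim. The only mild subtlety I expect is verifying, in the edge case, that the Reduce branch also decreases $|A_\x(G)|$ whenever it fires (so that the iteration bound $O(|A_\x(W)|)$ really holds); this should follow directly from the structural description of the reduction in Lemma~\ref{l:new_reduce}, since the replacement is proper and edges accompany the removed vertices.
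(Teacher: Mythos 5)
Your proposal is correct and is essentially the paper's own argument: the corollary is obtained from the algorithm of \autoref{t:appxep} by returning $0$ immediately in the ``Win'' branch instead of invoking \autoref{l:bigepack} or \autoref{l:vpackex}, which removes the $O(n^c)$ term and yields the stated running times. Your closing worry about the iteration count is resolved exactly as in the paper's analysis of \autoref{t:appxep}: Progress steps are bounded by $|A_{\sf x}(W)|$ and Reduce steps by $n(W)$, so the loop runs $O(m)$ times in either case.
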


We now conclude this section with the proof of  \autoref{admndi}.
For the same reason as in the proof of \autoref{madin} at the
beginning of this section, we can deal with $\Theta_r$-subdivisions instead
of dealing with subgraphs that contain $\theta_r$ as a minor.

\begin{proof}[Proof of \autoref{admndi}]
Let us call ${\sf A}$ the algorithm of \autoref{c:existentialist}.
Let $k_0\in \intv{1}{n(W)}$ be an integer such that ${\sf A}(\x, r, k_0, W)
= 1$ and ${\sf A}(\x, r, k_0-1, W) = 0$, and let us show that the
value $\funref{appx}(r)\cdot k_0
\log k_0$ is an $O(\log OPT)$-approximation of $\xpack_{\Theta_r}(W)$
and $\cover_{\Theta_r}(W)$.

First, notice that for every $k > \xpack_{\Theta_r}(W)$, the value
returned by ${\sf A}(\x, r, k,$\\
$W)$ is 1. Symmetrically, for every $k$
such that $\funref{appx}(r)\cdot k\log k < \xcover_{\Theta_r}(W)$, the value of ${\sf A}(\x, r, k,
W)$ is 0.
Therefore, the value $k_0$ is such that:
\begin{align*}
k_0-1 &\leq  \xpack_{\Theta_r}(W)\ \text{and}\\
\xcover_{\Theta_r}(W) &\leq  \funref{appx}(r)\cdot k_0\log k_0.
\end{align*}
As every minimal covering must contain at least one vertex or edge
(depending on whether $\x = {\sf v}$ or $\x = {\sf e}$) of each element (i.e.\ subdivision) of
a maximal packing  $\xpack_{\Theta_r}(W) \leq
\xcover_{\Theta_r}(W)$, we have the following two equations:
\begin{align}
 \xpack_{\Theta_r}(W) \leq  \funref{appx}(r)\cdot k_0\log k_0 &\leq
  (\xpack_{\Theta_r}(W)+1)\log (\xpack_{\Theta_r}(W)+1)\label{e:1}\\
\xcover_{\Theta_r}(W) \leq  \funref{appx}(r)\cdot k_0\log k_0 &\leq  (\xcover_{\Theta_r}(W)+1)\log (\xcover_{\Theta_r}(W)+1)\label{e:2}.
\end{align}
Dividing \eqref{e:1} by $\xpack_{\Theta_r}(W)$ and \eqref{e:2} by $\cover_{\Theta_r}(W)$, we get:
\begin{align*}
1 \leq \frac{k_0\log k_0}{\xpack_{\Theta_r}(W)} &\leq \log
(\xpack_{\Theta_r}(W) + 1) +\frac{\log \xpack_{\Theta_r}(W)}{
  \xpack_{\Theta_r}(W)}\\ &= O(\log (\xpack_{\Theta_r}(W))), \quad \text{and}\\
1 \leq \frac{k_0\log k_0}{\xcover_{\Theta_r}(W)} &\leq \log
(\xcover_{\Theta_r}(W) + 1)+\frac{\log \xcover_{\Theta_r}(W)}{
  \xcover_{\Theta_r}(W)}\\ &= O(\log (\xcover_{\Theta_r}(W))).
\end{align*}

Therefore the value $k_0\log k_0$ is both an $O(\log
OPT)$-approximation of $\xpack_{\Theta_r}(W)$ and
$\cover_{\Theta_r}(W)$.
The value $k_0$ can be found by performing a binary search in the interval
$\intv{1}{n}$, with $O(\log n)$ calls to Algorithm ${\sf A}$. Hence, our
approximation algorithm runs in $O(n\cdot\log(n)\cdot m)$~steps when
$\x = {\sf v}$ and in $O(m^2\cdot\log(n))$~steps when~$\x = {\sf e}$.
\end{proof}

\section{\texorpdfstring{Erdős–Pósa}{Erdös-Posa} property and tree-partition width}
\label{i9o0i4y}
%
%
%
%

 Using the machinery introduced in
\autoref{sec:red}, we are able to prove a slightly more general
version of \autoref{t:epbound}.

\begin{theorem}\label{t:epbound2}
 Let $t \in \N$. For every ${\sf x}\in\{{\sf v},{\sf e}\}$, the
  following holds: if ${\cal H}$ is a finite collection of connected graphs and $G$ is
  a graph of tree-partition width at most $t$, then ${\sf x}$-\cover$_{\cal
    H}(G)\leq \alpha \cdot {\sf x}$-\pack$_{\cal H}(G)$, where $\alpha$ is a constant which depends only on $t$ and $\mathcal{H}$.
\end{theorem}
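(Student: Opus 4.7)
The plan is to prove the inequality by induction on $|A_{\sf x}(G)|$, with \autoref{l:new_reduce} used at each step either to shrink $G$ (preserving both $\xpack_{\cal H}$ and $\xcover_{\cal H}$) or to produce an $\mathcal{H}$-subdivision of bounded size. Set $h = m(\mathcal{H})$, $c = \funref{f:newred}(h,t)$, and let $\beta$ be an upper bound (a function of $h$ and $t$ only) on $|A_{\sf x}(M)|$ for any $\mathcal{H}$-subdivision $M$ contained in a graph of order at most $c$ and tree-partition width at most $t$; I take $\alpha = \max(c+1,\beta)$. The enabling observation is that any graph $G$ of tree-partition width at most $t$ can be regarded as a $t$-partitioned protrusion of itself with empty boundary, so \autoref{l:new_reduce} is directly applicable whenever $n(G) > c$.

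Assuming $G$ is not $\mathcal{H}$-free (else both sides vanish), I split into cases. If $n(G) \le c$, I pick any $\mathcal{H}$-subdivision $M$ of $G$, which by choice of $\beta$ satisfies $|A_{\sf x}(M)| \le \beta$. If $n(G) > c$, I invoke \autoref{l:new_reduce}: either it outputs an $\mathcal{H}$-subdivision $M$ of $G$ with at most $c$ edges (so $|A_{\sf x}(M)| \le c+1$), or it outputs a graph $G_1$ with $\xpack_{\cal H}(G_1) = \xpack_{\cal H}(G)$, $\xcover_{\cal H}(G_1) = \xcover_{\cal H}(G)$, and strictly smaller $|A_{\sf x}|$, in which case the induction hypothesis applied to $G_1$ closes the step. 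In the remaining cases, where a bounded-size $M$ has been produced, I set $G' = G \setminus A_{\sf x}(M)$, which has tree-partition width at most $t$ (by subgraph-monotonicity of tree-partition width) and strictly smaller $|A_{\sf x}|$; the induction hypothesis for $G'$ combined with the trivial inequalities
\[
\xcover_{\cal H}(G) \le \xcover_{\cal H}(G') + |A_{\sf x}(M)| \qquad\text{and}\qquad \xpack_{\cal H}(G) \ge \xpack_{\cal H}(G')+1
\]
then gives $\xcover_{\cal H}(G) \le \alpha\,\xpack_{\cal H}(G)$, as desired.

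The main point to verify carefully will be that the two reductions underlying \autoref{l:new_reduce}, namely those of \autoref{l:bdg} and \autoref{l:lpth}, each preserve tree-partition width at most $t$ and strictly decrease $|A_{\sf x}|$ for both ${\sf x} = {\sf v}$ and ${\sf x} = {\sf e}$. This is immediate for \autoref{l:bdg}, whose reduced graph is obtained by deleting a whole subtree of the tree-partition. For \autoref{l:lpth} it will require observing that substituting a protrusion ${\bf G}_{G_v}$ by the protrusion ${\bf G}_{G_w}$, where $w$ is a proper descendant of $v$, produces a graph whose inherited tree-partition still has width at most $t$ and which has strictly fewer vertices and edges. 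Once these two points are confirmed, the induction is well-founded and the resulting constant $\alpha$ depends only on $h = m(\mathcal{H})$ and $t$, matching the statement.
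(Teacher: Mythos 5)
Your proposal is correct and follows essentially the same reduce-or-progress strategy as the paper's proof, which repeatedly applies \autoref{l:new_reduce} to $G$ (viewed as a $t$-partitioned protrusion of itself), charging each extracted bounded-size subdivision both to a packing and to the cover. The only caveat is that for ${\sf x}={\sf e}$ the reduce case of \autoref{l:new_reduce} only guarantees a decrease in the number of vertices, not of edges, so your induction should be founded on $n(G)+m(G)$ (or on $(n(G),m(G))$ lexicographically) rather than on $|A_{\sf x}(G)|$; this is a cosmetic adjustment that does not affect the argument.
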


\begin{proof}
Let $k = \xpack_{\cal H}(G)$. This proof is similar to that of
\autoref{t:appxep} (progress or reduce).
We start with $P :=
\emptyset$ and $G_0 := G$ and repeatedly apply \autoref{l:new_reduce}
to $G_i$, which provides the existence of either a smaller graph $G_{i+1}$
with the same parameters $\xcover_{\cal H}$ and $\xpack_{\cal H}$ (reduce), or
that of an ${\cal H}$-subdivision $M$ with at most $\funref{f:newred}(h, t)$
edges, in which case we set $G_{i+1}:=G_i\setminus A_{\x}(M)$ and $P := P \cup
\{M\}$ (progress) and continue. Notice that while the algorithm of
\autoref{l:new_reduce} requires a $t$-partitioned protrusion, it is
enough here to know that such a partitioned protrusion exists as we do
not actually run the algorithm. This is the case because the
considered graphs are subgraph of $G$, which has tree-partition width
at most $t$, so we can consider a $t$-partitioned protrusion
containing the whole graph.
We stop this process when the current graph has at most $\funref{f:newred}(h, t)$
vertices. Let $G_j$ be this graph. In the end, $P$ contains at most $k$ ${\cal H}$-subdivisions.
Therefore, $C := \cupall A_{\x}(P) \cup A_{\x}(G_j)$ is an
$\x$-$\mathcal{H}$-covering of $G$ of size $(k+1) \cdot
\funref{f:newred}(h, t)$, as required.
\end{proof}

\autoref{t:epbound} can be obtained from \autoref{t:epbound2} by
considering packing and covering of the graphs in $\bigcup_{H \in
\mathcal{H}} \ex(H)$, where $\mathcal{H}$ is the family of graphs mentioned
in the statement of \autoref{t:epbound}.


We define  $\Theta_{r,r'}=\ex(\theta_{r,r'})$. 
 The rest of this section is devoted to the proof of~\autoref{t:epdoublepk}.
Prior to this, we need to
introduce a result of Ding {\em et al.}~\cite{Ding199645}.

Tree-partition width has been studied in~\cite{Seese, Halin1991203,
  Ding199645}.
In particular, the authors of~\cite{Ding199645} characterized the
classes of graphs of bounded tree-partition width in terms of excluded
topological minors. The statement of this result requires additional
definitions.

\paragraph{Walls, fans, paths, and stars.} The $n$-wall is the graph with vertex set $\intv{1}{n}²$ and whose
edge set~is:
\begin{align*}
  &\left \{ \{(i,j), (i, j+1)\},\ 1\leq i,j \leq n \right \}\\
  \cup &\left \{ \{(2i-1,2j+1), (2i, 2j+1)\},\ 1<2i\leq n\ \text{and}\
         1\leq 2j +1\leq n\right \}\\
  \cup &\left \{ \{(2i,2j), (2i+1, 2j)\},\ 1\leq 2i < n\ \text{and}\
         1\leq 2j \leq n \right \}.
\end{align*}
The 7-wall is depicted in~\autoref{fig:dingraph}.
The \emph{$n$-fan} is the graph obtained by adding a dominating vertex
to a path on $n$~vertices. A collection of paths is said to be
\emph{independent} if two paths of the collection never share interior
vertices. The \emph{$n$-star} is the graph obtained by replacing every edge
of $K_{1,n}$ with $n$ independent paths of two edges. The \emph{$n$-path} is
the graph obtained by replacing every edge of an $n$-edge path with $n$
independent paths of two edges. Examples of these graphs are depicted
in~\autoref{fig:dingraph}.
The \emph{wall number} (resp.\ \emph{fan number}, \emph{star number}, and
\emph{path number}) of a graph $G$ is defined as the largest integer
$k$ such that $G$ contains a minor model of a $k$-wall (resp. of a $k$-fan,
of a $k$-star, of a $k$-path), or infinity is no such integer exists.
Let~$\gamma(G)$ denote the maximum of the wall number, fan number, star
number, and path number of a graph~$G$.

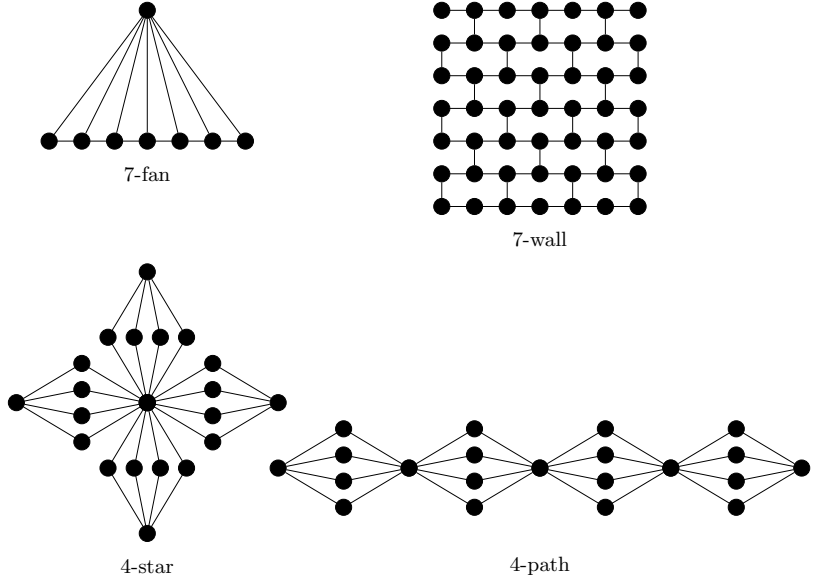
\begin{figure}
  \centering
  \scalebox{.87}{
  \begin{tikzpicture}[every node/.style = black node]
    \begin{scope}[xscale = 0.5]
      \draw (0,0) node (n) {}
      \foreach \i in {-3, ..., 3}{-- (\i, -2) node (n\i) {}};
      \foreach \i in {-2, ..., 3}{\draw (n) -- (n\i);}
      \node[normal] at (0,-2.5) {\small 7-fan};
    \end{scope}
    \begin{scope}[yshift = -6cm] 
      \draw (0,0) node (n0) {} (2,0) node (n2) {};
      \foreach \i in {-0.6,-0.2, 0.2, 0.6}
      {\draw (n0) -- (1, \i) node {} -- (n2);}
      \begin{scope}[rotate = 90]
        \draw (0,0) node (n0) {} (2,0) node (n2) {};
        \foreach \i in {-0.6,-0.2, 0.2, 0.6}
        {\draw (n0) -- (1, \i) node {} -- (n2);}
      \end{scope}
      \begin{scope}[rotate = 180]
        \draw (0,0) node (n0) {} (2,0) node (n2) {};
        \foreach \i in {-0.6,-0.2, 0.2, 0.6}
        {\draw (n0) -- (1, \i) node {} -- (n2);}
      \end{scope}
      \begin{scope}[rotate = -90]
        \draw (0,0) node (n0) {} (2,0) node (n2) {};
      \foreach \i in {-0.6,-0.2, 0.2, 0.6}
      {\draw (n0) -- (1, \i) node {} -- (n2);}
      \end{scope}
      \node[normal] at (0,-2.5) {\small 4-star};
    \end{scope}
      \begin{scope}[xshift = 4.5cm, yshift = -3cm, scale = 0.5] 
        \foreach \i in {0,...,6}{ 
          \draw (0, \i) node (N\i N0) {}
          \foreach \j in {1,...,6}{
            -- (\j, \i) node (N\i N\j) {}
          };
        }
        \foreach \i in {0,2,...,4}{
          \foreach \j in {0,2,...,6}{
            \pgfmathparse{int(\i + 1)}
            \draw (N\i N\j) -- (N\pgfmathresult N\j);
          }}
        \foreach \i in {1,3,...,5}{
          \foreach \j in {1,3,...,5}{
            \pgfmathparse{int(\i + 1)}
            \draw (N\i N\j) -- (N\pgfmathresult N\j);
          }}
        \node[normal] at (3,-1) {\small 7-wall};
      \end{scope}
      \begin{scope}[xshift = 2cm, yshift = -7cm]
        \draw (0,0) node (n0) {} (2,0) node (n2) {};
      \foreach \i in {-0.6,-0.2, 0.2, 0.6}
      {\draw (n0) -- (1, \i) node {} -- (n2);}
      \begin{scope}[xshift = 2cm]
        \draw (0,0) node (n0) {} (2,0) node (n2) {};
      \foreach \i in {-0.6,-0.2, 0.2, 0.6}
      {\draw (n0) -- (1, \i) node {} -- (n2);}
      \end{scope}
      \begin{scope}[xshift = 4cm]
        \draw (0,0) node (n0) {} (2,0) node (n2) {};
      \foreach \i in {-0.6,-0.2, 0.2, 0.6}
      {\draw (n0) -- (1, \i) node {} -- (n2);}
      \end{scope}
      \begin{scope}[xshift = 6cm]
        \draw (0,0) node (n0) {} (2,0) node (n2) {};
      \foreach \i in {-0.6,-0.2, 0.2, 0.6}
      {\draw (n0) -- (1, \i) node {} -- (n2);}
      \end{scope}
      \node[normal] at (4,-1.5) {\small 4-path};
    \end{scope}
  \end{tikzpicture}
  }
  \caption{Unavoidable patterns of graphs of large tree-partition width.}
  \label{fig:dingraph}
\end{figure}
We need the following result.
\begin{proposition}[\!\!\!\cite{Ding199645}]\label{p:ding}
There is a function $\newfun{f:tied} \colon \N \to \N$ such that every
graph $G$ satisfies $\tpw(G) \leq \funref{f:tied}(\gamma(G))$.
\end{proposition}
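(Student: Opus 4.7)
The plan is to prove the contrapositive: if $\tpw(G)$ exceeds a suitable function of $k$, then $\gamma(G) > k$, i.e., $G$ contains as a minor at least one of a $k$-wall, $k$-fan, $k$-star, or $k$-path. The argument splits according to whether $\tw(G)$ is large or small, and within the ``small $\tw$'' case, according to whether the maximum degree is large or small.

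\medskip

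\textbf{The wall alternative.} First, I would invoke the Robertson--Seymour grid theorem, which guarantees a function $\phi \colon \N\to\N$ such that every graph of tree-width at least $\phi(k)$ contains a $k\times k$ grid, and hence a $k$-wall, as a minor. So if $\tw(G) \geq \phi(k)$, the $k$-wall is obtained directly. We may therefore assume $\tw(G) < \phi(k)$, which in particular provides a tree-decomposition of $G$ of bounded width.

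\medskip

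\textbf{Forcing a high-degree vertex.} The next step is the classical companion fact to the proposition: a graph of bounded tree-width and bounded maximum degree has bounded tree-partition width (with a bound depending on both). The proof of this companion fact turns a bounded-width tree-decomposition into a tree-partition by carefully splitting bags and redistributing edges across the decomposition tree, the blow-up being controlled because every vertex is responsible for at most $\deg(v)$ cross-bag edges. Contrapositively, since $\tpw(G)$ is assumed large but $\tw(G)$ is bounded, there must exist a vertex $v$ of very large degree in $G$.

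\medskip

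\textbf{Extracting a star, fan or path.} Fix such a high-degree vertex $v$ and a bag $X_t$ of the bounded-width tree-decomposition that contains it. Applying a Ramsey-type pigeonhole to the distribution of the neighbours of $v$ among the subtrees hanging off $X_t$, and among the (boundedly many) vertices of $X_t \setminus\{v\}$ that separate them from $v$, one of three cases must occur: (a) many neighbours of $v$ lie in a common subtree and funnel back to a single vertex of $X_t$, yielding many internally disjoint length-two paths between $v$ and that vertex and hence a $k$-star minor; (b) many neighbours of $v$ lie along a long path of bags that is dominated by $v$, producing a $k$-fan minor; or (c) the high-multiplicity phenomenon propagates along a long chain of bags each connected to the next by many disjoint paths, giving a $k$-path minor.

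\medskip

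\textbf{Main obstacle.} The delicate part is the last paragraph, where one must simultaneously track the degree of $v$, the shape of the tree-decomposition, and the pattern of connections among the neighbours of $v$, and show that at least one of the three obstructions is realised. This is essentially the content of Ding and Oporowski's structural theorem, and the bookkeeping is what forces the fast-growing constants hidden in $\funref{f:tied}$: one must choose $\funref{f:tied}(k)$ large enough that each pigeonhole in the Ramsey-style argument still leaves at least $k$ copies of the surviving substructure, and large enough that the companion fact (bounded $\tw$ plus bounded $\dg$ implies bounded $\tpw$) actually produces a vertex of degree much larger than~$k$.
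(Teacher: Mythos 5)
You should first note that the paper does not prove this proposition at all: it is imported verbatim from Ding and Oporowski \cite{Ding199645}, so your sketch can only be measured against that cited theorem, not against an argument in the paper itself.

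Your first two steps are sound: the grid theorem disposes of the case $\tw(G)\geq \phi(k)$, and the companion fact that bounded tree-width together with bounded maximum degree forces bounded tree-partition width is a genuine theorem, so large $\tpw$ combined with small $\tw$ does force a vertex of huge degree. The gap is in your third step. From that point on you use only two pieces of data --- a bounded-width tree-decomposition and one vertex $v$ of huge degree --- and claim a trichotomy yielding a $k$-star, $k$-fan or $k$-path. This cannot work, because exactly that data is also present in graphs of \emph{small} tree-partition width: $K_{1,n}$ has tree-width $1$, a vertex of degree $n$, tree-partition width $1$, and contains none of the four obstructions; there the neighbours of $v$ sit in $n$ different subtrees with no further connections among them, a case your trichotomy (a)--(c) simply omits, and it is precisely the case in which no obstruction exists. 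Moreover, case (a) is incorrect on its own terms: many internally disjoint length-two paths between $v$ and a single vertex $u$ form a subdivided multi-edge, which has tree-partition width at most $2$ and contains no $k$-star minor for $k\geq 3$ (such a star has $k+1$ vertices of degree at least $3$, whereas in a subdivided multi-edge every connected branch set avoiding both $v$ and $u$ is a single vertex of degree $2$, so at most two vertices of any minor can have degree exceeding $2$); the Ding--Oporowski star needs $k$ \emph{distinct} leaves, each joined to the centre by $k$ disjoint paths. To repair the argument you must keep exploiting the largeness of $\tpw$ \emph{after} the high-degree vertex has been located --- for instance because a single high-degree vertex can always be absorbed into a tree-partition at bounded cost, so the extraction has to be iterated and interleaved with the structure of the decomposition --- and this global bookkeeping is precisely the hard content of Ding and Oporowski's proof, which your sketch defers to a ``Ramsey-type pigeonhole'' without supplying it.
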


In other words, for every integer $k$, every graph of
large enough tree-partition width contains a minor model of one of the
following graphs: the $k$-wall, the $k$-fan, the $k$-path, or the $k$-star.

 Notice that for every $r,r'
\in \N$, $r'\leq r$, the graph $\theta_{r,r'}$ is a minor
of the following graphs: the $r$-path, the $r$-star, the
$(r+r'+1)$-fan, and the $r$-wall (for $r\geq 6$). Hence, every graph of
large enough tree-partition width contains a 
$\Theta_{r,r'}$-subdivision. This can easily be generalized to
edge-disjoint packings, as follows.

\begin{lemma}
  For every $r,r' \in \N,$ $r'\leq r$, and every
  $k \in \N_{\geq1}$, every graph $G$ satisfying $\gamma(G) \geq
   k(r+r'+2)-1$ contains an ${\sf e}$-${\Theta_{r,r'}}$-packing of
   size~$k$.
\end{lemma}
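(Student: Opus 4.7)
The plan is a case analysis on which of the four quantities composing $\gamma(G)$ attains the value at least $N := k(r+r'+2)-1$---the wall, fan, star, or path number of~$G$. By hypothesis, $G$ contains a minor model~$\phi$ of an $N$-sized pattern~$P$ of the corresponding type, and the goal is to exhibit, inside~$P$, a collection of $k$ pairwise edge-disjoint subgraphs each containing $\theta_{r,r'}$ as a minor, and then to lift this collection to $G$ via~$\phi$. Since a subgraph that contains $\theta_{r,r'}$ as a minor also contains a member of $\Theta_{r,r'}=\ex(\theta_{r,r'})$ as a topological minor, this yields an ${\sf e}$-${\Theta_{r,r'}}$-packing of size $k$ in~$G$.

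The four constructions inside~$P$ proceed as follows. For the $N$-path, I would use the non-overlapping triples of consecutive main vertices $(p_{3j-3}, p_{3j-2}, p_{3j-1})$ for $j=1,\dots,k$ (which fit since $N \geq 3k-1$); each triple together with the middle vertices of the two segments it delimits yields a sub-pattern containing $\theta_{N,N}\supseteq \theta_{r,r'}$ with $p_{3j-2}$ as the shared branch vertex. The $N$-star case is analogous, pairing up leaves into $k$ disjoint pairs and taking each pair together with the centre and the associated middle vertices. For the $N$-fan, I would partition the spine into $k$ consecutive subpaths of $r+r'+1$ vertices (feasible since $N\geq k(r+r'+1)$); each subpath together with the apex forms an $(r+r'+1)$-fan containing $\theta_{r,r'}$ as a minor, via the contraction outlined in the paragraph preceding the lemma statement. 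For the $N$-wall, I would decompose the wall into $k$ edge-disjoint horizontal sub-walls of side at least $\max(r,6)$, each containing $\theta_{r,r'}$ as a minor (for $r\geq 6$ directly, and for $r<6$ via $\theta_{r,r'}\subseteq \theta_{6,6}$ combined with a larger sub-wall).

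The lift from~$P$ to~$G$ is immediate on the level of bridge edges: each edge of the pattern is represented by at least one distinct edge of~$G$ between the relevant branch sets, so edge-disjointness of the sub-patterns inside~$P$ transfers directly to disjointness of these bridge edges in~$G$. The main obstacle lies in the internal connectivity of those branch sets of~$\phi$ that are shared across several sub-patterns---most notably the centre's branch set $\phi(c)$ in the $N$-star case and the apex's branch set in the $N$-fan case, both of which are touched by all $k$ sub-patterns. I would handle this either by first refining the minor embedding~$\phi$ so that each such shared branch set reduces to a single vertex (whereby no internal edges need be included in any of the $k$ lifted subgraphs), or alternatively by exploiting the connectivity of the shared branch set to extract $k$ edge-disjoint subtrees, each spanning the at most $O(r)$ attachment points used by its corresponding packing; this connectivity is available because each sub-pattern uses disjoint bridge edges incident to the shared branch set, so the collection of all attachment points across the $k$ packings already witnesses sufficient edge-capacity inside it.
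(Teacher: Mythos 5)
The paper offers no written proof of this lemma (it is introduced only with the remark that the one-copy statement ``can easily be generalized''), so your case analysis over the four patterns is presumably the intended route, and your constructions inside the $N$-path and $N$-wall are unproblematic: there the $k$ sub-patterns can be taken vertex-disjoint, so they lift to vertex-disjoint subgraphs of $G$. The gap is in the star and fan cases, precisely at the point you flag. Every $\theta_r$-minor of the $N$-star with $r\geq 3$ must place the centre $c$ in one of its branch sets (the star minus $c$ falls apart into pieces between which no two disjoint connected sets see more than one edge), and likewise for the apex of the fan; hence all $k$ lifted subgraphs necessarily meet $\phi(c)$, and the question of distributing the edges of $G[\phi(c)]$ among them cannot be sidestepped. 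Your first fix --- refining $\phi$ so that $\phi(c)$ is a singleton --- is not available: $c$ has degree $N^2$ in the $N$-star and degree $N$ in the $N$-fan, while $G$ may have maximum degree $3$ (a large wall already has large fan number), so no minor model can shrink $\phi(c)$ to one vertex. Your second fix rests on an invalid inference: edge-disjointness of the bridge edges says nothing about the internal edges of $G[\phi(c)]$ needed to connect each group of attachment points. Concretely, if $G[\phi(c)]$ is a path $v_1\cdots v_m$ and group $1$ attaches at $\{v_1,v_m\}$ while group $2$ attaches at $\{v_2,v_{m-1}\}$, then every connected subgraph containing the first pair contains all edges of every connected subgraph containing the second pair; so for a fixed pairing of leaves and spokes the $k$ edge-disjoint connecting subtrees need not exist.

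A correct argument must therefore either (i) choose the grouping adaptively: fix a spanning tree of $G[\phi(c)]$ and use the surplus of available attachment edges (there are $N^2$ of them in the star case, of which only $k(r+r')$ are needed) to extract $k$ edge-disjoint Steiner trees whose attachment points moreover concentrate on two leaves each --- a genuine combinatorial step, not a formality --- or (ii) observe that the Ding--Oporowski characterization actually yields the four patterns as \emph{topological} minors, so one may assume $\phi(c)$ is a single vertex of $G$, whereupon edge-disjointness of the sub-patterns lifts verbatim. Route (ii) is the painless repair. A smaller issue: contracting two consecutive rows of an $r$-wall yields only $\theta_{\lceil r/2\rceil}$, so the claim that the $r$-wall contains $\theta_{r,r'}$ for $r\geq 6$ also requires a less naive construction (e.g.\ two interleaved combs inside adjacent rows); this affects the paper's own preamble to the lemma as much as your write-up.
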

Using \autoref{p:ding}, we get the following corollary.
\begin{corollary}\label{c:boundtpw}
  For every $r,r' \in \N,$ $r'\leq r$, and every
  $k \in \N_{\geq1}$, every graph $G$ satisfying $\tpw(G) \geq
   \funref{f:tied}(k(r+r'+2)-1)$ contains an ${\sf e}$-${\Theta_{r,r'}}$-packing of
   size~$k$.
\end{corollary}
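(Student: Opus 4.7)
My plan is to derive \autoref{c:boundtpw} as an immediate consequence of the unnamed lemma just above it together with Proposition~\ref{p:ding}. The lemma already guarantees an $\sf e$-$\Theta_{r,r'}$-packing of size $k$ whenever $\gamma(G)\ge k(r+r'+2)-1$, so the only thing I need to do is translate the hypothesis on $\tpw(G)$ into a lower bound on~$\gamma(G)$.

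For that translation, I would invoke Proposition~\ref{p:ding} in the form $\tpw(G)\le\funref{f:tied}(\gamma(G))$. Combining this with the hypothesis $\tpw(G)\ge\funref{f:tied}(k(r+r'+2)-1)$ gives $\funref{f:tied}(\gamma(G))\ge\funref{f:tied}(k(r+r'+2)-1)$, whence $\gamma(G)\ge k(r+r'+2)-1$ provided $\funref{f:tied}$ is monotonically non-decreasing. I would handle this monotonicity point at the outset by replacing $\funref{f:tied}$ with its monotone envelope $n\mapsto\max_{i\le n}\funref{f:tied}(i)$; since Proposition~\ref{p:ding} only posits the existence of \emph{some} witness function, this substitution is free and preserves the conclusion of that proposition. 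Plugging the resulting bound on $\gamma(G)$ back into the preceding lemma then produces the packing claimed in~\autoref{c:boundtpw}.

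I do not anticipate any substantive obstacle here: the argument is a two-line formal deduction chaining together the quantitative Ding--Oporowski-type characterization with the edge-packing lemma preceding the corollary. The only delicate point is the monotonicity of $\funref{f:tied}$, and it is dispatched by the standard monotone-envelope trick noted above.
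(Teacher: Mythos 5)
Your derivation is the same one the paper intends: the corollary is stated as an immediate consequence of the preceding lemma via \autoref{p:ding}, exactly the two-step chaining you describe, and the paper offers no further proof. The one point to tighten is your monotonicity step: from $\funref{f:tied}(\gamma(G)) \geq \tpw(G) \geq \funref{f:tied}(k(r+r'+2)-1)$ and $\funref{f:tied}$ merely \emph{non-decreasing}, you cannot conclude $\gamma(G) \geq k(r+r'+2)-1$ (a constant function satisfies the first inequality for every $G$). You need a \emph{strictly increasing} majorant, which is just as free as the monotone envelope: replace $\funref{f:tied}$ by $n \mapsto n + \max_{i \leq n}\funref{f:tied}(i)$, which still upper-bounds $\tpw$ in terms of $\gamma$ and now lets you invert the inequality (equivalently, argue the contrapositive: $\gamma(G) \leq N-1$ forces $\tpw(G) \leq \hat{f}(N-1) < \hat{f}(N)$). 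With that substitution the argument is complete and coincides with the paper's.
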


We are now able to give the proof of~\autoref{t:epdoublepk}.
As in the proof of \autoref{madin} at the beginning of
\autoref{s:appx}, we can consider $\Theta_{r,r'}$-subdivisions instead of
sugraphs containing $\theta_{r,r'}$ as a minor.
\begin{proof}[Proof of~\autoref{t:epdoublepk}]
  According to~\autoref{c:boundtpw}, for every $k \in \N$, there
  is a number $t_k$ such that every graph $G$ with
  $\ecover_{\Theta_{r,r'}}(G) =  k$ satisfies $\tpw(G) \leq
  t_k$. Indeed, such a graph does not contain a packing of $k+1$
  ${\Theta_{r,r'}}$-subdivisions. Then by~\autoref{t:epbound} the value $\ecover_{\Theta_{r,r'}}(G)$
  is bounded above by $\funref{f:newred}(h,t_k) \cdot \epack_{\Theta_{r,r'}}(G)$, and this concludes the~proof.
\end{proof}

\section{Concluding remarks}
\label{sec:concl}

The main algorithmic contribution of this paper is a $\log({\sf
  OPT})$-approximation algorithm for the parameters $\vpack_{\theta_{r}}$,
$\vcover_{\theta_{r}}$, $\epack_{\theta_{r}}$, and
$\ecover_{\theta_{r}}$, for every positive integer~$r$. This improves
the results of~\cite{JoretPSST14hitt} in the case of vertex packings and coverings and
is the first approximation algorithm for the parameters
$\epack_{\theta_{r}}$ and $\ecover_{\theta_{r}}$ for general~$r$.
Our proof uses a reduction technique of independent interest, which is not specific to the graph
$\theta_r$ and can be used for any other (classes of) graphs.

On the combinatorial side, we optimally improved the gap of the
edge-Erdős-Pósa property of minor models of $\theta_r$ for
every~$r$. Also, we were able to show that every class of graphs has
the (edge and vertex) Erdős-Pósa property in graphs of bounded
tree-partition width, with linear gap. An other outcome of this work
is that minor models of $\theta_{r,r'}$ have the
edge-Erdős-Pósa. Recall that prior to this work, the only graphs for
which this was known were~$\theta_r$'s.

As mentioned in~\cite{2013arXiv1311.1108R}, the planarity of a graph
$H$ is a necessary condition for the minor models of $H$ to have the 
edge-Erdős-Pósa property. However, little is known on which planar graphs
have this property and with which gap. This is the first
direction of research that we want to highlight here. Also, the
question of an approximation algorithm can be asked for packing and
covering the minor models of different graphs.
It was proved in~\cite{ChekuriC13larg} that the gap of the vertex-Erdős-Pósa property
of minor models of every planar graph is $O(k
\mathop{\mathrm{polylog}} k)$. It would be interesting to check if these
results can be used to derive a $\mathop{\mathrm{polylog}}({\sf
  OPT})$-approximation for vertex packing and covering minor models of
any planar graph. 

Notice that all  our results are strongly exploiting Lemma~\ref{l:new_reduce}
that holds for every finite collection ${\cal H}$
of connected graphs. Actually, what 
is missing in order to have 
an overall generalization of all of our results, is an 
extension of Proposition~\ref{rk6ter} where 
$\Theta_{r}$ is replaced by any finite collection ${\cal H}$
of connected planar graphs. 
This is an an interesting combinatorial 
problem even for particular instantiations of ${\cal H}$.

\bibliographystyle{abbrv}

\end{document}